\definecolor{DarkRed}{RGB}{150,0,0}
\definecolor{DarkGreen}{RGB}{0,150,0}
\definecolor{DarkBlue}{RGB}{0,0,150}
\definecolor{purple}{RGB}{200,0,200}
\newtheorem{definition}{Definition}[section]
\def\paritykOV{\mathsf{parity}\text{-}k\text{-}OV}
\def\paritykSUM{\mathsf{parity}\text{-}k\text{-}SUM}
\def\paritykXOR{\mathsf{parity}\text{-}k\text{-}XOR}
 \gdef\xxxmark{%
   \expandafter\ifx\csname @mpargs\endcsname\relax 
     \expandafter\ifx\csname @captype\endcsname\relax 
     \marginpar{\textcolor{red}{\textbf{xxx}}}
          \else
       \textbf{\textcolor{red}{xxx}} 
    \fi
   \else
     \textbf{\textcolor{red}{xxx}} 
   \fi}
 \gdef\xxx{\@ifnextchar[\xxx@lab\xxx@nolab}
 \long\gdef\xxx@lab[#1]#2{\textbf{[\xxxmark \textcolor{{blue}#2} ---{\sc {\color{blue}#1}}]}}
 \long\gdef\xxx@nolab#1{\textbf{[\xxxmark \textcolor{blue}{#1}]}}
\newcommand{\tO}{\tilde{O}}
\newcommand{\eps}{\epsilon}
\newtheorem{assumption}{Assumption}
\def \Z {\mathbb Z}
\def \R {\mathbb R}
\newcommand{\kOV}{$k$\mbox{-}OV}
\newcommand{\klcs}[1][]{\ifthenelse{\equal{#1}{}}{$k$-LCS}{${#1}$-LCS}}
\newcommand{\kwlcs}[1][]{\ifthenelse{\equal{#1}{}}{$k$-WLCS}{${#1}$-WLCS}}
\newcommand{\kNLstC}[1][]{\ifthenelse{\equal{#1}{}}{$k$-NLstC}{${#1}$-NLstC}}
\newcommand{\kELstC}[1][]{\ifthenelse{\equal{#1}{}}{$k$-ELstC}{${#1}$-ELstC}}
\newcommand{\czkc}[1][]{\ifthenelse{\equal{#1}{}}{FZ$k$C}{FZ${#1}$C}}
\newcommand{\czkch}[1][]{\ifthenelse{\equal{#1}{}}{FZ$k$CH}{FZ${#1}$CH}}
\newcommand{\cfkc}[1][]{\ifthenelse{\equal{#1}{}}{F$\mathfrak{f}k$C}{F$\mathfrak{f}{#1}$C}}
\newcommand{\cfkch}[1][]{\ifthenelse{\equal{#1}{}}{F$\mathfrak{f}k$CH}{F$\mathfrak{f}{#1}$CH}}
\newcommand{\ckov}[1][]{\ifthenelse{\equal{#1}{}}{F$k$-OV}{F${#1}$-OV}}
\newcommand{\ckovh}[1][]{\ifthenelse{\equal{#1}{}}{F$k$-OVH}{F${#1}$-OVH}}
\newcommand{\cksum}[1][]{\ifthenelse{\equal{#1}{}}{F$k$-SUM}{F${#1}$-SUM}}
\newcommand{\cksumh}[1][]{\ifthenelse{\equal{#1}{}}{F$k$-SUMH}{F${#1}$-SUMH}}
\newcommand{\fzkc}[1][]{\ifthenelse{\equal{#1}{}}{FZ$k$C}{FZ${#1}$C}}
\newcommand{\ckxor}[1][]{\ifthenelse{\equal{#1}{}}{F$k$-XOR}{F${#1}$-XOR}}
\newcommand{\ckxorh}[1][]{\ifthenelse{\equal{#1}{}}{F$k$-XORH}{F${#1}$-XORH}}
\newcommand{\ckfunc}[1][]{\ifthenelse{\equal{#1}{}}{F$k$-$\mathfrak{f}$}{F${#1}$-$\mathfrak{f}$}}
\newcommand{\ckfunch}[1][]{\ifthenelse{\equal{#1}{}}{F$k$-$\mathfrak{f}$H}{F${#1}$-$\mathfrak{f}$H}}
\DeclarePairedDelimiter{\ceil}{\lceil}{\rceil}
\DeclarePairedDelimiter{\floor}{\lfloor}{\rfloor}
\newcommand{\goodPoly}{good low-degree polynomial}
\newcommand{\kxor}[1][]{\ifthenelse{\equal{#1}{}}{$k$-XOR}{${#1}$-XOR}}
\newcommand{\ksum}[1][]{\ifthenelse{\equal{#1}{}}{$k$-SUM}{${#1}$-SUM}}
\newcommand{\vksum}[1][]{\ifthenelse{\equal{#1}{}}{V$k$-SUM}{V${#1}$-SUM}}
\newcommand{\ukov}[1][]{\ifthenelse{\equal{#1}{}}{U$k$-OV}{U${#1}$-OV}}
\newcommand{\nukov}[1][]{\ifthenelse{\equal{#1}{}}{NU$k$-OV}{NU${#1}$-OV}}
\newcommand{\ukxor}[1][]{\ifthenelse{\equal{#1}{}}{U$k$-XOR}{U${#1}$-XOR}}
\newcommand{\GoodDPolys}[1][]{\ifthenelse{\equal{#1}{}}{Good $d$-Degree Polynomials}{Good ${#1}$-Degree Polynomials}}
\newcommand{\goodDPoly}[1][]{\ifthenelse{\equal{#1}{}}{good $d$-degree polynomial}{good ${#1}$-degree polynomial}}
\newcommand{\OKPolys}[1][]{\ifthenelse{\equal{#1}{}}{Fine $d$-Degree Polynomials}{Fine ${#1}$-Degree Polynomials}}
\newcommand{\okPoly}[1][]{\ifthenelse{\equal{#1}{}}{fine $d$-degree polynomial}{fine ${#1}$-degree polynomial}}
\newcommand{\bikmath}{\binom{k}{2}}
\newcommand\relatedversion{}
\begin{document}

\title{\Large Average-Case Hardness of Parity Problems: \\ Orthogonal Vectors, k-SUM and More\relatedversion}

\author{Mina Dalirrooyfard\thanks{Morgan Stanley, \texttt{minad@mit.edu}}, Andrea Lincoln\thanks{Boston University, \texttt{andrea2@bu.edu}}, Barna Saha\thanks{University of California San Diego, \texttt{bsaha@ucsd.edu}. Supported by NSF grants 1652303, 1909046, 2112533, and HDR TRIPODS Phase II grant 2217058. }, Virginia Vassilevska Williams \thanks{Massachusetts Institute of Technology, \texttt{virgi@mit.edu}. Supported by NSF Grant CCF-2330048, BSF Grant 2020356 and a Simons Investigator Award.}}
\date{}


\maketitle



\begin{abstract}
\small\baselineskip=9pt

This work establishes conditional lower bounds for average-case {\em parity}-counting versions of the problems $k$-XOR, $k$-SUM, and $k$-OV. 
The main contribution is a set of self-reductions for the problems, providing the first specific distributions, for which:

\begin{itemize}
    \item $\paritykOV$ is $n^{\Omega(\sqrt{k})}$ average-case hard, under the $k$-OV hypothesis (and hence under SETH),
    \item $\paritykSUM$  is $n^{\Omega(\sqrt{k})}$  average-case hard, under the $k$-SUM hypothesis, and
    \item $\paritykXOR$  is $n^{\Omega(\sqrt{k})}$  average-case hard, under the $k$-XOR hypothesis.
\end{itemize}

Under the very believable hypothesis that at least one of the $k$-OV, $k$-SUM, $k$-XOR or $k$-Clique hypotheses is true, we show that parity-$k$-XOR, parity-$k$-SUM, and parity-$k$-OV all require at least $n^{\Omega(k^{1/3})}$ (and sometimes even more) time on average (for specific distributions).

To achieve these results, we present a novel and improved framework for worst-case to average-case fine-grained reductions, building on the work of Dalirooyfard, Lincoln, and Vassilevska Williams, FOCS 2020.

\end{abstract}



\section{Introduction}
\label{sec:intro}

During the last few years there has been significant progress on the theoretical foundations of average-case fine-grained complexity. 
This area utilizes fine-grained worst-case to average-case reductions to provide lower bounds conditioned on
popular hardness 
hypotheses from fine-grained complexity,  for key computational problems over natural input distributions
\cite{BallRSV18,Goldreich20,UniformCliqueABB,factoredProblems,Goldreich20,arxivHS20}. The major progress has so far been mostly restricted to subgraph counting problems in graphs \cite{BallRSV18,Goldreich20}, satisfiability \cite{CHV23} and the so-called {\em factored} problems \cite{factoredProblems}. However, no average-case hardness has been proven so far for the core problems of fine-grained complexity like $k$-SUM, $k$-OV, and $k$-XOR, even for their counting versions. 

A first step was made by \cite{factoredProblems} who showed average-case hardness for certain ``factored'' counting versions of $k$-SUM, $k$-OV and $k$-XOR. However, these versions
 are much more expressive than their non-factored counterparts, and less natural. Ideally, we would like to get average-case hardness for the traditional detection versions of $k$-SUM, $k$-OV, and $k$-XOR, as these problems are central in fine-grained complexity. Yet, even for the counting versions of these problems it is completely unclear how to obtain hardness from the hardness of their ``factored'' cousins.

In this paper, we take another step forward. First, we present new average-case lower bounds for the {\em counting} versions of $k$-SUM, $k$-OV, and $k$-XOR under any of the following traditional fine-grained complexity hypotheses: the worst-case random ETH, the $k$-XOR hypothesis, the $k$-SUM hypothesis, or the $k$-Clique hypothesis. Second, we show that our lower bounds also hold for the {\em parity} versions of the problems where we only need to return the parity of the count (i.e., returning a single bit). The hardness of the parity versions can naturally be seen as an intermediate step towards the ultimate goal of resolving the average case complexity of the detection problems. 

Previously, such average-case hardness for the parity version has been shown to hold for the $k$-clique problem \cite{UniformCliqueABB}. However, it is not clear how to apply the techniques in \cite{UniformCliqueABB}  to the parity versions of $k$-SUM, $k$-OV, and $k$-XOR.

In this paper, we build upon the framework of Dalirrooyfard, Lincoln, and Vassilevska Williams \cite{factoredProblems} who introduced factored versions of problems and proved average-case hardness results for them. We show how to reduce factored problems to their non-factored counterparts. By doing so, we can not only handle the counting versions of the more natural problems, but also their parity versions. In particular, we present easy-to-sample distributions on which the average-case parity versions of $k$-OV ($\paritykOV$), $k$-SUM ($\paritykSUM$),  and $k$-XOR ($\paritykXOR$) problems are hard. 

The definitions of the problems are as follows. The $\paritykSUM$ problem takes $n$ numbers in the range $[-n^k, n^k]$ as input and asks for the parity of the number of $k$-tuples of input numbers that sum to zero. The $\paritykOV$ problem takes $n$ Boolean vectors as input and asks for the parity of the number of $k$-tuples of vectors whose bitwise product is the all-zeros vector\footnote{Equivalently, this is the number of $k$-tuples with generalized inner product $0$.}. The $\paritykXOR$ problem takes $n$ vectors as input and asks for the parity of the number of $k$-tuples of vectors whose bitwise XOR is the all-zeros vector. The best known algorithms for $\paritykSUM$ on $d$ bit numbers, $\paritykOV$ on $d$ bit vectors, and $\paritykXOR$ on $d$ bit vectors in the worst-case run in $dn^{\Theta(k)}$ time. 

The hardness for all the above problems are shown from the very believable hypothesis that at least one of the $k$-XOR, $k$-SUM, $k$-Clique, $k$-OV, or random ETH hypotheses are true. Moreover, we provide random {\em self-reductions}, i.e., we derive average-case lower bounds for each problem from the corresponding worst-case hypothesis. 
Self reductions themselves are very interesting as they are often the first step towards showing the tight average-case hardness of the underlying problems over the distributions on which they are conjectured to be hard. 


Further, extending the average-case hardness of the parity counting version of $k$-clique considered by Boix{-}Adser{\`{a}}, Brennan and Bresler \cite{UniformCliqueABB}, we obtain fine-grained average-case hardness results for counting the number of subgraphs $H$ in a graph $G$, modulo $2$, for any $k$-node pattern $H$.
 
 In the process, we also simplify the framework of \cite{factoredProblems} which we hope will lead to a wider adaptation of the framework. 
%

\begin{theorem}[Informal]
  Assuming that \textbf{\emph{at least one}} of the worst-case $k$-OV, $k$-SUM, or $k$-XOR hypotheses holds, \textbf{\emph{all}} of $\paritykOV$, $\paritykSUM$ and $\paritykXOR$ are $n^{\Omega(\sqrt{k})}$-hard on average for a natural distribution.
\end{theorem}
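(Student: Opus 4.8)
The plan is to deduce the theorem from nine implications, one for each ordered pair $(Q,P)$ with $Q$ and $P$ each ranging over \kOV, \ksum, and \kxor: ``worst-case hardness of $Q$ implies average-case hardness of parity-$P$, over a fixed natural distribution, with time bound $n^{\Omega(\sqrt{k})}$''. Since the conclusion of the theorem is the conjunction over $P$ of ``(worst-case \kOV\ or \ksum\ or \kxor) implies average-case parity-$P$'', it follows once all nine implications are established. The three diagonal implications ($Q=P$) are the self-reductions announced in the abstract; the six off-diagonal ones are obtained by first applying a worst-case fine-grained reduction between the problems --- some classical, some established earlier in this paper --- to pass from $Q$ to the counting version of $P$, and then running the same pipeline as in the diagonal case. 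It is precisely this bank of inter-reductions that upgrades the disjunctive hypothesis to a conjunctive conclusion.

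The common pipeline has three links. \textbf{(Worst case to average case for the factored problem.)} Feeding worst-case hardness of the counting version of $P$ into our streamlined version of the framework of \cite{factoredProblems} yields average-case hardness of the corresponding \emph{factored} counting problem (\ckov, \cksum, or \ckxor) over an explicit distribution. The engine is to express the count as a low-degree polynomial in the input bits and self-correct it in Reed--Muller fashion: a purported average-case solver is queried at polynomially many random evaluation points and the worst-case answer is recovered by interpolation. The count-polynomial has degree $\Theta(kd)$ in the $d$ bits of each element, so the self-correction runs over a field of size $\gg kd$; and in order for the random queries made during self-correction to be distributed (essentially) as the target natural distribution while the polynomial's degree stays manageable, the $k$ slots of a tuple are organized into about $\sqrt{k}$ blocks of size about $\sqrt{k}$. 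This is what inflates the instance size from $n$ to $N=n^{\Theta(\sqrt{k})}$: since the worst-case problem on $n$ elements needs $n^{k-o(1)}$ time, the average-case problem on $N$ elements needs $N^{\Omega(\sqrt{k})}$ time, which is the source of the $\sqrt{k}$ in the exponent.

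\textbf{(De-factoring.)} Reduce the average-case factored problem to the plain average-case counting problem over a natural, efficiently samplable product distribution on Boolean vectors (respectively, integers). This is the main new reduction, and it must (a) preserve the count exactly, or at least its parity --- not merely up to sign or a known multiplicative factor --- since the next link discards all but one bit; (b) blow up the dimension only polynomially; and (c) push the factored distribution forward to a distribution on honest instances that is still sampleable efficiently. \textbf{(Taking parity.)} This comes essentially for free provided the self-correction of the first link is run over a field of characteristic $2$, say $\mathbb{F}_{2^m}$ with $m=O(\log(kd))$ so that $2^m$ exceeds the polynomial's degree: there are then enough evaluation points for interpolation, while the recovered coefficient reduced modulo the characteristic is exactly the parity of the count, so no parameter is lost in the substitution.

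The step I expect to be the main obstacle is de-factoring. The distributions produced by the polynomial-method machinery are tailored to self-correction --- roughly, uniform over a finite-field grid carrying a prescribed factored shape --- and do not obviously coincide with, or push forward to, a natural product distribution on plain \kOV/\ksum/\kxor inputs; engineering a simulation that is simultaneously count-preserving (mod $2$), dimension-efficient, and distribution-natural is the crux, and it is also where the improvement over \cite{factoredProblems} is needed so that the de-factoring overhead does not erode the $\Omega(\sqrt{k})$ exponent. A secondary point to verify is that the parity is not accidentally pinned to a constant by symmetries of the tuple count (an issue already encountered for parity-$k$-clique in \cite{UniformCliqueABB}); this is controlled by the precise problem definition together with keeping the reduction's output instances generic enough.
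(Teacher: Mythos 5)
Your high-level skeleton --- first obtain average-case hardness for the \emph{factored} parity problem, then \emph{de-factor}, and route cross-reductions between factored versions to upgrade the disjunctive hypothesis --- does match the pipeline in Section~\ref{sec:WCACxor} (Steps 1--4) and Figure~\ref{fig:roadmapsection6}. But the engine you propose for the first link is not what the paper does, and it would not produce the distribution that the later links require.

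You propose to express the factored count as a low-degree polynomial and self-correct it by classical Reed--Muller line-interpolation over a field $\mathbb{F}_{2^m}$ with $2^m$ larger than the degree. This is essentially what \cite{factoredProblems} (and \cite{BallRSV18}) did with a large prime $p$, and the whole point of Theorem~\ref{thm:framework-parity} / Theorem~\ref{thm:framework-general} is to \emph{avoid} it. The obstacle is distributional: once you take a random line $x+ty$ over $\mathbb{F}_{2^m}^n$, the queries $x+ty$ for $t\neq 0$ are uniform over $\mathbb{F}_{2^m}^n$, not over $\{0,1\}^n$. The factored parity problem and, crucially, the de-factoring map of Theorem~\ref{thm:factoredToUnfactored} both need \emph{Boolean} factored-vector encodings as input --- de-factoring interprets bits of the input as membership bits of subsets of $\{0,1\}^b$. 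There is no obvious way to feed an $\mathbb{F}_{2^m}$-valued ``factored vector'' into that map, and restricting the line direction $y$ to the subfield $\mathbb{F}_2$ gives only two evaluation points and hence no interpolation for degree $>1$. So the ``taking parity comes essentially for free by working in characteristic~$2$'' step, as you've set it up, costs you exactly the thing you need: a uniform distribution on $\{0,1\}^n$.

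The paper's replacement (Theorem~\ref{thm:framework-parity}) is a different self-correction: draw $d+1$ random vectors $\vec{y}_1,\ldots,\vec{y}_{d+1}\in\{0,1\}^n$, query the solver on all $2^{d+1}-1$ inputs $\vec{v}+\sum_{i\in S}\vec{y}_i \pmod 2$, and \emph{sum} the answers mod~$2$. Every query is (marginally) uniform on $\{0,1\}^n$, and a linear-algebra counting argument (solutions of a rank-$\le d$ system in $d+1$ unknowns over $\mathbb{F}_2$ are even in number) shows the degree-$\le d$ terms cancel, so the sum equals $f(\vec{v})$. There is no interpolation, no enlarged field, and no size constraint of the form $2^m>d$ --- instead the overhead is $2^{d+1}$ queries, which is a constant once $d=gk=\Theta(K)$ is a constant. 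For the integer-valued version (Theorem~\ref{thm:framework-general}) a similar trick is done with $\pm r_i$ perturbations and a bit-decomposition, again keeping queries Boolean. Without this replacement, the de-factoring step cannot get started; with it, your ``main obstacle'' (mismatched distributions between the self-corrector and the de-factorer) simply doesn't arise.

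Two smaller calibration issues. First, the $\sqrt{k}$ loss does not come from the instance size inflating to $n^{\sqrt{k}}$; in the reductions $N=n^{\Theta(1)}$ (e.g.\ $N\approx n^2$, see Table~\ref{tab:ov-reduction-param-values}). The $\sqrt{k}$ enters because de-factoring a $k'$-partite factored instance with $g$ groups produces a $K=k'g$-partite plain instance, and the two free parameters $k'$ and $g$ are balanced as $k'\approx g\approx\sqrt{K}$; so the hypothesis actually used is the $\sqrt{K}$-OV (resp.\ SUM, XOR) hypothesis. Second, the improvement over \cite{factoredProblems} is not inside de-factoring (which did not exist there); it is (i) the new framework over $\mathbb{Z}$ and $\mathbb{Z}_2$ just described, and (ii) the de-factoring reduction itself (Theorem~\ref{thm:factoredToUnfactored}), which is new.
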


\begin{theorem}[Informal]
  Assuming that \textbf{\emph{at least one}} of the randomized exponential time hypotheses (rETH), $k$-OV, $k$-SUM, $k$-clique or $k$-XOR hypotheses hold, \textbf{\emph{all}} of the counting problems $\paritykOV$, $\paritykSUM$, and $\paritykXOR$ are $n^{\Omega(k^{1/3})}$-hard on average for a natural distribution \footnote{In fact, a stronger statement is true: If \textbf{\textit{any}} of the problems $k$-OV, $k$-SUM, $k$-clique or $k$-XOR require $n^{\Omega(k)}$ time in the worst case then \textit{\textbf{all}} of the counting problems $\paritykOV$, $\paritykSUM$, and $\paritykXOR$ are $n^{\Omega(k^{1/3})}$-hard on average.}.
\end{theorem}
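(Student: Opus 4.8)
This is the preceding ($\sqrt{k}$) theorem extended by two more source hypotheses, $k$-Clique and rETH. The plan is to import each of them into a worst-case assumption that the preceding theorem already consumes, and then quote that theorem. Recall its content: if \emph{any one} of $K$-OV, $K$-SUM, $K$-XOR requires $n^{\Omega(K)}$ worst-case time, then \emph{all three} of $\paritykOV$, $\paritykSUM$, $\paritykXOR$ are $n^{\Omega(\sqrt{K})}$-hard on average over explicit, polynomial-time samplable distributions (the three bulleted items of the abstract are its same-problem special cases). So it suffices to prove: $n^{\Omega(k)}$ worst-case hardness of $k$-Clique, and separately rETH, each implies $n^{\Omega(K)}$ worst-case hardness of one of $K$-OV/$K$-SUM/$K$-XOR for a suitable $K=K(k)$; then apply the preceding theorem with that $K$. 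The exponent $k^{1/3}$ is $\sqrt{K}$ at the weakest entry point, which will be $k$-Clique with $K=\Theta(k^{2/3})$; every other source, rETH included (and of course $k$-OV/$k$-SUM/$k$-XOR themselves), enters with $K=\Theta(k)$ and yields the stronger $n^{\Omega(\sqrt k)}$, which is the ``sometimes even more'' of the statement.

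\emph{rETH.} The split-and-list reduction of Williams does this with no loss: from a $3$-CNF on $N$ variables with $O(N)$ clauses, split the variables into $k$ blocks of size $N/k$, represent each block's $2^{N/k}$ partial assignments by their $O(N)$-dimensional clause-satisfaction vectors, and observe that a satisfying assignment is precisely a choice of one vector per block whose bitwise OR is all-ones --- i.e., after complementing, a $k$-OV solution on $n:=2^{N/k}$ vectors of dimension $O(N)=O(k\log n)$. An $n^{o(k)}$ algorithm for $k$-OV would yield a $2^{o(N)}$ algorithm for $3$-SAT and refute rETH; so rETH gives $n^{\Omega(k)}$ worst-case hardness of $k$-OV, and the preceding theorem delivers $n^{\Omega(\sqrt k)}$ average-case hardness of all three parity problems.

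\emph{$k$-Clique.} Combine the standard balanced grouping reduction for Clique with the standard encoding of a transversal clique into Orthogonal Vectors. Fix $g:=\Theta(k^{2/3})$ and partition the $k$ clique-slots into $g$ groups of $k/g=\Theta(k^{1/3})$ slots each; for each group form its $N\le n^{k/g}$ ``super-vertices'', namely the $(k/g)$-subsets of $V(G)$ inducing a clique, and call two super-vertices of distinct groups compatible iff they are vertex-disjoint and every edge between them lies in $G$. Then $G$ has a $k$-clique iff the $g$-partite compatibility graph has a transversal $g$-clique (one super-vertex per group, pairwise compatible), and a per-coordinate encoding turns the latter into a $g$-OV instance on $N$ vectors that has a solution iff $G$ has a $k$-clique. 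Since $N\le n^{k/g}$, an $N^{o(g)}$-time algorithm for $g$-OV would be an $n^{o(k)}$-time algorithm for $k$-Clique, refuting its hypothesis; hence that hypothesis gives $N^{\Omega(g)}$ worst-case hardness of $g$-OV with $g=\Theta(k^{2/3})$, and the preceding theorem then yields $N^{\Omega(\sqrt g)}=N^{\Omega(k^{1/3})}$ average-case hardness of all three parity problems. (Equivalently, one may import $k$-Clique and rETH through the factored-problem framework of \cite{factoredProblems}, which already extracts average-case hardness of factored problems from both, and then apply this paper's de-factoring reductions together with the cross-reductions among $k$-OV, $k$-SUM, $k$-XOR that underlie the preceding theorem.)

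\emph{Main obstacle.} No single reduction above is difficult in isolation; the real work is the end-to-end parameter/complexity accounting, and in particular recognizing that $k$-Clique cannot be imported at the full parameter $k$. Encoding a transversal clique in an \emph{arbitrary} super-graph costs dimension (hence also the degree of the counting polynomial that the preceding theorem's self-reduction manipulates, and the associated recovery overhead) that grows with the number of super-vertices, so keeping the OV parameter $g$ and these measures simultaneously inside the admissible regime forces the grouping to be balanced at $g=\Theta(k^{2/3})$ --- which is exactly where $\sqrt k$ degrades to $k^{1/3}$. One must also check that the preceding theorem may legitimately be invoked with a \emph{growing} parameter $\Theta(k^{2/3})$, which is fine because the conjectured $n^{\Omega(k)}$ lower bounds are uniform in $k$, and that pre-composing the importing reductions with the self-reduction still leaves the final instances distributed according to the promised natural, easy-to-sample distribution. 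Note finally that the $\bmod 2$ count is carried across entirely by the polynomial-method step internal to the preceding theorem, so the front-end Clique and rETH reductions need only preserve \emph{existence} of solutions --- which the standard grouping and split-and-list encodings do.
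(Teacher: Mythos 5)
Your handling of the rETH, $k$-OV, $k$-SUM and $k$-XOR sources matches the paper: sparsify and split-and-list to get small-dimension worst-case OV/XOR/SUM hardness (Lemmas~\ref{lem:dimensionReductionETH}, \ref{lem:lengthForkSumkXOR}), pass to parity by subsampling (Lemma~\ref{lem:decisionToParity}), and run the factored-problem pipeline. The $k$-Clique source, however, has a genuine gap. Your ``grouping'' reduction to $g$-OV encodes each super-vertex as a $(k/g)$-subset of $V(G)$ and wants pairwise compatibility --- disjointness plus ``all cross-edges present in $G$'' --- to be a dot-product-zero condition. For an arbitrary input graph the second condition carries $\Theta(n)$ bits of unstructured information per super-vertex (you must realize $G$'s non-edge relation as a bilinear vanishing pattern, whose rank can be $\Theta(n)$), so the resulting $g$-OV instance has dimension $\Omega(g^2 n^2) = \Omega\bigl(g^2 N^{2g/k}\bigr)$. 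That is polynomial in $N$ for every constant $k$, but the worst-case-to-average-case machinery you then hand this to (Lemma~\ref{lem:selfReductionsWithbg} and the factored framework) needs the source $g$-OV dimension to be $O(g\log N)$; otherwise the factored representation blows up exponentially. No choice of $g$ escapes this --- setting $g = \Theta(k^{2/3})$ is not the relevant balancing act.

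The paper bypasses the framework altogether for Clique (Section~\ref{sec:cliqueToFGC}). Theorem~\ref{thm:cliqueToOVXOR} gives a direct, parity-preserving reduction from $\oplus k$-clique on $n$ nodes to $\oplus\bikmath$-OV and $\oplus\bikmath$-XOR with $N = O(n^2)$ vectors of dimension $O(k^2 \lg n)$; the $\bikmath$ lists are indexed by \emph{edge slots} $(i,j)$ rather than vertex groups, each list has one vector per edge of $G$, and the vectors carry $O(\lg n)$-size pairwise agreement checks that force all chosen edges to name consistent vertices. The average-case ingredient is then imported wholesale from \cite{UniformCliqueABB} ($\oplus k$-clique is already hard on \erdosRen{} graphs), and pushing a random graph through this direct reduction defines the target distribution; Theorem~\ref{thm:cliqueToOthers} yields $N^{(\sqrt{2K}+1)\omega/6 - o(1)} = N^{\Omega(\sqrt K)}$, strictly stronger than the $N^{\Omega(K^{1/3})}$ your route even targets. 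This also means your closing remark --- that the Clique front-end ``need only preserve existence of solutions'' --- misses what the paper actually does: the Clique-to-OV/XOR/SUM reduction must preserve the parity of the solution count, and Theorems~\ref{thm:cliqueToOVXOR} and~\ref{thm:cliqueToSUM} are built to do exactly that.
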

To the best of our knowledge, these are the first average-case super-linear lower bounds for any of these problems.

\subsection{Our Results}
Our results on average-case hardness of $\paritykOV$, $\paritykSUM$, and $\paritykXOR$ are summarized in Table \ref{table:average_case_hardness}. We give  hardness results that are implied by the random exponential time hypothesis (rETH), the $k$-XOR hypothesis, and the $k$-SUM hypothesis, formally defined in Section \ref{sec:prelims}.

\begin{table}[h!]
\centering
\begin{tabular}{|c|c|c|c|c|}
\hline
\textbf{} & \textbf{rETH} & \textbf{$k'$-XOR Hypothesis} & \textbf{$k'$-SUM Hypothesis} & \textbf{$k'$-Clique Hypothesis} \\
\hline
$\mathsf{parity}\text{-}K\text{-}OV$ & $N^{\Omega{(\sqrt{K})}}$ [\ref{thm:ovxorhard}]& $N^{K^{1/3}/4-o(1)}$ [\ref{thm:ovxorhard}]& $N^{K^{1/3}/4-o(1)}$ [\ref{thm:ovxorhard}]& $N^{(\sqrt{2K}+1)\omega/6-o(1)}$ [\ref{thm:cliqueToOthers}] \\
\hline
$\mathsf{parity}\text{-}K\text{-}SUM$ & $N^{\Omega{(\sqrt{\frac{K}{\lg{K}}})}}$ [\ref{thm:sumhard}] & $N^{\sqrt{\frac{K}{8\lg{K}}}-o(1)}$ [\ref{thm:sumhard}] & $N^{\ceil{\sqrt{K}/2}/2-o(1)}$  [\ref{thm:sumhard}]& $N^{(\sqrt{2K}+1)\omega/6-o(1)}$ [\ref{thm:cliqueToOthers}] \\
\hline
$\mathsf{parity}\text{-}K\text{-}XOR$ & $N^{\Omega{(K^{1/3})}}$[\ref{thm:ovxorhard}] & $N^{\ceil{\sqrt{K}/2}/2-o(1)}$[\ref{thm:ovxorhard}] & $N^{\ceil{K^{1/3}/2}/2-o(1)}$[\ref{thm:ovxorhard}] & $N^{(\sqrt{2K}+1)\omega/6-o(1)}$ [\ref{thm:cliqueToOthers}]  \\
\hline
\end{tabular}
\caption{Average-case hardness of $\paritykOV$, $\paritykSUM$, and $\paritykXOR$ from multiple well-known hypotheses. The rows correspond to problems that are solved in the average-case. The columns are worst-case hardness hypotheses. Note that we use $k'$ to indicate that the hardness comes from a $k' \ne K$ for these problems (a smaller $k'$ is used to prove the hardness for a larger $K$). 
}
\label{table:average_case_hardness}
\end{table}


Formal statements of the theorems are given below. 
The key take-away is that if the size of the problem is $N$, we can show $N^{\Omega(\sqrt{K})}$ average-case hardness for $\paritykOV$, $\paritykSUM$, and $\paritykXOR$ assuming the worst-case hardness of $k$-$OV$, $k$-$SUM$, and $k$-$XOR$ respectively.  Furthermore, we show $N^{\Omega(K^{1/3})}$ average-case hardness for  $\paritykOV$ and $\paritykXOR$ from any of the hypotheses, and we show $N^{\Omega(
\sqrt{K/\lg(K)})}$ average-case hardness for  $\paritykSUM$ from any of the hypotheses. 

We generate hardness over distributions that are not the uniform distribution, but they are easy to sample \footnote{This is quite common in average-case fine-grained complexity, for example see \cite{GoldreichR18}}. For simplicity, we do not provide the details of the distributions in the theorem statements below and refer the readers to the respective sections where the theorem proofs are provided for further details. 


\begin{restatable}{theorem}{ovxorhard}\label{thm:ovxorhard}
Let $K$ be a constant. Let $P\in \{\mathsf{parity}\text{-}K\text{-}OV, \mathsf{parity}\text{-}K\text{-}XOR\}$. There are easy to sample distributions $D_1^P(N,K),D_2^P(N,K)$ and $D_3^P(N,K)$ such that any algorithm that solves $P$ of size $N$ with vectors of dimension $\Theta(K\lg{N})$ with probability $1-\frac{1}{\Theta(2^K)}$ requires at least:
\begin{itemize}
    \item $N^{\Omega{(\sqrt{K})}}$ time assuming rETH, if the input is drawn from $D_1^P(N,K)$. 
    \item $N^{K^{1/3}/4-o(1)}$ time assuming the $\sqrt{K}$-XOR hypothesis, if the input is drawn from $D_2^P(N,K)$. 
     \item $N^{K^{1/3}/4-o(1)}$ time assuming the $K^{1/3}$-SUM hypothesis, if the input is drawn from $D_3^P(N,K)$. 
\end{itemize}
\end{restatable}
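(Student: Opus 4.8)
The plan is to establish all three items of Theorem~\ref{thm:ovxorhard} with a single pipeline, instantiated with three different worst-case starting points. Fix a target $P\in\{\paritykOV,\paritykXOR\}$. The pipeline has three stages: \textbf{(i)} reduce the relevant \emph{worst-case} source problem to a \emph{factored} counting problem in the sense of \cite{factoredProblems}; \textbf{(ii)} apply the simplified and improved worst-case-to-average-case self-reduction framework of this paper to obtain average-case hardness of the \emph{parity-counting} version of that factored problem over an explicit, easy-to-sample distribution; and \textbf{(iii)} give a new \emph{de-factoring} reduction that converts an average-case factored-parity instance into a genuine, non-factored instance of $P$ with vectors of dimension $\Theta(K\lg N)$ while preserving the solution count modulo $2$. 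Stages (i) and (ii) are essentially inherited from the factored-problems machinery; stage (iii) is the new ingredient, and it is stage (iii) that fixes the arity $K$ of the final problem as a function of the source arity $k'$ and produces the distributions $D_1^P,D_2^P,D_3^P$.

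For stage (i): under rETH, the sparsification lemma plus the standard Williams-style reduction turns $\#k'$-SAT on a sparse CNF into a factored orthogonal-vectors counting instance whose ``zero generalized inner product'' indicator is a product over $\Theta(k'\lg N)$ coordinates of degree-$k'$ terms --- precisely the good low-degree polynomial structure needed by stage (ii); under the $k'$-XOR hypothesis we start from worst-case $k'$-XOR directly, whose zero-XOR test is a single $\mathbb F_2$-linear form, so essentially nothing but the retained ``outer'' indices contributes to the degree; under the $k'$-SUM hypothesis we first bit-decompose the modular-sum condition with carries, which is what introduces the $\lg K$ factors visible in the $\paritykSUM$ row of Table~\ref{table:average_case_hardness} and what forces the source arity down to $K^{1/3}$. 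For stage (ii): the framework says that a counting problem carrying a degree-$d$ good low-degree polynomial admits a self-reduction with multiplicative overhead $\poly(d)$ against any algorithm that is correct on average with probability $1-1/\Theta(2^K)$ (enough to absorb a union bound over the interpolation queries); the self-correction is done by polynomial interpolation over a small extension of $\mathbb F_2$, and --- crucially for stage (iii) --- the interpolation points correspond to random restrictions/sub-instances of the input, so the oracle is only ever queried on instances drawn from the same factored domain, hence on honest parity instances of $P$ once stage (iii) is applied.

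For stage (iii), the de-factoring step: a factored instance has each coordinate of each ``vector'' replaced by a small gadget (a tiny OV/XOR/SUM sub-instance) that evaluates to a bit. We flatten it by introducing, for every gadget, a block of fresh genuine coordinates together with fresh dummy vectors, laid out so that the only $K$-tuples of the flattened instance that are orthogonal (resp.\ that XOR to zero) are exactly those that (a) choose a legal combination of outer indices and (b) choose, inside each gadget block, a witness certifying that gadget's Boolean value; a parity-preserving inclusion--exclusion over the gadget blocks then shows the flattened count is congruent to the factored count modulo $2$. The arity is $K=\Theta(a\cdot g)$, where $a$ is the number of retained outer indices and $g$ the per-gadget arity, and the provable exponent is obtained by optimizing $a$ against $g$ subject to keeping the polynomial degree within the self-correcting budget. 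When the source's zero-test and the target's zero-test are of the same type (linear source into the XOR target, product source into the OV target) this optimization is essentially free and yields exponent $\Theta(\sqrt K)$; when they are of different types, converting between a product test and a linear test costs a further cube-root, giving exponent $\Theta(K^{1/3})$; the ceilings $\lceil\cdot/2\rceil$ in the table come from the meet-in-the-middle structure of $k'$-XOR and $k'$-SUM. The dimension stays $\Theta(K\lg N)$ because each gadget uses $O(\lg N)$ coordinates and there are $O(K)$ gadget blocks.

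The step I expect to be the main obstacle is stage (iii): making the flattening \emph{exactly parity-preserving} rather than merely satisfiability-preserving. All spurious $K$-tuples --- those that mix coordinates across distinct gadget blocks, those that misuse a dummy vector, and those that split the outer choice incorrectly --- must either be forbidden outright by the coordinate layout or occur in pairs that cancel modulo $2$, and this has to be arranged while simultaneously (a) keeping the arity blow-up $K/k'$ as small as the table demands, (b) holding the dimension at $\Theta(K\lg N)$, and (c) keeping the flattened distribution both easy to sample and exactly matching the distribution on which the stage-(ii) queries are supported. A secondary difficulty is the $k'$-SUM case, where the carry logic has to be implemented within the degree budget available to the self-correcting interpolation.
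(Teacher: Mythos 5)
Your three-stage pipeline is the right roadmap and matches the paper's architecture at a high level (worst-case $\to$ factored; worst-case-to-average-case on the factored problem via a low-degree-polynomial framework; de-factor into a genuine parity-$K$-OV or parity-$K$-XOR instance, which also produces the distribution). You are also right that the de-factoring step is the new ingredient and the crux of the theorem. But that is exactly where the proposal has a genuine gap rather than a proof.

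First, your description of what needs to be flattened is not accurate: a $(b,g)$-factored vector is not ``each coordinate replaced by a tiny sub-instance''; it is a Cartesian product $\vec v[1]\times\cdots\times\vec v[g]$ of $g$ \emph{sets} of $b$-bit strings, and the factored count sums, over all choices of one factored vector per list, the number of ways to pick one string from each of the $kg$ sets so that all $g$ per-slot constraints hold. Consequently the right de-factoring move is not ``one block of coordinates plus dummy vectors per gadget''; it is to \emph{promote each of the $g$ slots to its own partition}, producing $K=kg$ partitions, with each factored vector contributing up to $2^b$ ordinary vectors to each of its $g$ partitions. The vectors then carry $\Theta(\lg n)$ ``validity-check'' bits encoding the index $\ell$ of the originating factored vector (with complementary encodings in adjacent slots), arranged in $k(g-1)$ disjoint windows so that a $K$-tuple XORs (resp.\ ANDs) to zero on the check bits if and only if, within each group of $g$ slots, all selected vectors come from the same $\ell$. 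This makes the correspondence between factored solutions and un-factored solutions an \emph{exact bijection}, so no ``parity-preserving inclusion--exclusion'' is needed or used; the count transfers on the nose, which is both cleaner and necessary because you would otherwise have to argue that all spurious $K$-tuples cancel in pairs, and you give no mechanism for that cancellation. Without the bijection there is no proof that the parity is preserved.

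Second, the explanation you give for the $\sqrt K$ vs.\ $K^{1/3}$ split (``linear test vs.\ product test'') is not the real mechanism and would not let you derive the exponents. The actual source of the cube root is quantitative: when the source and target problems differ, the cross-problem factored reduction divides the available bit-budget $b$ by $\Theta(k)$ (because a $b$-bit slot of the target must encode a guess of all $k$ slots of the source), so the hardness constraint $bg\gtrsim k\lg n$ becomes $bg\gtrsim k^2\lg n$. With $K=kg$ and $b=\Theta(\lg n)$ fixed so the de-factored instance has size $N\approx n^2$, the constraint forces $g\approx K^{2/3}$, $k\approx K^{1/3}$, hence $N^{\Omega(K^{1/3})}$; when source and target coincide, no cross-problem step is needed and the constraint is $bg\gtrsim k\lg n$, giving $g\approx k\approx\sqrt K$ and $N^{\Omega(\sqrt K)}$. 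The ceiling factors $\lceil\cdot/2\rceil$ then come from the $n^{\lceil k'/2\rceil}$ form of the XOR/SUM hypotheses. Your ``type'' heuristic happens to match the table but cannot produce the $K^{1/3}/4$ constants or the ceilings, and it would mislead you on the SUM cases, where the paper uses the tighter $\lg(k)$-blowup reduction (Lemma~\ref{lem:SUMreductions}) rather than the generic $\Theta(k)$-blowup one, obtaining $\sqrt{K/\lg K}$ instead of $K^{1/3}$.

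Finally, a smaller point: your stage (ii) invokes ``polynomial interpolation over a small extension of $\mathbb F_2$,'' but the framework this theorem actually relies on (Theorems~\ref{thm:framework-parity} and~\ref{thm:framework-general}) deliberately avoids field extensions; for the parity case it forms the $2^{d+1}-1$ non-trivial $\mathbb F_2$-linear combinations $\vec v+\sum_{i\in S}\vec y_i$ and sums the oracle outputs mod $2$, with no interpolation at evaluation points. Either mechanism would suffice for the theorem, but it is worth noting that the self-reduction used here is the simpler XOR-shift one, which is also what guarantees that all oracle queries are exactly uniform on the factored domain and hence, after de-factoring, land exactly in the stated distribution $D_i^P$.
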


\begin{restatable}{theorem}{sumhard}\label{thm:sumhard}
Let $K$ be a constant. There are easy to sample distributions $D_1(N,K),D_2(N,K)$ and $D_3(N,K)$ such that any algorithm that solves $\mathsf{parity}\text{-}K\text{-}SUM$ of size $N$ with vectors of dimension $\Theta(K\lg{N})$ with probability $1-\frac{1}{\Theta(2^K)}$ requires at least:
\begin{itemize}
    \item $N^{\Omega{(\sqrt{\frac{K}{\lg{K}}})}}$ time assuming rETH, if the input is drawn from $D_1(N,K)$. 
    \item $N^{\sqrt{\frac{K}{8\lg{K}}}-o(1)}$ time assuming $\sqrt{\frac{K}{\lg{K}}}$-XOR hypothesis, if the input is drawn from $D_2(N,K)$. 
     \item $N^{\ceil{\sqrt{K}/2}/2-o(1)}$ time assuming $\sqrt{K}$-SUM hypothesis, if the input is drawn from $D_3(N,K)$.  
\end{itemize}
\end{restatable}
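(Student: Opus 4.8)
The plan is to obtain these three bounds as instances of a single worst-case to average-case reduction template, using the factored-problems framework of \cite{factoredProblems} as a black box and then composing it with a reduction from each worst-case hypothesis into a factored $k$-SUM instance. First I would recall that \cite{factoredProblems} gives average-case hardness for a \emph{factored} counting version of $k$-SUM (call it $\mathsf{F}k\text{-SUM}$) over an easy-to-sample distribution, and that the new contribution announced in the introduction is a reduction \emph{from} factored $k$-SUM \emph{to} ordinary (non-factored) $\mathsf{parity}\text{-}k\text{-SUM}$; chaining these two gives average-case hardness of $\mathsf{parity}\text{-}k'\text{-SUM}$ from the worst-case hardness of $k'$-SUM, which is the self-reduction and yields the third bullet. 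For the first two bullets I would instead start from the worst-case assumption (rETH, or the $k''$-XOR hypothesis with $k'' = \sqrt{K/\lg K}$), route it through a worst-case fine-grained reduction into a worst-case $k$-SUM-type instance (for rETH this is the standard sparsification-plus-split-and-list chain producing a $k$-SUM instance whose size is $2^{n/k}$ on a SAT instance of $n$ variables, which is where the $\sqrt{\cdot}$ and the $1/\lg K$ loss enters; for $k''$-XOR this is the elementary embedding of XOR into SUM over a large enough modulus so that no carries interfere), and then apply the same factored$\,\to\,$non-factored average-case machinery.

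The key steps, in order, are: (1) fix the parameter translation --- set $K$ as a function of the worst-case parameter $k'$ (roughly $K \approx (k')^2$ in the rETH and XOR cases because the split-and-list / subset-sum blowup squares the exponent, and $K \approx k'$ up to constants in the SUM case) and verify it matches the exponents $\sqrt{K/\lg K}$, $\sqrt{K/(8\lg K)}$, $\lceil\sqrt K/2\rceil/2$ claimed in the statement; (2) describe the three distributions $D_1, D_2, D_3$ explicitly as the pushforwards of the uniform (or structured) worst-case instance distributions through the reduction, and check they are samplable in polynomial time (this is inherited from \cite{factoredProblems} plus the fact that all intermediate reductions are uniform and polynomial-time); (3) invoke the average-case correctness of the \cite{factoredProblems} self-reduction for factored $k$-SUM, which tolerates a $1-1/\poly$ fraction of errors, and track how the success probability degrades along the chain --- this is where the $1 - 1/\Theta(2^K)$ success-probability requirement in the statement comes from, since the reduction makes $\Theta(2^K)$-many oracle calls and needs all of them correct; (4) assemble: an average-case algorithm for $\mathsf{parity}\text{-}K\text{-SUM}$ beating $N^{(\text{claimed exponent})}$ would, by running the reduction in reverse, solve the worst-case problem (SAT, $k''$-XOR, or $k'$-SUM) faster than its hypothesis allows, contradiction.

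The main obstacle I expect is step (3) combined with the $\mathsf{XOR}\to\mathsf{SUM}$ and $\mathsf{SAT}\to\mathsf{SUM}$ embeddings in step (1): one must be careful that the factored structure required by the \cite{factoredProblems} machinery is actually produced by these embeddings (factored $k$-SUM wants the inputs to be products/concatenations of independent coordinate-blocks, and a naive subset-sum reduction does not obviously have this product form), and that the number-of-solutions \emph{parity} is preserved exactly by each embedding (no spurious zero-sum tuples created by carries or by the factoring, and no collapse of distinct tuples). Controlling this likely forces the dimension bound $\Theta(K\lg N)$ stated in the theorem --- enough bits so the modulus is large enough to prevent wraparound yet small enough to keep the instance size polynomial --- and forces the extra $\lg K$ factors in the SUM-from-rETH and SUM-from-XOR exponents, since the split-and-list step must partition $k''$ coordinates into $\log$-sized groups to make the factored blocks small enough to sample. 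Once the embedding is pinned down so that parity is preserved block-by-block, the rest is bookkeeping on exponents and probabilities.
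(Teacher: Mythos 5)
Your high-level plan for the third bullet (the $k$-SUM self-reduction: worst-case $k$-SUM $\to$ factored $k$-SUM $\to$ average-case factored $k$-SUM via the framework $\to$ average-case $K$-SUM) matches the paper, and your accounting for the $1-1/\Theta(2^K)$ success probability via union bound over $\Theta(2^K)$ oracle calls is also correct.

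For the first two bullets, however, your proposed architecture has a genuine gap. You propose to start from rETH (resp.\ the $k''$-XOR hypothesis) and ``route it through a worst-case fine-grained reduction into a worst-case $k$-SUM-type instance'' before applying the factored machinery. No such reduction exists: the sparsification-plus-split-and-list chain from SAT produces a worst-case $k$-\emph{OV} instance, not a $k$-SUM instance, and there is no known efficient reduction from SAT (or from $k$-XOR) to ordinary, un-factored $k$-SUM that preserves the relevant lower bound --- finding such a reduction would itself be a notable result. The paper's route is importantly different: rETH is first reduced to worst-case $k$-OV (Lemma~\ref{lem:dimensionReductionETH}), then to \emph{factored} $k$-OV, and only then is the cross-problem conversion performed at the level of factored problems (Lemma~\ref{lem:SUMreductions}, factored $k$-OV $\to$ factored $(k{+}1)$-SUM), exploiting the extra ``guess'' partition that factored problems afford to absorb the carry structure. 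The same is true for the XOR bullet: the conversion is factored $k$-XOR $\to$ factored $(k{+}1)$-SUM, not a direct unfactored embedding ``over a large enough modulus.''

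This also changes where the $\lg K$ loss comes from. You attribute it to ``the split-and-list step partitioning $k''$ coordinates into $\log$-sized groups,'' but the actual source is the $b \to b\lceil\lg k\rceil$ bit blowup in Lemma~\ref{lem:SUMreductions}: to encode an OV (bit-product) or XOR (bit-parity) constraint as a SUM constraint one pads each bit position with $\lceil\lg k\rceil$ buffer bits so that the $k$ bit-contributions sum without carrying across positions, and then an extra partition supplies the allowed target values. That per-bit padding is precisely what shrinks the surviving degree parameter $k$ from $\Theta(\sqrt K)$ to $\Theta(\sqrt{K/\lg K})$ in the first two bullets, via the constraint $bg \gtrsim k\lg n$ from Lemma~\ref{lem:selfReductionsWithbg} combined with $gk=K$. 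Without recognizing that the cross-problem step happens in the factored world, the parameter bookkeeping and the $\lg K$ factor cannot be derived correctly.
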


We also get hardness for average-case $\mathsf{parity}\text{-}K$-OV and average-case $\mathsf{parity}\text{-}K$-XOR from the $k$-clique hypothesis.

\begin{restatable}{theorem}{cliqueToXORSUMOV}\label{thm:cliqueToOthers}
Let $P\in \{\mathsf{parity}\text{-}K\text{-XOR}, \mathsf{parity}\text{-}K\text{-OV}, \mathsf{parity}\text{-}K\text{-SUM}\}$. Let the input size of $P$ be $N$. If the $k$-clique  hypothesis, where $K=\bikmath$, is true then there is an explicit distribution $D_P(N,K)$ on the input of $P$ where $P$ is $N^{(\sqrt{2K}+1)\omega/6-o(1)}$ average-case hard,
%
where $\omega$ is the exponent of matrix multiplication.
\end{restatable}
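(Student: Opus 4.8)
The plan is to route the $k$-clique hypothesis through the factored counting problems of \cite{factoredProblems} and then apply the factored-to-non-factored machinery developed in this paper, being careful to preserve the parity of the solution count at every step.

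First, I would pass from worst-case $k$-clique to worst-case parity-$k$-clique: detecting a $k$-clique on an $n$-vertex graph can be decided by computing, for a $\poly(n)$-size family of color-coded and sub-sampled derived graphs, the parity of the number of $k$-cliques (a single surviving clique flips the parity with constant probability), as in the detection-to-parity arguments underlying \cite{UniformCliqueABB}; this has only $\poly(n)$ overhead, so parity-$k$-clique still requires $n^{\omega k/3-o(1)}$ time under the $k$-clique hypothesis. (One could also simply invoke the average-case parity-$k$-clique hardness of \cite{UniformCliqueABB} and start from there.) Next, encode a graph $G=(V,E)$ with $|V|=n$ as an instance of the relevant \emph{factored} $K$-ary counting problem with $K=\binom{k}{2}$ groups, one per unordered pair $\{i,j\}\subseteq[k]$ of clique positions: fix a linear order on $V$, and for $i<j$ and an edge $\{a,b\}$ with $a<b$ put into group $\{i,j\}$ the item encoding ``vertex $a$ occupies position $i$ and vertex $b$ occupies position $j$''. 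A choice of one item per group is an order-preserving assignment $[k]\to V$, and it is a valid configuration exactly when the chosen items are pairwise consistent (all items meeting a position $i$ agree on the vertex placed there), which forces a $k$-clique; since the assignment is order-preserving each $k$-clique arises exactly once, so the number of configurations equals the number of $k$-cliques of $G$, with the same parity. The consistency predicate is precisely of the kind the factored problems of \cite{factoredProblems} are built to express, so this transfers the $n^{\omega k/3-o(1)}$ lower bound to the factored $K$-problem on a $\poly(n)$-size instance.

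Second, I would apply this paper's reduction from factored problems to their non-factored counterparts, together with the improved worst-case-to-average-case framework built on \cite{factoredProblems}, to obtain average-case hardness of $\paritykXOR$, $\paritykOV$, and $\paritykSUM$ over an explicit, efficiently samplable distribution $D_P(N,K)$ --- again with the parity of the count preserved. Re-expressing $k$ via $K=\binom{k}{2}$ (so $k$ lies within an additive $O(1)$ of $\sqrt{2K}$) and propagating the polynomial blow-ups of the two steps turns $n^{\omega k/3-o(1)}$ into the claimed $N^{(\sqrt{2K}+1)\omega/6-o(1)}$ bound for $D_P$.

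\textbf{Main obstacles.} The crux is keeping every reduction parsimonious modulo $2$, or introducing only a correction term we can evaluate and subtract. Two points need care. (a) Automorphisms of a labeled clique over-count by $k!$ (which is even), so one must count labeled (order-preserving) embeddings rather than cliques --- handled above --- and, when $k$ is even, a consistent configuration leaves a fixed nonzero target in the XOR/SUM instance because each position lies on $k-1$ (odd) edges; this target must be absorbed by an extra dummy coordinate, or annihilated by working with the difference of the $(k+1)$-clique and $k$-clique counts modulo $2$, and one has to check this does not move $K$ outside the tolerated range. (b) The $\paritykOV$ case: the $K$-OV constraint (``for every output coordinate, some selected vector is $0$ there'') is a covering constraint and cannot directly enforce the equality/consistency constraints, so the OV reduction must be routed through the factored intermediate (or through a per-position, bit-by-bit agreement gadget of the $k$-OV flavor) while staying parsimonious and keeping the dimension $O(K\lg N)$. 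Finally, pinning the polynomial blow-ups down precisely enough to land on the exponent $(\sqrt{2K}+1)\omega/6-o(1)$, rather than a slightly weaker constant, is routine but must be done carefully.
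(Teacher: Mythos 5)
Your proposal takes a genuinely different route from the paper's, and it contains a conceptual gap that would need to be fixed before it could be made rigorous.

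The paper's proof of Theorem~\ref{thm:cliqueToOthers} does not route through factored problems at all. It starts from the known average-case hardness of $\oplus k$-clique on Erd\H{o}s--R\'enyi graphs, which it cites from \cite{UniformCliqueABB}; so there is no separate worst-case-to-average-case step in this proof, and the framework of Section~\ref{sec:frameworkBetter} plays no role here. It then pushes the Erd\H{o}s--R\'enyi distribution forward through a direct, parsimonious gadget reduction from $\oplus k$-clique to $\oplus \binom{k}{2}$-OV, $\oplus \binom{k}{2}$-XOR and $\oplus \binom{k}{2}$-SUM (Theorems~\ref{thm:cliqueToOVXOR} and~\ref{thm:cliqueToSUM}). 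The explicit distribution $D_P(N,K)$ is simply the image distribution under this reduction, and the parameter bookkeeping $n=N^{1/2}$, $k(k-1)=2K$ then yields the exponent $N^{(\sqrt{2K}+1)\omega/6-o(1)}$.

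The gap in your plan is that the encoding you describe --- one list per unordered pair $\{i,j\}\subseteq[k]$ of clique positions, one item per edge carrying an assignment of endpoints to positions $i$ and $j$, pairwise-consistency as the predicate --- is precisely the structure of the \emph{unfactored} $\binom{k}{2}$-ary instance that Theorems~\ref{thm:cliqueToOVXOR}/\ref{thm:cliqueToSUM} construct. It is not a factored problem in the sense of \cite{factoredProblems}: there are no factored vectors (each consisting of $g$ sets of $b$-bit strings), only $K$ flat lists of items. Hence there is no good low-degree polynomial to which Lemma~\ref{lem:factored-gdlp} or the worst-to-average framework would apply, and no factored object for Theorem~\ref{thm:factoredToUnfactored} to act on. As written, your plan wants to apply both pieces of the factored machinery to an object that neither handles. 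The parenthetical alternative you mention in passing --- invoke \cite{UniformCliqueABB} directly and follow with a parsimonious reduction --- is the correct move, and it is exactly the paper's proof; but then the factored machinery is simply not needed.

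Your obstacles (a) and (b) are real for a naive gadget but dissolve under the paper's design. For each position $a$ the reduction creates $k-2$ disjoint pairwise-comparison slots; one designated reference vector places a $g_{+}(\cdot)$ block in each of the $k-2$ slots, while each of the $k-2$ other vectors incident to position $a$ places a single $g_{(-)}(\cdot)$ block in one unique slot, everything else being filler. Each slot therefore has exactly two non-filler entries, so the parity of $k$ or of $k-1$ never enters and no residual target arises. For OV, equality is checked directly via $g_{+}(v)=\bool(v)\bullet\overline{\bool(v)}$ and $g_{(-)}(v)=\overline{\bool(v)}\bullet\bool(v)$ with all-ones filler, so no detour through a factored intermediate is required.
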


Along the way we answer a question raised by Jafargholi and Viola (see Appendix B \cite{afargholiV16}) where they show how to reduce $4$-clique to $6$-SUM over the group $\mathbb{Z}^t_3$, but left it open how to carry the reduction over $\mathbb{Z}^t_2$ or $\mathbb{Z}$. In Theorem \ref{thm:cliqueToOthers} using Theorem \ref{thm:framework-general} (where the $\mathbb{Z}^t_2$ appears) we resolve the question by having a framework which works over this field. 

\subsection{Technical Overview} 
In this section, we give a high level technical overview of our results. We start with the simplification of the framework of Dalirrooyfard, Lincoln and Vassilevska Williams \cite{factoredProblems}.
\paragraph*{Simplification of the worst-case to average-case reduction framework.} 
Dalirrooyfard, Lincoln and Vassilevska Williams \cite{factoredProblems} developed a framework for deriving worst-case to average-case reductions for a given problem $P$ to its factored version provided that $P$ can be represented as a `good low degree polynomial' $f$ over $\mathbb{F}_p$ for some prime $p$, i.e. for every $x$, $f(x)=P(x)\mod p$.

The `good low degree polynomials' are those polynomials that have a low degree $d$ and their input is partitioned into $d$ sections such that each monomial has \emph{exactly} one variable from each section. The later property is referred to as `strongly $d$-partititeness' in \cite{factoredProblems}.
Via \cite{factoredProblems} we have that for problems that have a good low degree polynomial representation, there is a worst-case to average-case fine-grained reduction for a natural easy-to-sample-from distribution.

Our first contribution is a simplification of the framework of \cite{factoredProblems}, which strengthens the framework and makes it more efficient. 
We relax the framework of \cite{factoredProblems} as follows. 

First, instead of representing a problem via a polynomial {\em modulo a prime}, we consider polynomials over the integers. We then relax the notion of strongly $d$-partiteness by allowing our polynomials to be just $d$-partite.
This means that the input variables to the polynomial are still partitioned in $d$ sections, however each monomial has \emph{at most} one (as opposed to exactly one) variable from each section. We denote such polynomials as {\em fine $d$-degree polynomials}. This relaxation of the $d$-partiteness property allows more freedom in the framework.


Note that removing restrictions on the framework expands the set of problems that it can be applied to (handling parity, functions with a larger degree, and functions that aren't ``strongly $d$-partite''). We hope the proof simplification, and reducing restrictions pave the road for future broader adaptation of the framework to understand fine-grained average case complexity of new problems. 

We now describe the new framework.
The degree $d$ of the polynomial describing the problem $P$ appears in the \emph{success probability} that we need for an average-case algorithm for $P$ required by the reduction. The success probability of an average-case $T(n)$ time algorithm is the probability that the algorithm gives the right answer in at most $T(n)$ time steps.  In the following statement of our framework, we need success probability of around $1-1/\lg^d{n}$.

\begin{restatable}{theorem}{frameworkgeneral}\label{thm:framework-general}
Let $P$ be a problem that takes an input $I \in \{0,1\}^n$ and has an output over the integers in $[-M,M]$ for some integer $M$.
Additionally, assume that a fine $d$-degree polynomial $f$ exists such that $P(I)=f(I)$ for all $I \in \{0,1\}^n$.  
Let $A$ be an average-case algorithm that runs in time $T(n)$ such that when $\vec{v}$ is sampled uniformly from $\mathbb{Z}_2^n$, then:
$$Pr[A(\vec{v}) = P(\vec{v})] \geq 1-\frac{1}{2^{d+2}(d+\log_2M)^d}.$$
Then there is a randomized algorithm $B$ that runs in time $O \left((2d+2\log_2M)^d(n+T(n)) \right)$ such that
for \emph{any} vector $\vec{v} \in \{0,1\}^n$:
$$Pr[B(\vec{v}) = P(\vec{v})] \geq 3/4.$$
\label{thm:BetterFrameworkLargeFeild}
\end{restatable}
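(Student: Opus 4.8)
The natural approach is a polynomial-method worst-case to average-case reduction of the Kedlaya--Umans / Lipton flavor, adapted to the $d$-partite structure. Given a worst-case input $\vec v \in \{0,1\}^n$, I would lift it to an appropriate finite field or integer ring, hide it inside a random point along a low-degree curve, query the average-case algorithm $A$ at several points on that curve, and then interpolate back. The key structural fact I would exploit is that $f$ is a fine $d$-degree polynomial: its input is partitioned into $d$ sections, and each monomial uses at most one variable per section. This means that if I perturb the input section-by-section along independent lines, the restriction of $f$ to the resulting $d$-dimensional (or, more cleverly, $1$-dimensional) parametrized family is a polynomial of degree at most $d$ in the parameters, so $d+1$ (or a few more, to control the integer range) evaluations suffice for interpolation.

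\textbf{Key steps in order.} First, I would set up the curve: for the $i$-th section of the input, pick a uniformly random vector $\vec r_i \in \mathbb{Z}_2^{n_i}$ and consider the "affine path" that flips the input of section $i$ from $\vec v_i$ toward $\vec r_i$; combining the $d$ sections with a single scalar parameter $t$ running over $d+1$ (or more) distinct values — or, more carefully, using a tensored/product construction so that each intermediate query point is \emph{individually} uniform on $\mathbb{Z}_2^n$ — gives a family of inputs on which $f$ agrees with a univariate (or low-degree multivariate collapsed to univariate) polynomial $g(t)$ of degree $\le d$. Second, I would argue that each of the roughly $(2d+2\log_2 M)^d$ query points — the blow-up in the running time of $B$ matches exactly the number of interpolation points needed when you must pin down an integer-valued polynomial of degree $d$ in $d$ "directions" over a range $[-M,M]$ — is marginally uniform over $\mathbb{Z}_2^n$, so by a union bound over all query points, $A$ is correct on \emph{all} of them except with probability at most $(2d+2\log_2M)^d \cdot \frac{1}{2^{d+2}(d+\log_2M)^d} \le 1/4$. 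Third, conditioned on all queries being answered correctly, I would recover $g$ (equivalently the table of values of $f$ along the family) by exact interpolation over $\mathbb{Z}$ — legitimate because the values are integers in a bounded range and I have enough evaluation points — and read off $g$ at the parameter value corresponding to the original $\vec v$, which equals $P(\vec v)$. Fourth, I would tally the running time: each of the $O((2d+2\log_2M)^d)$ queries costs $O(n + T(n))$ to prepare and run, and interpolation is lower-order, giving the stated bound.

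\textbf{Main obstacle.} The delicate point is making every query point \emph{exactly} uniform on $\mathbb{Z}_2^n$ simultaneously with the family being low-degree in few parameters — over $\mathbb{F}_2$ one cannot use the classic "$\vec v + t\vec r$ over a large field" trick directly, since there is no room to move. The fix (which is presumably what forces the $2^d$ and the exponent-$d$ blow-ups) is to exploit $d$-partiteness: handle each of the $d$ sections with its \emph{own} independent randomness and its \emph{own} interpolation variable, so that within each section one is really interpolating a degree-$\le 1$ object, and the product over $d$ sections gives a degree-$\le d$ object in $d$ variables; marginal uniformity of each grid point then follows because along each axis the perturbation is a uniform XOR-shift of that section. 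Getting the bookkeeping right — that $d+1$ values per axis suffice for the \emph{parity/mod-$p$} case but one needs $\Theta(\log M)$ more per axis to interpolate over $\mathbb{Z}$ in the bounded-integer case, yielding the $(2d+2\log_2M)^d$ factor — and verifying the union bound closes at $1/4$ are the parts that need genuine care; everything else is standard polynomial interpolation.
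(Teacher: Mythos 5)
Your high-level outline correctly identifies the shape of the argument — one perturbation direction per section (exploiting $d$-partiteness), marginal uniformity of each query, a union bound over roughly $(2d+2\log_2 M)^d$ calls, and recovering $P(\vec v)$ from the answers — and you even land on the right query count. But there is a genuine gap in the middle: you never explain \emph{where the extra interpolation points come from}. You correctly observe that over $\mathbb{Z}_2$ there is ``no room to move'' — each coordinate of a query to $A$ must be a single bit, so along any one section you only have two points ($\vec v_i$ and $\vec v_i \oplus \vec r_i$), and a grid over $d$ such binary axes has only $2^d$ points. Yet you then assert that you will use ``$d+1$ values per axis'' or ``$\Theta(\log M)$ more per axis'' to pin down the integer-valued polynomial, without saying how such points are realized as Boolean inputs to $A$. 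That is exactly the hard part, and it is not addressed.

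The paper closes this gap with a \emph{bit-expansion} device that your plan does not contain. Rather than XOR-shifting each bit $v_i$ by a random bit, one additively masks $v_i$ by a random $t$-bit integer $r_i$ (with $t = d + \lceil\log_2 M\rceil$), choosing a sign per section ($2^d$ sign patterns $s \in \{0,1\}^d$) so that the $r_i$-contributions cancel when the results are summed over all $s$. This creates masked coordinates $u_i$ that are $t$-bit numbers, not bits; to query the Boolean-input oracle $A$, one expands $u_i = \sum_b 2^b\, u_i[b]$ and, for each tuple of bit-positions $(b_1,\ldots,b_d) \in [0,t-1]^d$, feeds $A$ the vector of single bits $(u_1[b_{\mathrm{part}(u_1)}],\ldots,u_n[b_{\mathrm{part}(u_n)}])$, then recombines the $t^d$ answers with weights $2^{b_1+\cdots+b_d}$. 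It is precisely this expansion that yields $t$ (and not $2$) points ``per axis,'' manufactures $(2t)^d = (2d+2\log_2 M)^d$ marginally uniform Boolean queries, and — together with the $d$-partite monomial structure that guarantees at most one $u_{i}$-factor per section — makes the summation identity produce $f(\vec v)$ exactly. Also note the paper's correctness argument is a direct cancellation identity modulo $2^z$ (the random parts cancel across the $2^d$ sign patterns), not a Lagrange-style curve interpolation; this is a cleaner route than what you sketch, since it avoids needing distinct evaluation nodes in a ring where you have none. Without the bit-expansion idea, your proposal cannot be completed as stated.
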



We further strengthen Theorem \ref{thm:framework-general} for problems with binary output, so that the success probability needed is $1-1/2^{d+3}$ (see Theorem \ref{thm:framework-parity}). 



We now give the main ideas behind our framework. We reduce a worst-case instance to $2^{O(d)}$ average case instances. For binary-output problems (Theorem \ref{thm:framework-parity}), we generate $d+1$ random vectors of length $n$ (where $n$ is the size of the input, i.e. the number of input bits) and then for each of the $2^{d+1}-1$ non-trivial\footnote{where there is a non-zero coefficient} linear combinations of these vectors (mod 2), 
we make an average-case instance by adding (mod $2$) the linear combination to the original input to create a random input. Then it is easy to see that the output of the worst-case instance is simply the sum of the outputs of these $2^{d+1}-1$ average-case instances, as all the random vectors we added cancel out their contributions once we sum them up.

However, this approach does not work for general integer problems (Theorem \ref{thm:framework-general}), as every bit in the original input is a number. Therefore, we cannot do the bit operations like we could do with mod 2. 
Instead, we produce $2^d$ average-case instances as follows. Given a $d$-partite input, let us refer the partition of the input bits by $P_1,..,P_d$. We produce $n$ random $t$-bit numbers $r_1,\ldots,r_n$, 
where we specify $t$ later in the algorithm. We then consider all possible $2^d$ assignments of $0,1$ to the partitions $P_1,\ldots,P_d$ as labels. Let vectors $\vec{w} \in \{0,1\}^d$ represent these labels. 
  Each label vector  $\vec{w}$ produces a new instance of the problem as follows: 
Let $x_i$ be the $i^{th}$ bit of the input, and suppose that it is in partition $P_j$. 
If the label of $P_j$ is $0$, i.e. $\vec{w}[j]=0$, then we add $r_i$ to $x_i$. If the label of $P_j$ is one, then we add $-r_i$ to $x_i$. In other words,
we add $(-1)^{\vec{w}[j]} r_i$ to $x_i$. 
So this produces $2^d$ instances, one for each possible vector $\vec{w}$. 

Note that we are extending each input bit to $t$ bits since we are adding a $t$-bit number to it. So we need to convert these instances with $nt$-bits of input back to instances with $n$-bits of input to be able to use the fine $d$-degree polynomial representing the problem. 
The idea for this conversion is based on the following observation: if we are multiplying $d$ numbers each with $t$ bits, we can instead break this down into $t^d$ bit-wise multiplications that we weigh appropriately, by expanding each number $\vec{x} = (\vec{x}[t-1],\ldots,\vec{x}[0])$ into $\sum_{i=0}^{t-1}\vec{x}[i]2^i$. For example, to compute $\vec{x}\cdot\vec{y}$, we could instead compute the following $t^2$ bit-wise multiplications: for each $i,j\in \{1,\ldots,t\}$, compute $\vec{x}[i]\cdot \vec{x}[j]$ with weight $2^{i+j}$. Summing up the outcome of these weighted bit multiplications is equal to $\vec{x}\cdot\vec{y}$. 

By expanding each input bit 
 and taking $t > \log{M}$,
 we can ensure that we don't lose any information while adding these random numbers to our input. Moreover, $d$-partiteness is needed in converting the long inputs back to $n$-bit inputs. Without $d$-partiteness it is not clear how to break each instance with long inputs into shorter input instances.


\paragraph*{From Factored Problems to Unfactored Problems.} Our main results are reductions from factored problems to un-factored problems for both counting and parity versions. More particularly, from factored $k$-$A$ to $K$-$B$ where $A,B\in \{$XOR, SUM, OV$\}$, and $K$ is a function of $k$.

To understand our techniques it is crucial to understand factored vectors and factored problems. As mentioned earlier, a factored vector $\vec{v}$ of dimension $d$ with parameters $b$ and $g=d/b$ consists of $g$ sets $\vec{v}[1],\ldots,\vec{v}[g]$, each containing $b$-bit numbers. For instance for $b=g=2$, a factored vector $\vec{w}$ could have  sets $\vec{w}[0]=\{11,01\}$ and $\vec{w}[1]=\{00,11\}$. Each factored vector can represent many un-factored vectors. For instance, $\vec{w}$ represents the following $4$-bit vectors: $1100,1111,0100,0111$.

We first clarify the notation for a vector and a factored vector. For a \emph{vector} $\vec{v}$, $\vec{v}[i]$ is the $i^{th}$ bit of $\vec{v}$, so $\vec{v}[i]\in \{0,1\}$. For a \emph{factored vector} $\vec{v}$, $\vec{v}[i]$ is a set of vectors of length $b$, i.e. $\vec{v}[i]\subseteq \{0,1\}^{b}$.

Suppose $S(\vec{w})$ is the set of un-factored vectors that a factored vector $\vec{w}$ represents. Then (the counting version of) factored $k$-OV with parameters $b$ and $g$ gets as an input $k$ lists $I_1,\ldots,I_k$, each consisting of $n$ factored vectors and wants the number of un-factored vectors $u_1,\ldots,u_k$ such that $u_1\cdot \ldots \cdot u_k=0$ (they are orthogonal) and $u_i\in S(\vec{v_i})$ for some factored vector $\vec{v_i} \in I_i$, for all $i=1,\ldots,k$. See Figure \ref{fig:factored}.

\begin{figure}
    \centering
    \includegraphics[width=\linewidth]{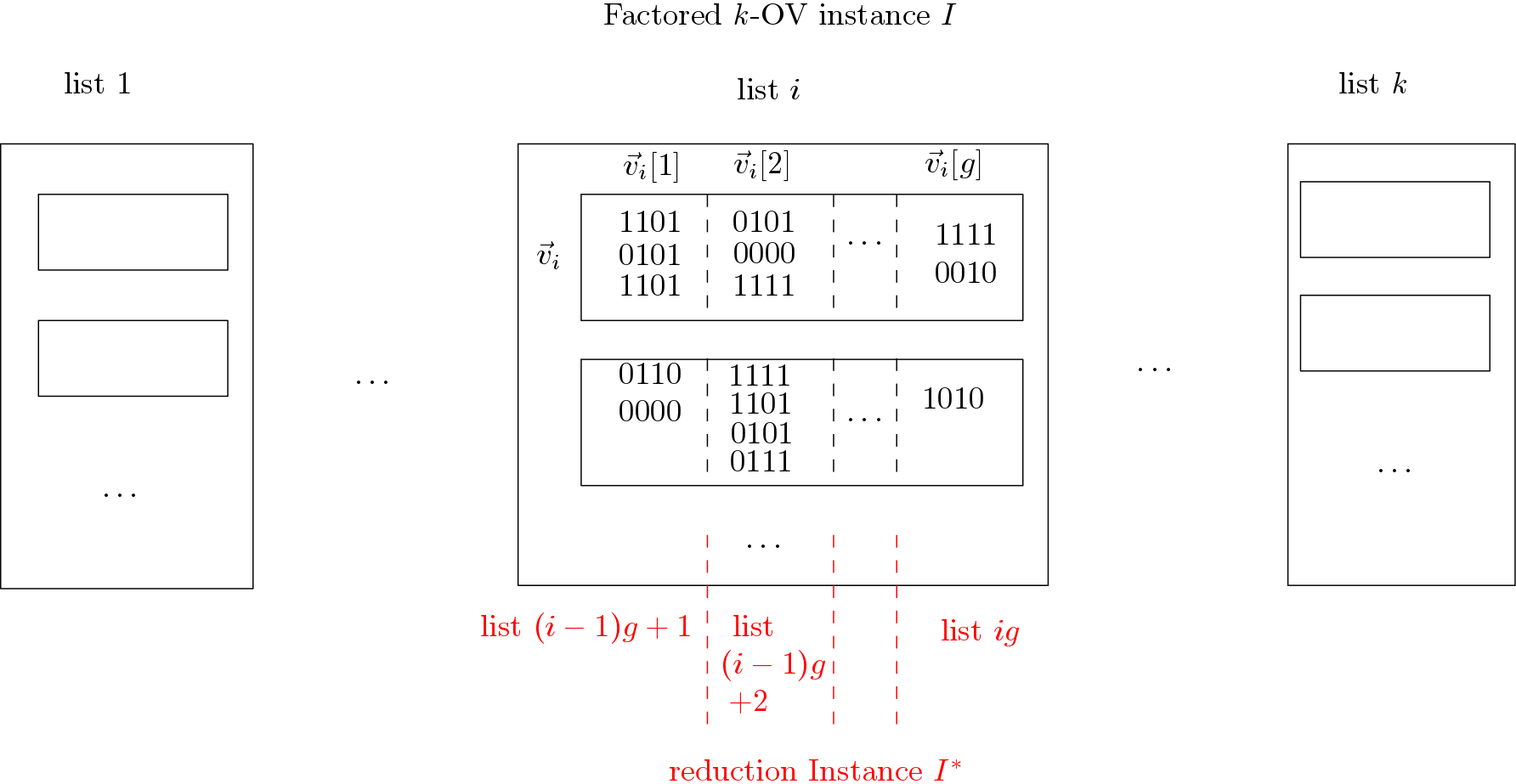}
    \caption{An example of a factored $k$-OV instance, where each factored vector has $g$ sets of $4$-bit numbers ($b=4$), see $\vec{v}$ in list $i$ as an example for a factored vector. In reducing a factored $k$-OV instance $I$ to a $kg$-OV instance $I^*$, we take each set in factored vectors as a new list, as specified by the red markers.  
    }
    \label{fig:factored}
\end{figure}


 The counting versions of the factored problems satisfy the constraints in the framework proposed in \cite{factoredProblems}. 
Similarly, they satisfy our less restrictive constraints in frameworks specified in Theorem \ref{thm:BetterFrameworkLargeFeild} and Theorem \ref{thm:framework-parity}. However, while \cite{factoredProblems} stopped at proving average-case hardness for counting versions of the factored problems, our main contribution is to reduce the factored problems to their un-factored versions via an  efficient reduction. \textit{This completes the loop so that we can get hardness results for natural core problems in fine-grained complexity, and not just for artifically defined factored problems.}

An efficient reduction between factored problems and un-factored problems is surprising due to the expressiveness of the factored versions. Intuitively, a factored problem represents a very large compressed instance of the original problem. 
Consider the case of $k$-OV, a factored $k$-OV instance represents a \emph{super polynomial} number of (highly correlated) vectors. Reducing back to $k$-OV naively would require a super-polynomial input size. As $k$-OV is only $n^k$ hard this would be too inefficient. We present a reduction method that manages to capture the compression in the way we produce our instance. 
 


Now we explain our reduction from a factored $k$-OV instance $I$ with parameters $b$ and $g$ to a $K$-OV instance $I^*$, where $K=kg$. We illustrate the reduction using the counting version, but the same reduction also works for the parity version. The $k$-OV instance $I$ consists of $k$ lists $I_1,\ldots,I_k$ of factored vectors, and the $K$-OV instance $I^*$ has $K$ lists $I^*_1,\ldots,I^*_K$ of vectors. We are going to turn each ``subset" of the factored vectors into a list: To make this concrete, consider all the factored vectors in the $i$th list $I_i$ of the $k$-OV instance $I$, for some $i\in \{1,\ldots,k\}$. Consider the $j$th set $\vec{v}[j]$ of $b$-bit vectors for each of these factored vectors $\vec{v}\in I_i$. The union of these sets is going to construct the $(i-1)g+j$th list $I^*_{(i-1)g+j}$ of the $K$-OV instance $I^*$ 
(see Figure \ref{fig:factored}). 

If we let list $I^*_{(i-1)g+j}$ be exactly the union of the sets the reduction won't work. In fact we need to make sure that the $b$-bit vectors selected from lists $(i-1)g+1,\ldots,(i-1)g+g$ of $I^*$ are all from the same factored vector in $I$. To do this we expand each $b$-bit vector to add some ``check" bits.


Moreover, this bit expansion has another effect: Suppose the vectors picked from lists $(i-1)g+1,\ldots,(i-1)g+g$ in $I^*$ are all from factored vector $\vec{v}$ in list $i$ of $I$. We build our reduction in such a way that the dot product of these vectors results in an un-factored vector from $S(\vec{v})$, i.e. the set of un-factored vectors that the factored vector $\vec{v}$ represents. This way, the dot product of all $K=kg$ vectors in $I^*$ is going to simulate one solution to the $k$-factored instance $I$.

The reductions from factored $k$-SUM and factored $k$-XOR to $K$-SUM and $K$-XOR respectively are similar, we only have to replace dot product to the appropriate function and make some minor adjustments. Finally, note that these reductions work when we consider the parity versions of the problems. 

\paragraph*{From Worst case to Average case.} 
In order to get hardness for average-case  $\paritykOV$ from rETH, we first (1) reduce (worst-case) $K$-$OV$ (for some $K$ that is a function of $k$) to (worst-case) factored $\paritykOV$, (2) reduce worst-case factored $\paritykOV$ to average case factored $\paritykOV$, and then (3) use the reduction explained above to reduce (average-case) factored $\paritykOV$ to (average-case) $\paritykOV$. To get hardness for average-case $\paritykSUM$ and $\paritykXOR$ from $\paritykSUM$ and $\paritykXOR$ respectively, we perform similar reductions to the steps above. Note that as mentioned, \cite{factoredProblems} would get stuck in the last step, and hence could not get hardness for un-factored version of the problems above.

To get hardness from a different problem, for example from the $K$-$SUM$ hypothesis to average-case $\paritykOV$, we add a step to the reduction above. We first (1) reduce $K$-$SUM$ to factored $\mathsf{parity}$-$K$-$SUM$ and then (2) worst-case factored $\mathsf{parity}$-$K$-$SUM$ to average-case factored $\mathsf{parity}$-$K$-$SUM$. Then we apply the additional step: (3) we reduce factored $\mathsf{parity}$-$K$-$SUM$ to factored $\mathsf{parity}$-$K$-$OV$. Finally 
 (4) we reduce factored $\mathsf{parity}$-$K$-$OV$ to $\mathsf{parity}$-$k$-$OV$. 
In other words, when we reduce factored parity-$k$-SUM 
with parameters $b$ and $g$ to factored parity-$k'$-OV with parameters $b'$ and $g'$, we want to keep $b'$, $g'$ and $k'$ as close to $b$,$g$ and $k$ as we can, respectively. 
In Section \ref{sec:betweenProblem-bg} we give more efficient reductions between factored problems (between any two of factored $k$-OV, $k$-SUM and $k$-XOR), using some of the encoding ideas explained above. Note that in this additional step, both problems considered are factored, so we only need to encode one operation (say sum) into another (say dot product).

\subsection{Comparison to Prior Work}

\def\sharpP{\#P}
\def\parityP{\oplus P}
Perhaps the most related result to ours is the classical worst-case to average-case reduction for the problem of computing the permanent of an $n\times n$ matrix, which is complete for the counting complexity class $\sharpP$~\cite{Lipton89,CPS99,Guruswami06}. 

There has been a number of other works in the recent past that showed fine-grained average-case hardness of various computational problems. Ball, Sabin, Rosen and Vasudevan~\cite{BallRSV18} kickstarted this series of works by using the local correctability of low-degree polynomials (equivalently Reed-Muller codes~\cite{Lipton89,FeigenbaumF91,GLRSW91,GS92}) to show the hardness of counting problems {\em modulo large enough prime numbers} assuming one of several popular worst-case conjectures in fine-grained complexity. Subsequent works of Goldreich and Rothblum~\cite{GoldreichR18} and Boix{-}Adser{\`{a}}, Brennan and Bresler~\cite{UniformCliqueABB} used the same local correctability properties to show the hardness of counting the number of $k$-cliques for some samplable distribution~\cite{GoldreichR18} and for Erd\"{o}s-Renyi graphs~\cite{UniformCliqueABB}. As for the other problems, average-case hardness for the uniform distribution of $k$-SUM was shown when the range the numbers are drawn from is large \cite{BrakerskiSV21}. Specifically, if $n$ numbers are drawn uniformly at random from $[-R,R]$ then an algorithm for $k$-SUM running in $R^{o(1/lg(k))}$ would give surprising improvements for lattice problems.

Chen, Hirahara, and Vafa \cite{CHV23} attempted to find the minimal worst-case complexity assumption which implies average-case hardness for NP and PH. Most relevantly for this paper they show that if $\sum_k$-SAT can't be solved in time $2^{\Tilde{O}(\sqrt{n})}$ then $\sum_2 Time[n]$ can't be solved in quasi-linear time. Our results give stronger lower-bounds, but we start from stronger assumptions (e.g. rETH).

Some recent work has given average-case to average-case reductions. These papers show the equivalence or hardness of new distributions with previous well-studied distributions. In \cite{DBLP:conf/focs/DinurKK21} the authors reduce between sparse and dense settings of k-SUM and k-XOR. In \cite{agrawal2023k} the authors also present average-case to average-case hardness results for $k$-SUM, paying particular attention to the sparse regime where $r$ integers are chosen uniformly at random from $\{0,\ldots,M-1\}$ for $M\gg r^k$. In our paper, by contrast, we focus on \emph{worst-case} to average-case reductions. However, these recent papers highlight the interest in the hardness of $k$-SUM and $k$-XOR in the average-case. 


While the results of \cite{UniformCliqueABB} and \cite{Goldreich20} give average-case hardness for counting the parity of the number of $k$-cliques these results have not yet been applied to give lower bounds for other problems. The overhead of Boix{-}Adser{\`{a}}, Brennan and Bresler grows as $\lg(n)^{\binom{k}{2}}$ and Goldreich's paper grows as $2^{\binom{k}{2}}$, however, the hardness of $k$-clique grows as $n^k$ \cite{UniformCliqueABB}\cite{Goldreich20}. This causes the reductions to break down when $k=\omega(\lg(n)/\sqrt{\lg\lg(n)})$ and $k= \omega(\lg(n))$ respectively.

Dalirooyfard, Lincoln and Vassilevska Williams~\cite{factoredProblems} do show an average-case hardness result for factored $\paritykXOR$, however they did not show hardness for (un-factored) $\paritykXOR$. The key contribution of this work is a (worst-case) reduction from factored problems to their un-factored (i.e. regular, good old) versions. 

There has also been several recent works that study the fine-grained complexity of parity problems. For example, \cite{DellHMTW14} demonstrates a lower bound for the permanent from $\#$ETH, the counting version of ETH. To do this they show a sparsification lemma for $\#$ETH. Dell, Lapinskas and Meeks \cite{DellL21,DellLM22} study fine-grained reductions from approximate counting to decision. These results can't be used to get average-case decision hardness from the average-case hardness of counting problems (e.g. \cite{UniformCliqueABB},  \cite{factoredProblems}, and this paper) because the worst-case to average-case counting reductions rely crucially on the exact answers, not approximate answers. Another work on the fine-grained complexity of parity is that of \cite{AbboudParity} which shows that distance problems such as graph diameter fine-grained reduce to computing the parity of the corresponding distance values. These works study the worst-case versions of parity problems whereas our focus is on average-case hardness.




\section{Preliminaries}
\label{sec:prelims}

For a distribution $D$ and a variable $r$, $r\sim D$ means that we sample $r$ from the distribution $D$. For a vector $\vec{v}$, $\vec{v}\sim D^n$ means that we sample each entry of $\vec{v}$ from $D$.

For any counting problem $P$, let $\oplus P$ be the parity version of $P$, where the output of $\oplus P$ is the parity of the output of $P$. For the rest of this paper we will use $\oplus$SAT, $\oplus k$-OV, $\oplus k$-XOR, and $\oplus k$-SUM to refer the problem of returning the parity of the number of solutions to these problems. 

\subsection{Factored Problems}
We use factored problems as a bridge to reduce worst-case to average-case of many problems.
\paragraph{Factored Vector~\cite{factoredProblems}} Given parameters $b$ and $g$, a $(b,g)$-factored vector $\vec{v}$ consists of $g$ sets of $b$-bit zero-one vectors. In particular, for $i\in \{1,\ldots,g\}$, $\vec{v}[i]$ is the $i^{th}$ set of $b$-bit vectors. Let the set of factored vectors with parameters $g$ and $b$ be $Fac(b,g)$.

Below are three $(3,2)$-factored vectors $\vec{u},\vec{v}$ and $\vec{w}$.

\begin{align*}
   &\vec{u}[0] = \{101,011\} \text{~~} & \vec{u}[1] = \{001,100\}\\
   &\vec{v}[0] = \{\} \text{~~} & \vec{v}[1] = \{000,011,100,111\}\\
   &\vec{w}[0] = \{100\} \text{~~} & \vec{w}[1] = \{010,111\}
\end{align*}

We first define the factored $k$-OV problem using the example above, and then give a definition for factored problems in general. A factored $k$-OV problem of size $n$ consists of $k$ sets $V_1,\ldots, V_k$ each with $n$ factored vectors. For our example, we consider a $2$-OV instance of size $3$, where $V_1=V_2=\{\vec{u},\vec{v},\vec{w}\}$. Informally, the problem asks to compute the sum of the number of ``ways" each two factored vectors from $V_1$ and $V_2$ can be orthogonal to each other. For example, the number of ways $\vec{u}$ and $\vec{w}$ are orthogonal is the number of vectors $u_0\in \vec{u}[0]$, $u_1\in \vec{u}[1]$, $w_0\in \vec{w}[0]$, $w_1\in \vec{w}[1]$, such that $u_0\cdot w_0=0$ and $u_1\cdot w_1=0$\footnote{Two zero-one vectors $a=(a_1,\ldots,a_t)$ and $b=(b_1,\ldots,b_t)$ with length $t$ are orthogonal (have dot product zero) if $a\cdot b =\sum_{i=1}^t a_ib_i=0$.}. There are in fact only two $4$-tuples of $(u_0,u_1,w_0,w_1)$ with this property: $(001, 001,100,010)$ and $(001, 100,100,010)$. 
It is easy to see that the number of ways $\vec{v}$ is orthogonal to $\vec{u}$ and $\vec{w}$ is zero, so the answer to this factored $2$-OV problem is $2$.


\paragraph{Factored Problems~\cite{factoredProblems}} Let $\frak f$ be a function that gets $k$ $b$-bit numbers as input and outputs zero or one. For any function $\mathfrak{f}$ that gets $k$ $b$-bit vectors as input and outputs zero or one, we define the factored $\mathfrak{f}$ problem \ckfunc[]$(n,b,g)$ as follows.
The input to the problem is $k$ sets $V_1,\ldots,V_k$ each having $n$ factored vectors from $Fac(b,g)$,
the set of $(b,g)$-factored vectors. We refer to $n$ as the size of the problem. Informally, the output is the sum of the number of ``ways" any 
$k$ factored vectors $\vec{v}_1\in V_1, \ldots,\vec{v}_k\in V_k$ ``zero" $\mathfrak{f}$: the number of $gk$-tuples $(w_1^1,\ldots, w_1^g,w_2^1,\ldots,w_k^g)$ where $w_i^j\in \vec{v}_i[j]$, and $\mathfrak{f}(w_1^j,w_2^j,\ldots, w_k^j)=0$ for all $j=1,\ldots,g$. So more formally, the output of a \ckfunc[]$(n,b,g)$ instance is defined as follows:
\begin{equation*}
    Fk\text{-}\mathfrak{f}(V_1,\ldots,V_k):=\sum_{\vec{v}_1,\ldots,\vec{v}_k\in V_1,\ldots,V_k} |\{(w_1^1,\ldots,w_k^g)| \forall i\forall j: w_i^j\in \vec{v}_i[j] \text{ and }\forall j:\mathfrak{f}(w_1^j\ldots,w_k^j) = 0\}|
\end{equation*}

In the above definition $j\in \{1,\ldots,g\}$ and $i\in \{1,\ldots,k\}$. 

Now the decision version of the problem returns True if \ckfunc[]$(V_1,\ldots,V_k)>0$, and the parity version $\oplus$\ckfunc[] outputs the parity of \ckfunc[]$(V_1,\ldots,V_k)$. In this paper we mostly focus on the parity version of the problems.

\subsection{Un-factored Problems}

We state our hardness hypotheses. We state these hypotheses in the word-RAM with $O(\log(n))$ bit words. 

\def\vecv{\mathbf{v}}
\def\vectt{\mathbf{t}}
\def\F{\mathbb{F}}
\def\matV{\mathbf{V}}
\def\matI{\mathbf{I}}
\def\matA{\mathbf{A}}
\def\matB{\mathbf{B}}
\def\vecs{\mathbf{s}}
\def\vecy{\mathbf{y}}
\def\vece{\mathbf{e}}
\def\Wt{\mathsf{Wt}}

\begin{definition}[The $k$-clique Hypothesis]
    Given an unweighted graph $G$ with $n$ nodes and $m=O(n^2)$ edges counting the number of $k$-cliques in the graph requires $n^{\omega k/3 -o(1)}$ time, even for randomized algorithms. 
    \label{def:kcliqueHypothesis}
\end{definition}

\begin{definition}[The $k$-XOR Hypothesis]
In the \emph{$k$-XOR problem}, we are given $k$ unsorted lists $L_1,\ldots,L_k$ each containing $n$ $d$-bit vectors 
for some dimension $d=O(\log{n})$, and want to determine if there are $v_1\in L_1,\ldots,v_k\in L_k$ such that the XOR of  $v_1, v_2,\ldots ,v_k$ equals zero. The counting version of $k$-XOR asks how many tuples of $k$ numbers $a_1\in L_1, \ldots, a_k \in L_k$ XOR to zero.  The $k$-XOR hypothesis states that the $k$-XOR problem requires $n^{\ceil{k/2}-o(1)}$ time, even for randomized algorithms. 
\end{definition}


\begin{definition} [The $k$-SUM Hypothesis \cite{C3sum}]
	In the \emph{\ksum[]~problem}, we are given $k$ lists $L_1,\ldots,L_k$ each consisting of $n$ numbers (over $\Z$ or $\R$) and want to determine if there are $a_1\in L_1, \ldots, a_k \in L_k$ such that $\sum_{i=1}^k a_i = 0$. The counting version of \ksum[]~asks how many tuples of $k$ numbers $a_1\in L_1, \ldots, a_k \in L_k$ sum to zero. 
The \ksum[]~hypothesis states that that the \ksum[]~problem requires $n^{\lceil k/2 \rceil -o(1)}$ time for randomized algorithms \cite{C3sum}.

	\label{def:ksum}
\end{definition}

\begin{definition}[(Strong) Exponential Time Hypothesis\cite{cseth}]
Let $c_k$ be the smallest constant such that there is an algorithm for $k$-CNF SAT that runs in $2^{c_k n +o(n)}$ time.
Let $r_k$ be the smallest constant such that there is a randomized algorithm for $k$-CNF SAT that runs in $2^{r_k n +o(n)}$ time.

The \emph{Exponential Time Hypothesis (ETH)} states that $c_k>0$ for all $k\geq 3$. 

The \emph{Random ETH (rETH)} states that $r_k>0$ for all $k\geq 3$.

The \emph{Strong Exponential Time Hypothesis (SETH)} states that there is no constant $\eps>0$ such that $c_k \leq 1-\eps$ for all constant $k$.
\end{definition}
 
Intuitively, ETH states that $k$-CNF SAT requires $2^{\Omega{(n)}}$ time and SETH states that there is no constant $\epsilon>0$ such that there is a $O(2^{n(1-\epsilon)})$ time algorithm for $k$-CNF SAT for all constant values of $k$.

\begin{definition} [The $k$-OV Hypothesis \cite{virgiSurvey}]
In the \emph{\kOV~problem}, we are given $k$ lists $L_1,\ldots, L_k$ of $n$ zero-one vectors of length $d$ as input. If there are $k$ vectors $v_1 \in L_1,\ldots,v_k \in L_k$ such that for $\forall i\in[1,d]~\exists j\in[1,k]$ such that $v_i[j]=0$ we call these $k$ vectors an orthogonal $k$-tuple. The output of \kOV is  true if there is an orthogonal $k$-tuple in the input and false otherwise. The counting version of $k$-OV asks for the number of orthogonal $k$-tuples. The \kOV~hypothesis states that that the \kOV~problem requires $n^{k-o(1)}$ time, even for randomized algorithms \cite{virgiSurvey}.  
	\label{def:kOV}
\end{definition}

First we recall a result that shows hardness on $k$-OV assuming rETH.

\begin{lemma}
Assuming rETH there exists a  fixed constant $c_\epsilon$ such that $\forall c\geq c_{\epsilon}$ the $k$-OV problem with $n$ vectors each of length $c_\epsilon k\lg(n)$ requires $n^{
\Theta(k)}$ time.
\label{lem:dimensionReductionETH}
\end{lemma}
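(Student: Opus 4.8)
The statement to prove is Lemma~\ref{lem:dimensionReductionETH}: assuming rETH, there is a fixed constant $c_\epsilon$ so that for all $c \geq c_\epsilon$, $k$-OV with $n$ vectors of dimension $c_\epsilon k \lg n$ requires $n^{\Theta(k)}$ time. This is a standard ``dimension reduction for OV'' fact, so the plan is to chain together the classical reduction from CNF-SAT to $k$-OV (the split-and-list / partition argument) with the sparsification lemma, tracking the dimension carefully.

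\textbf{Plan.} First I would invoke the \emph{sparsification lemma} (Impagliazzo--Paturi--Zane) to reduce a general $k$-CNF SAT instance on $n$ variables to a disjunction of $2^{\epsilon n}$ many $k$-CNF instances each having only $O(n)$ clauses, where the $O(\cdot)$ hides a constant depending on $k$ and $\epsilon$; under rETH this preserves the $2^{\Omega(n)}$ lower bound (randomized). Fix such a sparsified instance $\phi$ with $m = c_{k,\epsilon} n$ clauses. Next, apply the standard split-and-list reduction: partition the $n$ variables into $k$ blocks of $n/k$ variables each, and for the $i$-th block create a list $L_i$ of $2^{n/k}$ vectors, one per partial assignment $\alpha$ to that block. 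The coordinates of the vectors are indexed by the $m$ clauses: the vector for $\alpha$ has a $0$ in coordinate $C$ iff $\alpha$ satisfies clause $C$ (i.e.\ sets some literal of $C$ to true), and $1$ otherwise. Then $k$ vectors $v_1,\dots,v_k$ (one from each list) are orthogonal in the $k$-OV sense --- for every coordinate $C$ some $v_i$ has a $0$ there --- exactly when the combined assignment satisfies every clause. So $\phi$ is satisfiable iff the $k$-OV instance has an orthogonal tuple.

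\textbf{Parameters.} Setting $N = 2^{n/k}$ as the list size, the dimension is $m = c_{k,\epsilon} n = c_{k,\epsilon} k \lg N$ (using $\lg N = n/k$), which is exactly of the promised form $\Theta(k \lg N)$ with the hidden constant $c_\epsilon := c_{k,\epsilon}$ absorbing the dependence on $k$ and $\epsilon$; and the monotonicity in $c$ (``$\forall c \geq c_\epsilon$'') is handled by simply padding each vector with extra all-$1$ coordinates, which never creates a spurious orthogonal tuple and only increases the dimension. Finally, a $k$-OV algorithm running in $N^{o(k)}$ time on this instance would decide each sparsified sub-instance in $2^{o(n)}$ time; summing over the $2^{\epsilon n}$ sub-instances and taking $\epsilon$ small gives a $2^{o(n)} \cdot 2^{\epsilon n} = 2^{o(n)}$-ish randomized algorithm for $k$-CNF SAT, contradicting rETH. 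Hence $k$-OV on this instance requires $N^{\Omega(k)}$ time, and the trivial upper bound $N^{O(k)}$ (brute force over $k$-tuples, times $\poly$ in the dimension) gives $N^{\Theta(k)}$.

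\textbf{Main obstacle.} The conceptual content is routine --- this is folklore from Williams' original OV lower bound plus sparsification --- so the only real care needed is bookkeeping: (i) making sure the constant $c_\epsilon$ is chosen \emph{uniformly in $k$}, which is exactly what the sparsification lemma gives (its clause-blowup constant depends on the clause width $k$ and on $\epsilon$, and we are free to fix one $\epsilon$, say $\epsilon$ small enough that the $2^{\epsilon n}$ outer loop is harmless); (ii) handling that the number of sub-instances $2^{\epsilon n} = N^{\epsilon k}$ is polynomially related to $N^k$, so one must argue the lower bound survives this overhead --- standard, since $N^{o(k)} \cdot N^{\epsilon k}$ is still $N^{o(k) + \epsilon k}$, subconstant times $k$ in the exponent once $\epsilon \to 0$, contradicting rETH; and (iii) the monotonicity claim for all $c \geq c_\epsilon$, handled by padding as above. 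None of these is a genuine difficulty, which is presumably why the paper states it as a recalled lemma rather than proving it from scratch.
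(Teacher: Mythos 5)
Your approach is essentially the same as the paper's: sparsify the CNF formula, run Williams' split-and-list reduction into $k$ blocks, and bookkeep the dimension as $m = ck\lg N$. The paper's proof (citing Calabro--Impagliazzo--Paturi for a more convenient form of sparsification) is exactly this.

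However, there is a real imprecision in your handling of the constant. You sparsify a ``$k$-CNF SAT'' instance, and then, in obstacle (i), note that the clause blow-up constant ``depends on the clause width $k$ and on $\epsilon$'' but assert that this ``is exactly what the sparsification lemma gives'' for uniformity. That is backwards: because the constant depends on the clause width, starting from width-$k$ CNF makes $c_\epsilon$ grow with $k$, which contradicts the lemma's claim of a \emph{fixed} constant (and this matters downstream, e.g.\ in Theorem~\ref{thm:seth-to-ov} where $c$ is treated as independent of $K$). The fix, which the paper makes explicit, is to keep the clause width at $3$: rETH already asserts $r_3 > 0$, so sparsify $3$-SAT with a sufficiently small $\epsilon < r_3$ to get $m = cn$ with $c$ depending only on $\epsilon$ (not on $k$), and then split the $n$ variables into $k$ blocks to build the $k$-OV instance of dimension $cn = ck\lg N$. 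Fixing $\epsilon$ once (rather than ``$\epsilon \to 0$'') is also the clean way to phrase the union over the $2^{\epsilon n}$ sparsified sub-instances. Your padding observation for the monotonicity claim ``$\forall c\geq c_\epsilon$'' is correct and is a detail the paper leaves implicit.
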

\begin{proof}
By rETH $3$-SAT requires $2^{\Theta(n)}$ time. We use the Sparsification Lemma from Calabro, Impagliazzo and Paturi \cite{CalabroIP06} to state that if $3$-SAT requires $2^{\Theta(n)}$ time then $3$-SAT with at most $m=(6/\epsilon)^{9}n$ clauses requires $2^{\Theta(n) - \epsilon n}$ time\footnote{The previous sparsification lemma is also sufficient for this result \cite{ImpagliazzoPZ98}. However, we actually use the more efficient version for convenience.  }. So we can say that there is some constant $c$ such that $3$-SAT  requires $2^{\Theta(n)}$ time on formulas of size $m = c n$. 

Now we can use Williams' reduction from sparsified $K$-SAT
to $k$-OV \cite{Williams05}, where we just apply it for $K=3$. 
 In this reduction we create $N = 2^{n/k}$ vectors of length $d=m=cn$. 
When written in terms of $N$ we get that $d = ck\lg(N)$, and $k$-OV for $N$ vectors of dimension $ck\lg (N)$ requires $2^{\Theta(n)}$ time under rETH.
\end{proof}

Now we show that $k$-XOR and $k$-SUM are hard on small range numbers, under the $k$-XOR and $k$-SUM hypothesis respectively. 
\begin{lemma}
If the $k$-SUM hypothesis is true then the $k$-SUM problem on numbers in the range $[-2n^k,2n^k]$ ($k\lg(n)+2$ bit numbers) requires $n^{\lceil k/2 \rceil -o(1)}$ time.

If the $k$-XOR hypothesis is true then $k$-XOR problem on $k\lg(n)+2$ bit numbers requires $n^{\lceil k/2 \rceil -o(1)}$.

\label{lem:lengthForkSumkXOR}
\end{lemma}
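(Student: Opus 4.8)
The plan is, in each case, to exhibit a randomized fine-grained reduction from the version of the problem that the hypothesis posits hard to the specific bounded version in the statement, with only an $n^{o(1)}$ (in fact $\poly(\log n)$) multiplicative overhead, so that an $n^{\lceil k/2\rceil-\epsilon}$ algorithm for the latter would refute the hypothesis.

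For $k$-XOR I would argue as follows. The $k$-XOR hypothesis gives hardness for lists of $n$ vectors of some dimension $d=O(\log n)$. If $d\le k\lg n+2$, pad every vector with zeros to dimension exactly $k\lg n+2$; this is solution-preserving, so we are done. If $d>k\lg n+2$, sample a uniformly random $\mathbb F_2$-linear map $L:\mathbb F_2^{d}\to\mathbb F_2^{k\lg n+2}$ (a uniform $(k\lg n+2)\times d$ matrix over $\mathbb F_2$) and replace every vector $v$ by $Lv$. Linearity guarantees that every genuine solution $v_1\oplus\cdots\oplus v_k=0$ remains a solution. A non-solution tuple has nonzero XOR $x$, and $\Pr_L[Lx=0]=2^{-(k\lg n+2)}=\tfrac14 n^{-k}$; union-bounding over the at most $n^k$ tuples, with probability $\ge 3/4$ no non-solution becomes a ``false'' solution, so the hashed instance is a YES-instance iff the original one is. Running the hypothetical $n^{\lceil k/2\rceil-\epsilon}$-time algorithm on the hashed instance, repeating $O(1)$ times with independent $L$ and answering NO unless all runs answer YES, solves the hypothesis-hard instance in time $n^{\lceil k/2\rceil-\epsilon+o(1)}$ with constant probability, a contradiction. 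Hence $k$-XOR on $(k\lg n+2)$-bit vectors requires $n^{\lceil k/2\rceil-o(1)}$.

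For $k$-SUM I would do the analogous modular reduction. The integers made hard by the $k$-SUM hypothesis are polynomially bounded in $n$; when they already lie in $[-n^k,n^k]$ (the standard formulation) there is nothing to prove, since $[-n^k,n^k]\subseteq[-2n^k,2n^k]$ and such integers fit in $k\lg n+2$ bits. In general, pick a uniformly random prime $p$ of roughly $k\lg n$ bits, replace each input number $a$ by its residue in $\{0,\dots,p-1\}$, and note that a genuine solution now sums to $tp$ for some $t\in\{0,\dots,k-1\}$ (the residue-sum lies in $[0,kp)$); so for each such $t$ we solve the instance asking for a $k$-tuple summing to exactly $tp$, obtained by subtracting $tp$ from every element of one list. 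All numbers occurring have magnitude $O(kp)=O(n^k)$, i.e.\ $k\lg n+O(1)$ bits. A genuine solution is detected by the query for the right $t$; a spurious detection requires a non-solution $k$-tuple whose (nonzero, polynomially bounded) integer sum is a multiple of $p$, and since such a sum has only $O(1)$ prime factors exceeding $p$ while the chosen range contains $\Theta(p/\lg p)$ primes, a union bound over the at most $n^k$ non-solution tuples keeps the failure probability small; choosing $p$ a $\polylog$-factor larger (or repeating with independent primes) drives it below $1/4$. As before this refutes the $k$-SUM hypothesis, giving the stated lower bound.

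The main obstacle is the false-positive bookkeeping. For $k$-XOR it is painless because $2^{k\lg n+2}=4n^k$ matches the tuple count exactly; for $k$-SUM, killing all $n^k$ potential false positives forces the modulus (hence the residues) slightly above $n^k$, so reaching precisely the range $[-2n^k,2n^k]$ rather than $[-O(n^k\,\polylog n),O(n^k\,\polylog n)]$ hinges on the $k$-SUM hypothesis already being stated for integers of magnitude $O(n^k)$ (equivalently $k\lg n+O(1)$ bits), and on routing the $O(k)$ target shifts $tp$ through a single list without blowing the bit budget. I expect verifying this union bound and the bit-accounting for the $k$-SUM direction to be the only non-routine part.
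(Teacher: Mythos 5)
Your $k$-XOR argument is essentially identical to the paper's: both replace each $d$-bit vector $v$ by $Rv$ for a uniformly random $\{0,1\}^{d\times(k\lg n+2)}$ matrix $R$ over $\mathbb{F}_2$, observe that genuine solutions survive by linearity, and union-bound the $\le n^k$ non-solutions against a per-tuple collision probability of $2^{-(k\lg n+2)}$.

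For $k$-SUM the approaches genuinely differ. The paper applies Dietzfelbinger's ``almost linear'' multiplication-shift hash directly from the original universe into a range of size $R=4n^k$, relying on the fact that its per-pair collision probability is $O(1/R)$ irrespective of the source universe; this lands numbers squarely in $[-2n^k,2n^k]$ with expected false positives $n^k/R\le 1/4$. You instead reduce modulo a random prime $p$ and then enumerate the $k$ target shifts $tp$. That is a perfectly legitimate (and classical, Karp--Rabin-style) alternative, but, as you yourself flag, it falls slightly short of the stated range: to make the union bound over $n^k$ non-solution tuples go through you must take $p\approx n^k\cdot\poly(k,\log n)$, so after subtracting $tp$ from one list the magnitudes are $O(n^k\,\polylog n)$, i.e.\ $k\lg n+O(\log\log n)$ bits rather than $k\lg n+2$. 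This $o(1)$-bit overhead is harmless for everything the paper does downstream (Lemma~\ref{lem:selfReductionsWithbg} only needs $bg\ge (1+o(1))k\lg n$), but strictly speaking it does not prove the lemma with the constant $2$. A second, smaller caveat: your ``$O(1)$ prime factors exceeding $p$'' step presupposes the original $k$-SUM integers are $n^{O(1)}$-bounded; the Dietzfelbinger route sidesteps this because the $O(1/R)$ collision bound does not degrade with the size of the source universe. If you want your version to match the statement exactly, either switch to a multiplication-shift hash (what the paper does) or add a final deterministic shrinking step; otherwise, note explicitly that you are proving the lemma with a $\polylog$-weakened range, which suffices for all applications.
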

\begin{proof}

In both cases we use hash functions to hash big values to the range $[-2n^{k}, 2n^{k}]$ which preserve linear relationships and introduce few false positives. 

We use the nearly linear hash function of Dietzfelbinger \cite{Dietzfelbinger18} \footnote{The readers can see its use on $k$-SUM in  \cite{Patrascu10} and \cite{Wang14}.}. 
This hash function  has no false negatives (any solution remains a solution) and the expected number of false positives is at most $n^k/R = 1/4$. So, the chance of no false positives is at least $3/4$. 

We show that the same idea works for $k$-XOR. Consider an instance of $k$-XOR with $kn$ vectors of length $d$ ($n$ vectors in each partition). We create this instance as an $kn\times d$ binary matrix $I$ where each row represents one number. 
Let $R$ be a uniformly random matrix from $\{0,1\}^{d \times (k \lg(n)+2)}$, and compute $I'=I\times R \pmod{2}$.

The output of $I'$ can be interpreted as $kn$ vectors each of length $k\lg(n)+2$, where we have ``shrunk" each vector. Note that $I'$ can be seen as a $k$-XOR instance with the new vectors, where each vector is assigned to the partition that the original vector was assigned to. 

Now we observe that all linear relationships between the rows of $I$ are preserved in $I'$, that is if $I[j_1] \oplus I[j_2] \oplus ... I[j_\ell] = \vec{0}$ then $I'[j_1] \oplus I'[j_2] \oplus ... I'[j_\ell] = \vec{0}$, for $j_1,\ldots,j_{\ell}\in \{1,\ldots,kn\}$. So, all witnesses to the original $k$-XOR problem remain witnesses here. The probability that $\ell$ vectors, for any constant $\ell$, in $I'$ sum to the zero vector when the corresponding vectors in $I$ did not is $2^{-k\lg(n)-2}$ (note $k\lg(n)+2$ is the new length of our vectors). If $\ell$ vectors sum to a non-zero value then  note
$$I'[j_1] \oplus ... I'[j_\ell] = \vec{s} = I[j_1]R \oplus ... I[j_\ell]R =  (I[j_1] \oplus ... I[j_\ell])R.$$ 
Now consider any given bit of $\vec{s}$. If the original vectors were non-zero then $\vec{s}[i]$ is a $1$ or $0$ mod $2$ with equal probability. The length of the new vector is $k\lg(n)+2$, so the probability the new sum is the all zeros vector is $2^{-k\lg(n)-2}$.
If we union bound across all $k$-tuples of vectors the probability that we have a false witness is at most $1/4$s. So, if we can solve the $k$-XOR problem on $I'$ with vectors of length $k\lg{n}+2$, then we can solve the $k$-XOR problem on vectors of any length with a reduction that takes time $ndk\lg(n)$, which is the input size multiplied by a sub-polynomial factor. 
\end{proof}

\begin{lemma}
\label{lem:decisionToParity}
If the $k$-OV hypothesis is true then $\paritykOV$ requires $n^{k-o(1)}$. 

If the $k$-XOR hypothesis is true then $\paritykXOR$ requires $n^{\lceil k/2 \rceil -o(1)}$.

If the $k$-SUM hypothesis is true then $\paritykSUM$ requires $n^{\lceil k/2 \rceil -o(1)}$.
\end{lemma}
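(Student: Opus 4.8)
The plan is to give, for each of the three problems, a randomized worst-case reduction from the \emph{detection} version (whose hardness we are assuming) to the \emph{parity} version, with only a $\polylog(n)$ multiplicative overhead and essentially no increase in instance size, so that a parity algorithm faster than the stated bound contradicts the hypothesis. The mechanism is a Valiant--Vazirani style isolation argument: from a detection instance with (unknown) number of solutions $N\le n^{k}$ we build, for each guess $\ell\in\{0,1,\dots,O(k\log n)\}$, an auxiliary instance in which every original solution survives \emph{independently} with probability $2^{-\ell}$; when $2^{\ell}$ is close to $N$, a standard second-moment computation shows the auxiliary instance has exactly one solution with constant probability, and then its answer to the parity problem is $1$. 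The detection algorithm loops over all $O(k\log n)$ guesses, runs $O(\log n)$ independent trials per guess, builds one parity instance per trial, and outputs ``yes'' iff some parity call returns an odd count. Soundness is immediate (a ``no'' instance keeps no solution, so every parity call returns $0$); completeness holds since for the correct guess each independent trial isolates with constant probability, and the trials are independent.

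What changes across the three problems is how to realize the isolating hash on solution tuples cheaply inside the problem. For $k$-XOR this is cleanest: append to every vector in every list a fresh uniformly random block $\rho\in\{0,1\}^{\ell}$; a $k$-tuple XORs to zero in the enlarged instance iff it did so originally \emph{and} the appended blocks XOR to zero, an event of probability exactly $2^{-\ell}$, and since any two distinct tuples differ in some list where the appended blocks are drawn independently, the survival events of two distinct original solutions are \emph{genuinely} pairwise independent; inclusion--exclusion then gives isolation probability $\Omega(1)$ for $\ell$ in the right range, while the $O(k\log n)$ appended bits keep the dimension $O(\log n)$. The $k$-SUM case is analogous: after reducing, via Lemma~\ref{lem:lengthForkSumkXOR}, to $O(k\log n)$-bit numbers, one appends a random high-order block and asks that the blocks of a tuple sum to a fixed target modulo $2^{\ell}$ --- legitimate because, exactly as in that hashing step, $k$-SUM over $\Z$ reduces to $k$-SUM over a group with a $\Z_{2^{\ell}}$ factor --- and again distinct tuples differ in a list where the appended blocks are independent, so survival is pairwise independent.

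The $k$-OV case is the one I expect to be the real obstacle, because orthogonality is an AND of ORs and does not natively encode a linear hash. My plan there is to isolate by adjoining $L=\Theta_{k}(\log n)$ fresh ``filter'' coordinates, each giving every vector (in every list) an independent uniform bit: a $k$-tuple stays orthogonal in a given new coordinate iff its $k$ chosen bits are not all $1$, so it survives all $L$ new coordinates with probability $(1-2^{-k})^{L}$, which we tune to $\approx 1/N$. These coordinates are only \emph{almost} pairwise independent --- two solutions sharing some of their $k$ vectors are positively correlated --- but the per-coordinate correlation blow-up is only $1+2^{-\Theta(k)}$, while the number of solution pairs agreeing in at least one coordinate is at most $k\cdot n^{2k-1}$, a factor $n^{-1}$ below the generic $n^{2k}$. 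Since $L=\Theta_{k}(\log n)$, the accumulated blow-up $(1+2^{-\Theta(k)})^{L}=n^{o_{k}(1)}$ is dominated by this $n^{-1}$ gap, so the second moment of the number of survivors stays $O(1)$ and isolation goes through (for the smallest constant values of $k$ one rechecks the estimate with explicit constants; $k$ is a constant throughout). Combining everything, a $\paritykOV$ (resp.\ $\paritykSUM$, $\paritykXOR$) algorithm running in time $n^{k-\eps}$ (resp.\ $n^{\lceil k/2\rceil-\eps}$) would solve detection in time $n^{k-\eps+o(1)}$ (resp.\ $n^{\lceil k/2\rceil-\eps+o(1)}$), contradicting the $k$-OV (resp.\ $k$-SUM, $k$-XOR) hypothesis.
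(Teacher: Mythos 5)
Your proof takes a genuinely different route from the paper's. The paper uses a problem-agnostic list sub-sampling: guess exponents $j_1,\dots,j_k\in[1,\lg n{+}1]$, delete each element of $L_i$ independently with probability $1-2^{-j_i}$, repeat $\lg^2 n$ times per guess, and declare ``yes'' if any parity call returns odd. You instead try to embed a Valiant--Vazirani hash directly into the instance by appending fresh random hash coordinates to the vectors (or a fresh random block to the numbers). This is clean for $\paritykXOR$, where the appended blocks give a genuinely pairwise-independent hash on solution tuples, exactly as you say. For $\paritykSUM$ the idea is salvageable but the modular-to-integer bookkeeping must be spelled out: the sum of $k$ fresh blocks from $[0,2^\ell)$ lies in $[0,k(2^\ell-1)]$, so the event ``$\sum r_{a_i}\equiv 0 \pmod{2^\ell}$'' corresponds to $\approx k$ possible integer targets, and the $\paritykSUM$ oracle sums over $\Z$, not modulo $2^\ell$; you have to enumerate the $k$ possible overflows (e.g.\ offset the last list by $t\cdot 2^\ell$ for each $t\in[0,k)$ and XOR the $k$ parities). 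This is fixable and not the main problem.

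The $\paritykOV$ case is the real gap, and it is not just a constant-factor issue. You must take $L$ with $(1-2^{-k})^L = 2^{-\ell}$, which forces $L=\Theta(\ell\cdot 2^k)$, \emph{not} merely $\Theta(\log n)$; the hidden $2^k$ in $L$ exactly cancels the $2^{-\Theta(k)}$ in the per-coordinate correlation factor, so the accumulated blow-up is polynomial in $n$, not $n^{o_k(1)}$. Concretely, for two solutions sharing $j=k-1$ of their $k$ vectors, the per-coordinate ratio is $\Pr[\text{both survive}]/\Pr[\text{survive}]^2 = (1-2\cdot 2^{-k}+2^{-(k+1)})/(1-2^{-k})^2 \approx 1+2^{-k-1}$, and raising this to the $L\approx 2^k\ln N$ power gives $\exp(\tfrac12\ln N)=\sqrt{N}$. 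Take $k=2$, $L_1=\{0^d\}$, and $L_2$ any $n$ vectors: there are $N=n$ solutions, all sharing their $L_1$ vector, hence $\Theta(n^2)$ correlated pairs, each contributing $\approx N^{-2}\sqrt{N}$ to the second moment, for a total of $\Theta(\sqrt n)$ — isolation fails. Your count of $k\cdot n^{2k-1}$ correlated pairs (an $n^{-1}$ saving versus $n^{2k}$) does not help, because the relevant comparison is against the $N^2$ pairs entering the second moment, and essentially all $N^2$ of them can be correlated while $N$ is only $n$. The paper's sequential list sub-sampling sidesteps exactly this: it isolates one useful element of $L_1$, then one of $L_2$ conditioned on the first, and so on, so the overlap structure of solution tuples never enters. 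Since OV's orthogonality test has no linear structure to hash against, I'd expect the OV gadget to need a qualitatively different idea, not just sharper constants.
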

\begin{proof}
These are standard folklore reductions, however we present them here for completeness. First, we check if there are a huge number of solutions. Second, we sub-sample the input and ask the parity solver for the parity of the input. If there are no solutions then the parity solver will always return even parity. We will argue that with good probability we will return odd parity if there are any solutions. 

For each of $k$-OV, $k$-SUM and $k$-XOR we are given $k$ lists $L_1, \ldots, L_k$ of either numbers or  vectors, call this an instance $I$. Let $S_i(I)$ be the set of numbers or vectors in list $L_i$ that appear in at least one solution the problem. Let $C_i(I) = |S_i(I)|$.

Now, our goal is to sub-sample the lists in such a way that we are left with a single solution. The sampling procedure is as follows. Consider a tuple $(j_1,\ldots, j_k) \in [1,\lg(n)+1]^k$. For each list $L_i$ we will sub-sample a list deleting each entry with probability $1-2^{-j_i}$. We do this $\lg^2(n)$ times for each $k$-tuples $(j_1,\ldots, j_k) \in [1,\lg(n)+1]^k$ (note this produces only $O(\lg^{k+2}(n) =n^{o(1)})$ instances). We then run a parity counter on each instance we produce, if any parity counter returns odd we return that there exists a solution (note the parity can never be odd if there were zero solutions to start with). If all return even then we return that there is no solution. Because we make $n^{o(1)}$ calls to instances, if this reduction is correct, it implies that up to sub-polynomial factors the running time of parity counting versions of the problems are at least the running time  of the decision versions.

Let us argue correctness. We will imagine the sampling procedure as happening to each list in order to argue about the probability of a single solution being left. Let $c_1 =C_1(I)$. 
Now consider all the calls we make where $\lfloor \lg(c_1) \rfloor \geq j_1 \geq \lceil \lg(c_1) \rceil$. 
The probability that our sub-sampling procedure leaves exactly one element of $S_i(I)$ is $c_1 2^{-j_1} (1-2^{-j_1})^{c_1-1} > \frac{1}{2} \cdot \frac{1}{4}^{2}$. So with probability at least $2^{-5}$ we retain exactly one element of $S_i(I)$. Call this new instance with just $L_1$ sub-sampled $I_1$. Now we proceed as follows, imagine we are given an instance $I_{i-1}$ where the first $i-1$ lists have only one element that participates in a solution, i.e. $C_\ell(I_{i-1})=1$ for all $\ell \in [1,i-1]$. Now let $c_i = C_i(I_{i-1})$ (Note this count is on our new instance $I_{i-1}$ where the first $i-1$ lists have been sub-sampled). Once again consider $j_i$ where $\lfloor \lg(c_i)+1 \rfloor \geq j_i \geq \lceil \lg(c_i)+1 \rceil$. The probability that our sampling leaves exactly one element of $S_i(I_{i-1})$ is once again $c_1 2^{-j_i} (1-2^{-j_i})^{c_i-1} >2^{-5}$. Note that the values of interest for the above argument for $j_2,\ldots, j_k$ depend on which elements were sampled so far. Now consider $I_k$, an instance where we have sampled such that there is exactly one solution per list, this leaves exactly one solution. Such an instance is sampled with probability at least $2^{-5k}$, for the correct set of $j_1,\ldots, j_k$, if there were any solutions in the originally instance $I$. We sample for each set of $j_i$ $\lg^2(n)$ times, so the chance we find a solution is $2^{-\Omega(\lg^2(n))}$. So, we have correctness, finishing our reduction. 
\end{proof}






\subsection{Framework Definitions}
\begin{definition}[Polynomial extension]
Let $f : \{0,1\}^t\rightarrow \mathbb{Z}$ be any function. 
A $t$-variate polynomial $g$ over $\mathbb{Z}$ is said to be an \textit{extension} of $f$ if $g$ agrees with $f$ at all Boolean-valued inputs, i.e., $g(x) = f(x)$ for all $x \in\{0,1\}^t$. For a problem $P$ with $t$ zero-one inputs and an output in $\mathbb{Z}$, the extension of $P$ is defined similarly. 
\end{definition}

The following definitions are inspired by \cite{factoredProblems}, where all functions and problems are considered modulo some prime $p$. Here, we first re-define the notions in \cite{factoredProblems} without use of any field $\mathbb{F}_p$, and then relax these definitions.

\begin{definition}[(Modified) Strongly $d$-partite polynomial \cite{factoredProblems}]
Let  $f$ be a polynomial in $\mathbb{F}[x_1,\ldots,x_n]$.
 We say $f$ is \textit{strongly $d$-partite} if one can partition the inputs $\{x_1,\ldots,x_n\}$ into $d$ sets $S_1,\ldots, S_d$ such that $f$ can be written as a sum of monomials $x_{i_1}\cdots x_{i_d}$ 
 of degree \emph{exactly} $d$ with coefficients in $\mathbb{F}$ where every variable $x_{i_j}$ is from the partition $S_j$. 
That is, if there is a monomial $x_{i_1}^{c_1} \cdots x_{i_k}^{c_k}$ in $f$ then it must be that for all $j$, $c_j =1 $ and for all $j \ne \ell$ if $x_{i_j} \in S_m$ then $x_{i_\ell} \notin S_m$. 
\end{definition}

This definition is equivalent to the definition of a set-multilinear polynomial in circuit complexity \cite{github}. 

Next, we will define a $d$-partite polynomial. This is a weaker definition than the strongly $d$-partite polynomials defined by \cite{factoredProblems}, in the sense that the monomials are no longer required to have degree exactly $d$. We also focus on polynomials over the integers.

\begin{definition}[$d$-partite polynomial]
Let the polynomial $f$ be polynomial in $\mathbb{Z}[x_1,\ldots,x_n]$.
We say $f$ is \textit{$d$-partite} if one can partition the inputs into $d$ sets $S_1,\ldots, S_d$ such that $f$ can be written as a linear combination of monomials of the form $x_{i_1}\cdots x_{i_{\ell}}$ where $\ell \leq d$,
where for all all $p \neq q$ and $p,q \in [1,\ell]$, the variables $x_{i_p}$ and $x_{i_q}$ are from {\em different} partitions $S_{j_p}, S_{j_q}$, $j_p\neq j_q$, $j_p,j_q\in [d]$.
%
\end{definition}

We will define $P(\vec{v})$ to be the correct output for problem $P$ given input $\vec{v}$. Note this is only well defined when $\vec{v} \in \{0,1\}^n$. Now we will define a version of the \goodPoly~from \cite{factoredProblems} where $d$, the degree of the polynomial, is an explicit parameter. This is because unlike \cite{factoredProblems}, we don't want to restrict $d$.


\begin{definition}[(Modified) Good Low Degree Polynomial]
Let $P$ be a problem taking in $n$-bit inputs and outputting an integer in the range $[-M,M]$ where $M$ is a integer and $M< n^c$ for some constant $c$. 
A \emph{\goodDPoly} for $P$  
is a polynomial $Q$ over $\mathbb{Z}$ where:
\begin{itemize}
    \item 
    The polynomial $Q$ is an extension of $P$.
    \item The polynomial $Q$ is strongly $d$-partite. 
\end{itemize}
\label{def:good-d-Poly}
\end{definition}

The definition of good low degree polynomials in \cite{factoredProblems} is over a prime field $\mathbb{F}_p$ instead of $\mathbb{Z}$. \cite{factoredProblems} showed that all factored problems have good $d$ degree polynomials for an appropriate $d$ and prime numbers $p$. However they actually show that there are polynomials that are strongly $d$-partite and their values matches that of the problem \emph{exactly}, and not modulo prime $p$. We restate this result below, and in this paper by good low degree polynomial we refer to Definition \ref{def:good-d-Poly}.

\begin{lemma}
\label{lem:factored-gdlp}
\cite{factoredProblems} For any integers $n,g,b>0$ with $b=o(\log{n})$ and function $f$, \ckfunc$(n,g,b)$ has good $gk$-degree polynomial.
\end{lemma}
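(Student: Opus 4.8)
The plan is to exhibit an explicit polynomial $Q$ that computes $Fk$-$\mathfrak{f}(V_1,\dots,V_k)$ exactly over $\mathbb{Z}$ and then check that it satisfies the two bullet points of Definition \ref{def:good-d-Poly}. First I would fix the variable set: for each list $i\in[k]$, each factored vector $\vec{v}\in V_i$, each set index $j\in[g]$, and each $b$-bit string $w\in\{0,1\}^b$, introduce a Boolean indicator variable $x^{(i)}_{\vec v,j,w}$ that is $1$ iff $w\in\vec v[j]$. This gives $n$-bit inputs in the sense of the problem (up to the $n\cdot g\cdot 2^b = n^{1+o(1)}$ blow-up allowed since $b=o(\log n)$). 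The partition into $gk$ sets $S_{(i-1)g+j}$ is the natural one: $S_{(i-1)g+j}$ contains exactly the variables $x^{(i)}_{\vec v,j,w}$ over all $\vec v\in V_i$ and all $w$; there are $gk$ such sets.

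Next I would write down $Q$. For a fixed tuple $(\vec v_1,\dots,\vec v_k)\in V_1\times\cdots\times V_k$ and a fixed column index $j\in[g]$, the number of ways to pick $w_1\in\vec v_1[j],\dots,w_k\in\vec v_k[j]$ with $\mathfrak f(w_1,\dots,w_k)=0$ equals
\[
\sum_{\substack{w_1,\dots,w_k\in\{0,1\}^b\\ \mathfrak f(w_1,\dots,w_k)=0}} \prod_{i=1}^k x^{(i)}_{\vec v_i,j,w_i},
\]
since each product is $1$ precisely when all the $w_i$ lie in the respective sets. Taking the product over $j\in[g]$ and summing over all tuples $(\vec v_1,\dots,\vec v_k)$ gives
\[
Q \;=\; \sum_{\vec v_1\in V_1}\cdots\sum_{\vec v_k\in V_k}\ \prod_{j=1}^{g}\ \sum_{\substack{w_1,\dots,w_k\in\{0,1\}^b\\ \mathfrak f(w_1,\dots,w_k)=0}}\ \prod_{i=1}^k x^{(i)}_{\vec v_i,j,w_i}.
\]
By the displayed definition of $Fk$-$\mathfrak f$ in the excerpt, $Q$ evaluated at any Boolean input equals $Fk\text{-}\mathfrak f(V_1,\dots,V_k)$ exactly (no modular reduction), so $Q$ is an extension of the problem; this is the first bullet.

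For the second bullet I would expand $Q$ into monomials and check strong $gk$-partiteness. Expanding the product over $j$ and the inner sums, every monomial of $Q$ has the form $\prod_{j=1}^g\prod_{i=1}^k x^{(i)}_{\vec v_i,j,w_i^{(j)}}$ — a product of exactly $gk$ distinct variables, one from each set $S_{(i-1)g+j}$, with coefficient $+1$ (after collecting, coefficients are nonnegative integers, which is fine). In particular each monomial has degree exactly $gk$ and hits each partition exactly once, so $Q$ is strongly $gk$-partite, and hence a good $gk$-degree polynomial for $Fk$-$\mathfrak f(n,g,b)$. The only mild subtlety — the step I expect to need the most care — is bookkeeping the input encoding: confirming that the indicator-variable representation is a legitimate $\{0,1\}^n$-input for the problem and that the size blow-up stays $n^{1+o(1)}$ under $b=o(\log n)$, and checking that distinct $j$'s contribute variables from genuinely distinct partitions (which holds because the partition index $(i-1)g+j$ is injective in $(i,j)$ for $i\in[k]$, $j\in[g]$). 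Everything else is a direct unwinding of definitions, matching the treatment in \cite{factoredProblems}.
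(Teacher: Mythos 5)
Your proposal is correct and is essentially the standard construction from \cite{factoredProblems} that the paper cites; the paper itself gives no separate proof of this lemma, only the citation. Your indicator-variable polynomial $Q$, the partition $S_{(i-1)g+j}$, and the expansion argument for strong $gk$-partiteness are exactly right; distinct tuples $(\vec v_1,\dots,\vec v_k,w_1^{(1)},\dots,w_k^{(g)})$ even give distinct monomials (the indices $i,j,\vec v_i,w_i^{(j)}$ can all be read off a monomial), so the coefficients are $0$ or $1$. The only slip is cosmetic: the number of indicator variables is $k\cdot n\cdot g\cdot 2^b$ rather than $n\cdot g\cdot 2^b$, which is still $n^{1+o(1)}$ for constant $k,g$ and $b=o(\log n)$, so nothing in the argument changes.
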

We are going to use Lemma \ref{lem:factored-gdlp} as a middle step in a lot of our reductions. 

Next we will define {\em \okPoly}, which is a less restricted version of \goodDPoly. 
We will use this later in our reduction to CNF-SAT. 
We have reduced the constraints on the function by switching from strongly $d$-partite to simply $d$-partite.

\begin{definition}
Let $P$ be a problem taking in $n$-bit inputs and outputting an integer in the range $[-M,M]$ where $M$ is a integer and $M< n^c$ for some constant $c$.
A \emph{\okPoly}~for problem $P$ is a polynomial $Q$ over $\mathbb{Z}$ where:
\begin{itemize}
    \item 
    The polynomial $Q$ is an extension of $P$.
    \item The polynomial $Q$ is $d$-partite. 
\end{itemize}
\label{def:okPoly}
\end{definition}

\section{Framework}
\label{sec:frameworkBetter}

In this section our inputs are drawn from the uniform distribution over all inputs. 
%
%
First we prove a theorem that holds when a problem $P$ has all outputs in $\{0,1\}$ (as happens with parity). 


\begin{restatable}{theorem}{frameworkparity}
\label{thm:framework-parity}
Let $P$ be a problem that takes an input $I\in\{0,1\}^n$ and has a binary output, i.e. $P(I)\in \mathbb{Z}_2$.
Additionally, assume that a $d$-degree polynomial $f$ exists such that $P(I)=f(I) \pmod{2}$.  
Let $A$ be an average-case algorithm that runs in time $T(n)$ such that when $\vec{v}$ is sampled uniformly from $\mathbb{Z}_2^n$, then:

$$Pr[A(\vec{v}) = P(\vec{v})] \geq 1-1/2^{d+3}.$$
Then there is a randomized algorithm $B$ that runs in time $O \left( 2^{d+1}(n  + T(n)) \right)$ such that
for \emph{any} $\vec{v} \in \{0,1\}^n$:
$$Pr[B(\vec{v}) = P(\vec{v})] \geq 3/4.$$ 
\end{restatable}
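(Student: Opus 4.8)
\textbf{Proof plan for Theorem \ref{thm:framework-parity}.}

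The plan is to build the algorithm $B$ by a linear-algebraic random self-reduction over $\mathbb{F}_2$ that exploits only the fact that $f$ has degree $d$, not any partiteness structure. The key classical fact I will use is the finite-difference identity for low-degree polynomials: if $f$ has degree at most $d$ over $\mathbb{F}_2$ (or over $\mathbb{Z}$, reduced mod $2$), and $\vec{u}_0,\vec{u}_1,\ldots,\vec{u}_d$ are any $d+1$ vectors in $\mathbb{F}_2^n$, then
$$\sum_{\emptyset \neq S \subseteq \{0,1,\ldots,d\}} f\!\left(\vec{u}_0 + \sum_{i \in S} \vec{u}_i\right) \equiv 0 \pmod 2$$
or, in the cleaner homogeneous form I actually want: for any $\vec{v}$ and any $\vec{u}_1,\ldots,\vec{u}_{d+1}\in\mathbb{F}_2^n$,
$$f(\vec{v}) \equiv \sum_{\emptyset \neq S \subseteq \{1,\ldots,d+1\}} f\!\left(\vec{v} + \sum_{i\in S}\vec{u}_i\right) \pmod 2.$$
This is just the statement that the $(d+1)$-st iterated discrete derivative of a degree-$d$ polynomial vanishes; it follows by expanding $f$ into monomials and checking it on a single monomial of degree $\le d$, where at least one of the $d+1$ directions $\vec{u}_i$ is "unused" and the corresponding telescoping sum cancels in pairs.

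So the algorithm $B$ on input $\vec{v}\in\{0,1\}^n$ does the following. First I sample $\vec{u}_1,\ldots,\vec{u}_{d+1}$ independently and uniformly from $\mathbb{Z}_2^n$. For each of the $2^{d+1}-1$ nonempty subsets $S\subseteq\{1,\ldots,d+1\}$, I form the query point $\vec{v}_S := \vec{v} + \sum_{i\in S}\vec{u}_i \pmod 2$ and run the average-case algorithm $A$ on $\vec{v}_S$, obtaining a bit $a_S$. I output $\bigoplus_{\emptyset\neq S} a_S \pmod 2$. The running time is $O(2^{d+1})$ calls to $A$ plus $O(2^{d+1})$ vector additions of length $n$, giving the claimed $O(2^{d+1}(n+T(n)))$ bound. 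For correctness: the crucial observation is that each individual query point $\vec{v}_S$ is \emph{marginally} uniformly distributed on $\mathbb{Z}_2^n$, because $S$ is nonempty so $\sum_{i\in S}\vec{u}_i$ is a uniform vector independent of... well, it is uniform, which suffices. Hence $\Pr[a_S \neq f(\vec{v}_S)] = \Pr[A(\vec{v}_S)\neq P(\vec{v}_S)] \le 1/2^{d+3}$ by hypothesis. By a union bound over the $2^{d+1}-1 < 2^{d+1}$ subsets, with probability at least $1 - 2^{d+1}/2^{d+3} = 1 - 1/4 = 3/4$, every $a_S$ equals $f(\vec{v}_S) = P(\vec{v}_S)$, and then by the finite-difference identity $\bigoplus_S a_S \equiv f(\vec{v}) \equiv P(\vec{v}) \pmod 2$, so $B$ is correct.

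The main obstacle, and the only genuinely non-routine point, is establishing the finite-difference identity \emph{in the exact form needed} — in particular confirming that it holds with $d+1$ directions (not $d$, not $d+2$) and that the homogeneous "$f(\vec{v}) = $ sum over nonempty $S$" form is the right one, rather than an inclusion-exclusion form that would leave a stray $f(\vec{v})$ term with a problematic sign. Over $\mathbb{F}_2$ signs disappear, so the clean form should go through; I would prove it by linearity, reducing to a single monomial $m(x) = \prod_{j\in T} x_j$ with $|T|\le d$, and showing $\sum_{\emptyset\neq S\subseteq\{1,\ldots,d+1\}} m(\vec{v}+\sum_{i\in S}\vec{u}_i) = m(\vec{v})$ over $\mathbb{F}_2$: expand each factor $x_j \mapsto v_j + \sum_{i\in S}(u_i)_j$, multiply out, and group terms by which directions $\vec{u}_i$ appear; since $|T|\le d$ there is always at least one index $i^\star\in\{1,\ldots,d+1\}$ not forced to appear, and summing over whether $i^\star\in S$ pairs up and cancels everything except the term $m(\vec{v})$ from $S$ ranging over subsets avoiding... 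I will need to set this up carefully, but it is standard. A secondary small point is double-checking the union-bound arithmetic: the hypothesis gives error $1/2^{d+3}$ per query and there are strictly fewer than $2^{d+1}$ queries, so total failure probability is strictly below $2^{d+1}\cdot 2^{-(d+3)} = 1/4$, matching the required $\ge 3/4$ success. Everything else is bookkeeping.
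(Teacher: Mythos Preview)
Your proposal is correct and follows essentially the same approach as the paper: sample $d+1$ uniformly random vectors, query $A$ at $\vec{v}+\sum_{i\in S}\vec{u}_i$ for all nonempty $S\subseteq[d+1]$, XOR the answers, and combine the finite-difference identity for degree-$d$ polynomials with a union bound over the $2^{d+1}-1$ marginally uniform queries. The paper establishes the identity by a slightly different bookkeeping (showing $T_m(\vec{0},\vec{y}_1,\ldots,\vec{y}_{d+1})=0$ for each monomial $m$ via a count of solutions to a linear system), but this is equivalent to your iterated-derivative/pairing argument.
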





\begin{proof}
Let us say that we are given an input $\vec{v}=(v_1,\ldots,v_n)$ on which we want to compute $P(\vec{v})$. All summations in this proof are taken mod $2$.  
We will describe the algorithm $B$ given our algorithm $A$.


Consider $d+1$ random vectors $\vec{y_1},\ldots,\vec{y}_{d+1}\in\{0,1\}^n$, where each bit of every $\vec{y_i}$ is taken uniformly at random. We will describe $2^{d+1}-1$ inputs $\vec{u_S}$, where $\emptyset \neq S\subseteq [d+1]$. Let 
$$
\vec{{u}_S}  = 
\vec{v} + \sum_{i\in S}\vec{y_i} 
$$
Now we define the output of $B(\vec{v})$ as
$$
B(\vec{v}) \equiv \sum_{\emptyset\neq S\subseteq{[d+1]}} A(\vec{u_S}) \pmod{2}
$$




\paragraph{Claim: if all $2^{d+1}$ calls to $A$ return the correct values then $B(\vec{v}) = P(\vec{v})$.}

For any polynomial $g$, integer $k$ and vectors $\vec{u},\vec{x_1},\ldots,\vec{x_{k}}\in \{0,1\}^n $, we define $T_g(\vec{u},\vec{x_1},\ldots,\vec{x_{k}})=\sum_{\emptyset\neq S\subseteq [k]} g(\vec{u}+\sum_{i\in S}\vec{x_i})$. So we have $B(\vec{v})\equiv \sum_{\emptyset\neq S\subseteq{[d+1]}} A(\vec{u_S})= \sum_{\emptyset\neq S\subseteq{[d+1]}} f(\vec{u_S})=T_f(\vec{v},\vec{y_1},\ldots,\vec{y}_{d+1})$. We want to show that $T_f(\vec{v},\vec{y_1},\ldots,\vec{y_{d+1}})=f(\vec{v})$.

Let $m$ be any polynomial with one monomial and $n$ inputs and degree $1\le k\le d$. We show that $T_m(\vec{0},\vec{y}_1,\ldots,\vec{y}_{d+1})=0$. To see this, we want the number of linear combinations $\sum_{i=1}^{d+1}b_i\vec{y}_i$ for $b_i\in \{0,1\}$, where $m(\sum_{i=1}^{d+1}b_i\vec{y}_i)\equiv 1$ to be even (each selection of $b_i$s maps to a set $S$). This number is the number of solutions of a linear system of $k$ equations in $d+1$ variables, and so is divisible by $2^{d+1-k}$, so it is even. Note that $b_i=0$ for all $i$ is not a valid answer to this linear system, and so we don't count it. 

Now let $m:=m(r_1,\ldots,r_n)=\prod_{i\in L}r_i$ be one of the monomials of $f$, for variables $r_i\in\{0,1\}$ and $L\subset [n]$ where $|L|\le d$. If we prove that $T_{m}(\vec{v},\vec{y_1},\ldots,\vec{y_{d+1}})=m(\vec{v})$, then summing these equalities for all monomials $m$, we get that $T_f(\vec{v},\vec{y_1},\ldots,\vec{y_{d+1}})=f(\vec{v})$.

To prove that $T_{m}(\vec{v},\vec{y_1},\ldots,\vec{y_{d+1}})=m(\vec{v})$, in the polynomial  $m(\vec{u}_S)=m(\vec{v}+\sum_{i\in S}\vec{y_i})$ we consider $\vec{v}$ as fixed and $\vec{y_i}$s as variables, so we can write $m(\vec{u_S})$ as follows:
$$
m(\vec{u_S}) = \sum_{L'\subseteq L} a_{L\setminus L'}m_{L'}(\sum_{i\in S}\vec{y_i}),
$$

where $m_{L'}(r_1,\ldots,r_n):=\prod_{i\in L'}r_i$, and $a_{L\setminus L'}=\prod_{i\in L\setminus L'} \vec{v}[i]$. For $L'=\emptyset$, we define $m_{L'}$ as the fixed value monomial $1$.

Now note that $\sum_{\emptyset\neq S\in [d+1]}m_{L'}(\sum_{i\in S}\vec{y_i})=T_{m_{L'}}(\vec{0},\vec{y}_1,\ldots,\vec{y}_{d+1})$ which is zero for $L'\neq \emptyset$, and is $|S|$ for $L'=\emptyset$. So
\begin{align*}
   T_{m}(\vec{v},\vec{y_1},\ldots,\vec{y_{d+1}})&\equiv \sum_{\emptyset\neq S\subseteq [d+1]} m(\vec{u}_S) \\
   &\equiv \sum_{\emptyset\neq S\subseteq [d+1]} \sum_{L'\subseteq L}a_{L\setminus L'}m_{L'}(\sum_{i\in S}\vec{y_i}) \\
   &\equiv \sum_{L'\subseteq L} a_{L\setminus L'} \sum_{\emptyset\neq S\subseteq [d+1]} m_{L'}(\sum_{i\in S}\vec{y}_i)\\
   &\equiv \sum_{L'\subseteq L} a_{L\setminus L'} T_{m_{L'}}(\vec{0},\vec{y}_1,\ldots,\vec{y}_{d+1})\\
   &\equiv |S|a_{L}\\
   &\equiv a_{L} = m(\vec{v}).
\end{align*}

Thus, $B(\vec{v}) = f(\vec{v})=P(\vec{v})$ when all of our calls to $A$ return correctly.




\paragraph{Claim: the probability that all $2^{d+1}$ calls to $A$ return correct values is at least $3/4$.} First, note that for all $\emptyset \neq S$, $\vec{u}_S$ looks like each bit was chosen uniformly at random iid from $Ber[1/2]$. Of course, $\vec{u}_S$ and $\vec{u}_{S'}$ can be very correlated, but each looks iid from $Ber[1/2]$. This is true because sum of iid bits is uniformly random and so each bit of $v$ is XORed with a random value.

Second, we can use the union bound. The probability $A$ errs is at most $2^{-d-3}$. We make $2^{d+1}$ calls, so the probability that $A$ is wrong at least once is at most $2^{-2} = 1/4$. So, all of our $2^{d+1}$ calls will be correct with probability at least $3/4$.

\paragraph{Analyzing $B$.}
So, if $A$ runs in $T(n)$ time then $B$ takes $O(2^{d+1}n +2^{d+1}T(n))$ time, we need to form each of the $2^{d+1}$ inputs and we need to make $2^{d+1}$ calls to $A$. 

Second, $B$ is correct if all of its calls to $A$ give correct answers, and that happens at least $3/4$ of the time. So, as desired,
for \emph{any}  $\vec{v} \in \{0,1\}^n$:
$$Pr[B(\vec{I}) = P(\vec{v})] \geq 3/4.$$

\end{proof}



In the following, we extend Theorem~\ref{thm:framework-parity} to hold for problems that have their outputs over the integers in $[-M,M]$.
\frameworkgeneral*

\begin{proof}
Our approach is similar to Theorem \ref{thm:framework-parity}, with a few modifications. Let $M$ be the range of the fine $d$-degree polynomial $f$ of $P$, so that $f(I)\le M$ for any input instance $I$.  Let $z=\ceil{\log_2 M}$ so that $2^z\ge P(\vec{v})$ for all $\vec{v}\in \mathbb{Z}_2^n$. Note that $M\le p\cdot n^d$, so $\log_2 M\le d\log_2 n$

Suppose that we are given an input $\vec{v}=(v_1,\ldots,v_n)$ on which we want to compute $P(\vec{v})$. Let $part(v_i)$ be the partition that $v_i$ belongs to. Note that it suffices to compute $P(\vec{v})$ mod $2^z$. Consider $n$ random numbers $r_1,\ldots,r_n$, each having $t=z+d$ bits. We will describe $2^d$ inputs $\vec{u_s}$ where each input is indexed by a unique string $s=\{0,1\}^d$. 
Let 
$${u}_i  =  \begin{cases}
-{v}_i + {r}_i, & \text{ if } s[part(v_i)]=1  \\
-{v}_i - {r}_i. & \text{ if } s[part(v_i)]=0
\end{cases}$$
where $u_i$ is a $t$ bit number, so all operations are mod $2^t$. Note that since $v_i$ is a one-bit number and $r_i$ is a $t$-bit number, we consider $v_i$ as a $t$-bit number by $t-1$ zeros to the $t-1$ significant bit, so that the definition of $u_i$ makes sense. 
Let
$$\vec{u}_s = ({u_1},\ldots,{u_n}).$$

Now for $C$ that we define later, $B(\vec{v})$ will be outputting:
\begin{equation}\label{eq:defineB}
B(\vec{v}) \equiv \frac{1}{2^d}\sum_{s\in \{0,1\}^d} C(\vec{u}_s) \mod 2^z
\end{equation}

To define $C(\vec{u})$ for an input $\vec{u}=(u_1,\ldots,u_n)$ where $u_i$ is a $t$ bit  number, we do the following bit manipulation. Let $u_i[b]\in \{0,1\}$ be the $b^{th}$ bit of $u_i$, for $b\in \{0,\ldots,t-1\}$. Let 
\begin{equation}\label{eq:defineC}
C(\vec{u})=\sum_{(b_1,\ldots,b_d)\in[0,t-1]^d} 2^{b_1+\ldots+b_d}A(u_1[b_{part(u_1)}],\ldots,u_n[b_{part(u_n)}]).
\end{equation}
This completes the definition of $B$.

Note that computing $C$ needs $t^d$ calls to $A$, and we need to compute $C$ for $2^d$ inputs. So overall we have $(2t)^d$ calls to A.
\paragraph{Claim: if all $(2t)^d$ calls to $A$ return the correct value then $B(\vec{v})=P(\vec{v})$}

We first prove that if $f=\sum f_\ell$ where $f_{\ell}$ is the sum of all the monomials of $f$ of degree $\ell$, then  $C(\vec{u})\equiv \sum (-1)^{d-\ell} f_\ell(\vec{u}) \pmod{2^t}$. Consider a monomial $u_{i_1}\ldots u_{i_\ell}$ in $f$, suppose that it is the $j^{th}$ monomial in $f$. For now we suppose that all our calls to $A$ output the correct value, so all the calls to $A$ on the righthand side of Equation \ref{eq:defineC} can be replaced by $f$. So consider the $j^{th}$ monomial of all the terms on the righthand side of Equation \ref{eq:defineC}. WLOG suppose that $part(u_{i_w})=w$. They are of the form $2^{b_1+\ldots+b_\ell}u_{i_1}[b_1]\ldots u_{i_\ell}[b_\ell]2^{b_{\ell+1}+\ldots+b_d}.$ So their sum is $(2^{t}-1)^{d-\ell}u_{i_1}\ldots u_{i_\ell}\equiv (-1)^{d-\ell}u_{i_1}\ldots u_{i_\ell} \pmod{2^t}$. So $C(\vec{u})\equiv \sum (-1)^{d-\ell} f_\ell(\vec{u}) \pmod{2^t}$.

Now we show that there is no monomial in $B$ with an $r_i$ variable in Equation \ref{eq:defineB}. To see this, fix some $i$. For any $s\in \{0,1\}^d$, let $s'[\ell]=s[\ell]$ for all $\ell\neq part(v_i)$, and let $s'[part(v_i)]\neq s[part(v_i)]$. Note that $s''=s$. Now if a monomial contains $r_i$ in $\vec{u}_s$ for some $s$, then $\vec{u}_{s'}$ has the same monomial but negated. 

Now consider a monomial $v_{i_1}\ldots v_{i_\ell}$ in $f(\vec{v})$. We want to find the coefficient of this monomial in $B$. This monomial is in each $C({\vec{u_s}})$ for all $s$. Since its coefficient in $f_{\ell}(\vec{u_s})$ is $(-1)^\ell$, its coefficient in $C({\vec{u_s}})$ is $(-1)^{\ell}\cdot (-1)^{d-\ell}$. So its coefficient is $1$ in $B$ mod $2^z$.

\paragraph{Claim: the probability that all $(2t)^d$ calls to $A$ return correct values is at least $3/4$.}

First note that for each $s$, the $t$ bit numbers $u_i$ look as if each of their bits is chosen uniformly at random iid from $Ber[1/2]$. Note that $\vec{u_s}$ and $\vec{u}_{\tilde{s}}$ can be very correlated, but each looks iid from $Ber[1/2]$. 

Second, we can use the union bound. The probability that $A$ errs is $1-\frac{1}{2^{d+2}(d+z)^d}$. We call $A$ $2^d(z+d)^d$ times, so the probability that $A$ is wrong at least once is at most $2^{-2} = 1/4$. So all calls will be correct with probability $3/4$.

\paragraph{Analyzing $B$.} If $A$ runs in $T(n)$ time, then $B$ takes $O(2^{d}(z+d)^d(n+T(n)))=O((2d+2\log_2M)^d(n+T(n)))$ time. Moreover, $B$ is correct if all of its calls to $A$ give correct answers, and that happens at least $3/4$ of the time. So, as desired, for \textit{any} $\vec{v}\in \{0,1\}^n$: $Pr[B(\vec{v})=P(\vec{v})]\ge 3/4$.
\end{proof}



\section{Reductions Between Factored Problems}
\label{sec:betweenProblem-bg}
In this section we discuss reductions between factored problems. These reductions help us to get hardness results for average case problems from SETH, $k$-XOR and $k$-SUM hypothesis. Dalirrooyfard et al \cite{factoredProblems} have the following generic reduction.

\begin{theorem}
\cite{factoredProblems} Let $k\ge 2$. Then we have
\begin{itemize}
\item A $\oplus$\ckfunc$(n,b,g) \rightarrow \oplus$\ckxor$(n,k^3b,g)$ reduction exists that takes time $O(ng2^{k^3b})$. 
\item A $\oplus$\ckxor$(n,b,g) \rightarrow \oplus$\ckov$(n,2k^3b,g)$ reduction exists that takes time $O(ng2^{2k^3b})$. 
\item A $\oplus$\ckxor$(n,b,g) \rightarrow \oplus$\cksum$(n,(\ceil{\lg{k}}+1)b,g)$ reduction exists that takes time $O(ng2^{b\lg{k}})$. 
\end{itemize}
\end{theorem}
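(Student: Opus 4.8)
The plan is to prove each of the three items by an explicit gadget reduction that operates \emph{section by section} on the factored vectors, leaving the number of factored vectors $n$ and the number of sections $g$ untouched and only enlarging the per‑section bit‑width $b$ (to $k^3b$, $2k^3b$, and $(\lceil\lg k\rceil+1)b$ respectively). Since the output of a factored instance is, for each $k$‑tuple of factored vectors, a product over the $g$ sections of the number of ``zeroing'' choices in that section, summed over all tuples, it suffices to make each reduction \emph{exactly count‑preserving on every section}; parity preservation --- all the $\oplus$ versions need --- then follows immediately, and the stated running times are just the total size of the gadgets we attach, since each original $b$‑bit sub‑vector spawns at most $2^{O(b')}$ new sub‑vectors, where $b'$ is the new width.

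For $\oplus$\ckfunc$(n,b,g)\rightarrow\oplus$\ckxor$(n,k^3b,g)$, I would first note that an arbitrary $\mathfrak f\colon(\{0,1\}^b)^k\to\{0,1\}$ has at most $2^{kb}$ zero‑tuples; list them as $T_1,\dots,T_N$, encoding each index $\ell$ in $\lceil kb\rceil$ bits. In list $i$, replace a sub‑vector $w$ by one new sub‑vector $g_i(\ell)$ for every $\ell$ with $(T_\ell)_i=w$, where $g_i(\ell)$ is a \emph{consistency gadget} made of $\binom{k}{2}$ blocks of $\lceil kb\rceil$ bits, with the block indexed by $\{i,i'\}$ holding the label $\ell$ and all other bits zero. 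Then the bitwise XOR of chosen gadgets $g_1(\ell_1),\dots,g_k(\ell_k)$ is $\vec 0$ exactly when $\ell_1=\dots=\ell_k$, in which case the chosen original sub‑vectors must be the coordinates of one common zero‑tuple of $\mathfrak f$. Since each $g_i$ is injective (the label is recoverable, so there are no collisions), this gives a bijection, on every section, between the new XOR‑zero $k$‑tuples and the old $\mathfrak f$‑zeroing $k$‑tuples; the new width is $\binom{k}{2}\lceil kb\rceil=O(k^3b)$ and building the instance costs $O(ng\,2^{k^3b})$.

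For $\oplus$\ckxor$(n,b,g)\rightarrow\oplus$\cksum$(n,(\lceil\lg k\rceil+1)b,g)$ the key device is a base‑$(k+1)$ digit encoding together with a \emph{carry‑guessing} gadget in one distinguished list. Put $N(w)=\sum_{c=1}^{b}w[c]\,(k+1)^{c-1}$. In lists $2,\dots,k$ replace $w$ by $N(w)$, and in list $1$ replace $w$ by the set $\{\,N(w)-\sum_c 2m_c(k+1)^{c-1}\ :\ m\in\{0,\dots,\lfloor k/2\rfloor\}^{b}\,\}$, so that list $1$ also ``guesses'' the per‑coordinate half‑weight $m_c=\tfrac12\sum_i w_i[c]$. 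The sum of the $k$ chosen numbers equals $\sum_c\bigl(\sum_i w_i[c]-2m_c\bigr)(k+1)^{c-1}$; since every signed digit lies in $[-k,1]\subseteq(-(k+1),k+1)$, the value $0$ has a unique such representation, so the sum vanishes iff every digit is $0$, i.e.\ iff $\bigoplus_i w_i=\vec 0$ and $m$ is the (unique) correct guess --- and distinct $(w,m)$ give distinct signed‑digit strings, so list $1$'s enlarged sets have no collisions. Hence each XOR‑zero tuple corresponds to exactly one SUM‑zero tuple on every section; the width is $(\lceil\lg k\rceil+1)b$ and the cost is $O(ng\,2^{b\lg k})$. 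The reduction from $\oplus$\ckxor to $\oplus$\ckov is analogous: turn the per‑coordinate parity constraint $\bigoplus_i w_i[c]=0$ into a conjunction of ``not‑all‑ones'' (orthogonality) constraints by attaching, for each original coordinate, a small $k$‑dependent block of OV coordinates that jointly forbid exactly the odd‑weight configurations --- e.g.\ one OV coordinate per odd‑support pattern $T\subseteq[k]$, on which list $i$ stores the bit ``$w_i[c]=\mathbf 1[i\in T]$'' --- so that a new $k$‑tuple is orthogonal on a section iff the original sub‑vectors there XOR to $\vec 0$; this is again exactly count‑preserving, with width and running time as stated.

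The main obstacle in all three cases is getting the gadgets exactly right: one must verify that they create no spurious solutions and collapse no genuine ones --- that is, injectivity of the per‑element maps and the ``telescoping/uniqueness'' property that makes the new predicate fire precisely when the old one does --- while simultaneously keeping the instance in factored form with the same $n$ and $g$ and controlling the growth of $b$. The $\oplus$\ckfunc${}\rightarrow{}$$\oplus$\ckxor step is the most delicate to set up, since $\mathfrak f$ is a black box and the gadget must encode its entire truth table, whereas the $\oplus$\ckxor${}\rightarrow{}$$\oplus$\ckov step is the touchiest for dimension efficiency, because the parity predicate has large CNF complexity and one must be careful how the $k$‑dependent blow‑up enters.
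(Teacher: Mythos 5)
Your overall plan (enlarge only the per-section bit-width, keep $n$ and $g$, and argue an exact section-by-section bijection so that parity and counts are preserved) is the right architecture, and two of your three gadgets are correct. The $\oplus$\ckfunc$\rightarrow\oplus$\ckxor step (index all zero-tuples $T_\ell$, have list $i$ spawn one gadget $g_i(\ell)$ per $\ell$ with $(T_\ell)_i=w$, and use $\binom{k}{2}$ blocks of $kb$ bits where block $\{i,i'\}$ carries $\ell$ in $g_i$ and $g_{i'}$ and zero elsewhere) does give a bijection between $\mathfrak f$-zero $k$-tuples and XOR-zero $k$-tuples, with width $\binom{k}{2}kb < k^3b$. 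The $\oplus$\ckxor$\rightarrow\oplus$\cksum step via base-$(k+1)$ digits with a per-coordinate half-weight guess $m_c$ in list $1$ is also correct: the sum's digit at position $c$ is $\sum_i w_i[c]-2m_c\in[-2\lfloor k/2\rfloor,k]\subseteq(-(k+1),k+1)$ (not ``$[-k,1]$'' as you wrote --- that is the digit of list $1$'s number alone, not of the sum --- but the conclusion is unaffected), so the sum vanishes iff every digit does, and the unique valid $m$ makes this a bijection even when different $(w,m)$ collide as integers.

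The $\oplus$\ckxor$\rightarrow\oplus$\ckov step has a genuine gap. You propose ``one OV coordinate per odd-support pattern $T\subseteq[k]$,'' i.e.\ the direct CNF of the $k$-bit parity predicate. That CNF has $2^{k-1}$ clauses per original coordinate, so the new width is $b\cdot 2^{k-1}$, which is \emph{exponential} in $k$ and exceeds the claimed $2k^3b$ for all $k\gtrsim 14$. This is unavoidable by your route: parity has no $\poly(k)$-size CNF, so a direct clause-by-clause encoding into OV can never hit the stated polynomial width. The cheap fix is to treat XOR as the $\mathfrak f$ in your item-one scheme, i.e.\ move the exponential blow-up into the \emph{set size} of list $1$ (which is allowed --- it's where $2^{kb}$ already lives) rather than into the width: re-use the $\binom{k}{2}$-block pairwise-consistency layout, but replace the ``XOR the label'' equality test by an OV equality test --- store $\nu(\ell)\bullet\overline{\nu(\ell)}$ on one side of each pair and $\overline{\nu(\ell)}\bullet\nu(\ell)$ on the other, padded with $\vec 1$ elsewhere, so the bitwise product of a block is $\vec 0$ iff the two labels agree. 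That has width $2\binom{k}{2}\cdot kb < 2k^3b$ and time $O(ng\,2^{2k^3b})$. Finally, a comparison with the paper: the paper does not re-prove the cited bounds but instead proves a sharper Theorem~\ref{thm:ovXORsumFromAny} using a different decomposition (``list~$1$ guesses the entire zero-tuple, lists $2,\dots,k$ each verify one coordinate''), which gets widths $kb$, $2kb$, $kb$ instead of $k^3b$, $2k^3b$, $(\lceil\lg k\rceil+1)b$; your pairwise-consistency layout is more symmetric but pays an extra $\Theta(k^2)$ in width, which is exactly what the paper shaves off.
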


In their application area, they did not care about constant factor blow ups to $b$ and $g$ in their reductions. However, we need to have the smallest blowup possible, and therefore more efficient reductions.

\subsection{More efficient reductions between factored problems}

We improve the reductions from \ckfunc[] to \ckov[], \ckxor[], and \cksum. 

\begin{theorem}
Let $k\geq 2$. We reduce \ckfunc[] with $n$ factored vectors and $g$ sets of $b$-length strings to \ckxor[], \ckov[] and \cksum[] with $n$ factored vectors and $g$ sets of strings of length $kb,2kb$ and $kb$ respectively. In fact,
we prove the following.
\begin{itemize}
\item A $\oplus$\ckfunc$(n,b,g) \rightarrow \oplus$\ckxor$(n,kb,g)$ reduction exists that takes time $O(ng2^{kb})$. 
\item A $\oplus$\ckfunc$(n,b,g) \rightarrow \oplus$\ckov$(n,2kb,g)$ reduction exists that takes time $O(ng2^{2kb})$. 
\item A $\oplus$\ckfunc$(n,b,g) \rightarrow \oplus$\cksum$(n,kb,g)$ reduction exists that takes time $O(ng2^{kb})$. 
\end{itemize}

\label{thm:completenessOfFactoredProblems}
\label{thm:ovXORsumFromAny}
\end{theorem}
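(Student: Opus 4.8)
The plan is to handle the three reductions by encoding, for each section index $j \in \{1,\dots,g\}$ independently, the single-section constraint ``$\mathfrak{f}(w_1^j,\dots,w_k^j)=0$'' as an XOR, OV, or SUM constraint on suitably padded $b$-bit strings, so that the overall $gk$-tuple counted by $\oplus$\ckfunc is counted (with the same parity) by the target factored problem. Since the factored structure (the partition into $k$ lists $V_1,\dots,V_k$ and into $g$ sections) is simply copied over, the only work is at the level of a single section: given the truth table of $\mathfrak f : (\{0,1\}^b)^k \to \{0,1\}$, I need a map sending each $b$-bit string to a string of length $kb$ (resp.\ $2kb$) such that $k$ chosen strings, one per list, XOR (resp.\ become orthogonal / sum) to the target value if and only if $\mathfrak f$ evaluates to $0$ on them.

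For the XOR reduction, the idea is the standard ``truth-table-in-the-clear'' encoding: lay out a block of $b$ coordinates for each of the $k$ lists, so list $i$'s string $w$ is placed in block $i$ and zeros elsewhere; then the XOR of $k$ chosen strings $w_1,\dots,w_k$ (one per list) is just the concatenation $(w_1,\dots,w_k)$, which equals a fixed target string iff each $w_i$ equals a prescribed value --- but that only encodes a single satisfying tuple, not the set $\mathfrak f^{-1}(0)$. To get all of $\mathfrak f^{-1}(0)$ one instead enumerates, for the section, over all satisfying tuples: the cleaner route (and the one that keeps $b' = kb$) is to append to each $w \in \{0,1\}^b$ in list $i$ a short ``selector'' gadget and branch the section's single list-$i$ entry into copies indexed by the satisfying tuples, so that the XOR cancels exactly on the matched copies; the running time $O(ng2^{kb})$ accounts for enumerating all of $(\{0,1\}^b)^k$ per factored vector. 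I would then get the OV version by composing the XOR construction with the textbook XOR-to-OV trick (each bit $x$ becomes the pair $(x,1-x)$, doubling the dimension to $2kb$, with XOR-to-zero becoming orthogonality) --- this explains the factor $2$ --- and the SUM version by reading the length-$kb$ XOR gadget as a number in a base large enough ($\geq k+1$, so base a power of two suffices and keeps the bit-length $kb$) that no carries occur, so XOR-to-zero becomes sum-to-zero.

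The step I expect to be the main obstacle is making the single-section gadget encode \emph{exactly} $\mathfrak f^{-1}(0)$ --- i.e.\ count every satisfying $k$-tuple with multiplicity exactly one --- while keeping the string length at $kb$ rather than something like $2^b$ or $kb + O(\log(\text{number of satisfying tuples}))$; this is where the constant-factor tightness that the section emphasizes ("we need the smallest blowup possible") actually bites, and where I would need to be careful that the branching into copies does not introduce spurious collisions across different satisfying tuples or across the $g$ sections. The parity-preservation and the word-RAM running-time bookkeeping ($O(ng 2^{kb})$ for XOR/SUM, $O(ng2^{2kb})$ for OV, from iterating over all $(\{0,1\}^b)^k$ inputs of $\mathfrak f$ for each of the $ng$ factored vectors across $g$ sections) are then routine. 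Finally I would note that all three maps act sectionwise and listwise, so they send a \ckfunc$(n,b,g)$ instance to a \ckxor/\ckov/\cksum instance with the same $n$ and $g$, completing the argument.
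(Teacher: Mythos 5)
Your skeleton is on the right track: the reduction does act sectionwise and listwise, and the crux is a single-section gadget that realizes $\mathfrak f^{-1}(0)$ in the target problem. But the gap you flag---how to get string length exactly $kb$ without spurious multiplicities---is precisely the part your ``selector gadget / branch into copies'' sketch does not resolve, and the paper's mechanism is more structured and \emph{asymmetric} than what you propose. It is a guess-and-verify layout in which the lists play different roles: for a string $u_1$ in list $1$, one output string is created for each satisfying completion $(u_1,s_2,\ldots,s_k)\in S^{u_1}_{\mathfrak f}$, written in the clear as $u_1\bullet s_2\bullet\cdots\bullet s_k$; list $2$ pairs each of its strings $u_2$ with every guess $s_1\in\{0,1\}^b$ of list~$1$'s first coordinate, giving $s_1\bullet u_2\bullet 0^{(k-2)b}$; and every list $i\geq 3$ contributes the single string $0^{(i-1)b}\bullet u_i\bullet 0^{(k-i)b}$ with no branching at all. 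The XOR, block by block, then forces $s_1=u_1$ (block $1$, between lists $1$ and $2$) and $s_i=u_i$ for $i\geq 2$ (block $i$, between lists $1$ and $i$), which gives a bijection between satisfying $\mathfrak f$-tuples and XOR solutions with no overcount: the satisfying tuple is stored entirely in list~$1$'s string, not re-encoded symmetrically in all lists as your ``copies indexed by satisfying tuples'' suggests. The blowups $2^{(k-1)b}$ in list $1$ and $2^b$ in list $2$ account for the $O(ng2^{kb})$ time.

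Two of your derived gadgets also need repair. For OV, the paper does not compose the XOR gadget with a per-bit doubling; it re-derives the same guess-and-verify layout with the AND-based equality check $(s\bullet\bar s)\wedge(\bar t\bullet t)=\vec 0$ iff $s=t$, and---crucially---uses all-\emph{ones} as filler in blocks where a list does not participate: your per-bit map $x\mapsto(x,1-x)$ sends the XOR filler $0$ to $(0,1)$, which is not neutral for AND, so a naive composition breaks. For SUM, ``read the XOR gadget in a base $\geq k+1$ with no carries'' does not turn XOR-to-zero into sum-to-zero: with nonnegative digits and no carries, a digit-sum of zero forces every contribution to be $0$, whereas XOR-to-zero only requires an even number of $1$s. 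The paper instead uses a signed digit encoding $\nu:\{0,1\}^b\to[-2^{b-1},2^{b-1}-1]$ so the equality checks become cancellations inside the same guess-and-verify layout. (The carry-free base trick you describe is closer to the separate \ckxor-to-\cksum[] reduction in Lemma~\ref{lem:SUMreductions}, which adds an extra $(k+1)$-st list and element precisely to absorb the nonzero target value.)
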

\begin{proof}
Define $S_{\mathfrak{f}}^{u_1}$ to be the set of $k$-tuples of $b$-length strings, with the first string equal to $u_1$, such that they satisfy $\mathfrak{f}$. Formally, let $S_{\mathfrak{f}}^{u_1} = \{(u_1,s_2,\ldots,s_k) | \mathfrak{f}(u_1,s_2,\ldots,s_k)=1 \text{ such that } s_i \in \{0,1\}^b\forall i\in[2,k] \}$. 
Let the $k$ partitions of the \ckfunc[]~instance be $P_1,\ldots, P_k$. For each of these reductions our goal will be to have the strings of the first partition guess the full set of $k$ strings. Then the strings in the other partitions will verify these values. This will become clear once we define the reductions. Note that as long as we have a way to check the equality of $k$ separate strings pairwise simultaneously we can use this reduction technique. 

\paragraph{\ckxor[]:} We will define $k$ functions $\gamma_1, \gamma_2, \ldots, \gamma_k$ from strings of length $b$ to strings of length $bk$. These functions will help us define the \ckxor[] instance. The function $\gamma_i$ will be applied to the strings of partition $P_i$. For convenience $0^x$ is the string of $x$ zeros. Let $\bullet$ be the concatenation operator. 
\begin{align}
    \gamma_1(u_1) &= \{ u_1\bullet s_2 \bullet\ldots \bullet s_k|(u_1,s_2,\ldots, s_k) \in S_{\mathfrak{f}}^{u_1} \} \\
    \gamma_2(u_2) &= \{s_1 \bullet u_2 \bullet 0^{(k-2)b} | s_1 \in \{0,1\}^b\} \\
    \gamma_i(u_i) &= \{0^{(i-1)b} \bullet u_i  \bullet 0^{(k-i)b}\} \text{  }\forall i\in[3,k]
\end{align}
Now we define the \ckxor[] instance with factored vectors that have $g$ sets of $kb$-length strings. To do so, we define $\Gamma_i(\vec{v})$ as a function that takes as input a factored vector $\vec{v}$ with $b$ and $g$ sets and returns a factored vector with $bk$ bits and $g$ sets. Let $\Gamma_i(\vec{v})[j]$ be the $j^{th}$ set of the factored vector produced by $\Gamma_i$. We define $\Gamma_i(\vec{v})$ using the function $\gamma_i$. We define $\Gamma_i(\vec{v})[j]$ to be the set of strings $s\in \gamma_i(u)$ for all strings $u$ in $\vec{v}[j]$, the $j^{th}$ set of the factored vector $\vec{v}$. 
$$\Gamma_i(\vec{v})[j] = \{s | s\in \gamma_i(u) \text{ for all } u\in \vec{v}[j]\} .$$
We define the $i^{th}$ partition of the \ckxor[] instance by $P'_i = \{\vec{u} | \vec{u} = \Gamma_i(\vec{v}) \text{ for all } \vec{v} \in P_i\}$. Our new instance is the instance of \ckxor[]~over $P'_1,\ldots,P'_k$. 

To prove that the reduction works, suppose that $\vec{v}_1,\ldots,\vec{v}_k$ is a solution to \ckfunc[], where each $\vec{v}_i$ is a factored vector, and suppose that in this solution, the string $u_i^j\in \vec{v}_i[j]$ is chosen. Then we show that $\Gamma_1(\vec{v}_1),\ldots,\Gamma_k(\vec{v}_k)$ produces an analogous solution in the \ckxor[] instance. For each $j=1,\ldots,g$, this solution picks $u_1^j\bullet u_2^j \bullet \ldots \bullet u_k^j$ from $\Gamma_1(\vec{v}_1)$, $u_1^j\bullet u_2^j \bullet 0^{(k-2)b}$ from $\Gamma_2(\vec{v}_2)$ and $0^{(i-1)b}\bullet u_i^j \bullet 0^{(k-i)b}$ from $\Gamma_i(\vec{v}_i)$ for $i=3,\ldots,k$. From the definitions this solution exists and the strings xor to zero.

Now consider a solution in the \ckxor[] instance. So there exist factored vectors $\vec{v}_1,\vec{v}_2,\ldots, \vec{v}_k$ in the \ckfunc[] instance where $\Gamma_1(\vec{v}_1),\Gamma_2(\vec{v}_2),\ldots,\Gamma_k(\vec{v}_k)$ create a solution for the \ckxor[] instance. Suppose that in this solution, for each $i=1,\ldots,k$ and $j=1,\ldots,g$, a string in $\gamma_i(u_i^j)$ is chosen, where $u_i^j\in \vec{v}_i[j]$. Fix some $j$. We want to show that $\mathfrak{f}(u_1^j,\ldots,u_k^j)=1$, or equivalently $(u_1^j,\ldots,u_k^j)\in S^{u_1}_{\mathfrak{f}}$. The first $b$ bits of any string in $\gamma_1(u_1^j)$ is $u_1^j$, the first $b$ bits of any string in $\gamma_i(u_i^j)$ for $i>2$ is $0^b$ and so the first $b$ bits of $\gamma_2(u_2^j)$ must be $u_1^j$ so that the xor of these strings becomes zero. So the string chosen from $\gamma_2(u_2^j)$ is $u_1^j\bullet u_2^j\bullet 0^{(k-2)b}$. Now looking at the second $b$ bits of each string, the strings chosen from $\gamma_1(u_1^j)$ and $\gamma_2(u_2^j)$ have non-zero bits in those positions and so the second $b$ bits of the string from $\gamma_1(u_1^j)$ must be $u_2^j$. Similarly, looking at the $i^{th}$ $b$ bits of all the strings, the only strings that have non-zero bits are in $\gamma_i(u_i^j)$ and $\gamma_1(u_1^j)$ and so the $i^{th}$ $b$ bits of the string from $\gamma_1(u_1^j)$ must be $u_i^j$. So the string chosen from $\gamma_1(u_1^j)$ is $u_1^j\bullet u_2^j\bullet \ldots \bullet u_k^j$ and so  $(u_1^j,\ldots,u_k^j)\in S^{u_1}_{\mathfrak{f}}$. Note that previously we proved that this solution in \ckfunc[] is analogous to the solution in \ckxor[] that we started from. 

So we proved that the number of solutions in both instances is the same. 

\newcommand{\Zeta}{Z}
\paragraph{\ckov[]:} As before we will define $k$ functions $\gamma_1, \gamma_2, \ldots, \gamma_k$ from strings of length $b$ to strings of length $2bk$. The function $\gamma_i$ will be applied to the strings of partition $P_i$. We will define $\bar{s}$ to be an operator on zero-one strings that flips all the bits. So for example if $s=01101$ then $\bar{s} = 10010$. Now note that if $|s|=|s'|=b$ and the bitwise AND of $s \bullet \bar{s}$ and $\bar{s'} \bullet s'$ is the all zeros string then $s =s'$. We will use the all ones string to replace the all zeros string in the \ckxor[] reduction because $x \wedge 1 = x$, allowing the ones to not interfere with the two strings we want to compare in that location.
\begin{align}
    \gamma_1(u_1) &= \{ u_1 \bullet \bar{u}_1\bullet s_2 \bullet  \bar{s}_2 \bullet\ldots \bullet s_k \bullet \bar{s}_k|(u_1,s_2,\ldots, s_k) \in S_{\mathfrak{f}}^{u_1} \} \\
    \gamma_2(u_2) &= \{\bar{s}_1 \bullet s_1 \bullet \bar{u}_2 \bullet u_2 \bullet 1^{(k-2)2b} | s_1 \in \{0,1\}^b\} \\
    \gamma_i(u_i) &= \{1^{(i-1)2b} \bullet \bar{u}_i \bullet u_i \bullet 1^{(k-i)3b}\} \text{  }\forall i\in[3,k]
\end{align}
Similarly to the case of XOR we will define $\Gamma_i(\vec{v})$ as a function over factored vectors with $b$ and $g$ sets that returns a factored vector with $2bk$ bits and $g$ sets. Let $\Gamma_i(\vec{v})[j]$ be the $j^{th}$ set of the factored vector produced by $\Gamma_i$. Then we define
$$\Gamma_i(\vec{v})[j] = \{s | s\in \gamma_i(u) \text{ for all } u\in \vec{v}[i]\}.$$
Now we will define the $i^{th}$ partition in the \ckov[] instance as $P'_i = \{\vec{u} | \vec{u} = \Gamma_i(\vec{v}) \forall \vec{v} \in P_i\}$. Our new instance is the instance of \ckov~over $P'_1,\ldots,P'_k$. Note that the number of new solutions in the new version is exactly the same as in the old version, and it can be proven similar to the \ckxor[] case.

\paragraph{\cksum[]:} We are going to use the same basic idea here as in the two previous problems. The strings in the first partition are going to guess the whole solution to the \ckfunc[], and the other partitions confirm this guess. 
Additionally given a string $|s|=b$ let $\nu(s)$ return the number in $[-2^{b-1},2^{b-1}-1]$ \footnote{Note that in non-factored $k$-SUM you want $k$ numbers which sum to zero. In \cksum~, and factored k-SUM more generally, the number has been split into $g$ sections of $b$ bits. We look for $k$ vectors where each of the $b$ bit parts sum to zero. As long as $\lg(k)^g = n^{o(1)}$ the factored split-up-mod version can solve the version where you have natural numbers over a larger range. (For intuition: you basically need to guess and enforce carries.) } represented by the zero one string $s$. Once again we will define functions that produce sets from a single string. Here they will be sets of numbers (which of course can be written and interpreted as strings).
\begin{align}
    \gamma_1(u_1) &= \{ \nu(u_1) + \sum_{i=2}^k 2^{b(i-1)}\nu(s_i)|(u_1,s_2,\ldots, s_k) \in S_{\mathfrak{f}}^{u_1} \} \\
    \gamma_2(u_2) &= \{\nu(s_1) + 2^b u_2 | s_1 \in \{0,1\}^b\} \\
    \gamma_i(u_i) &= \{ \nu(u_i)2^{b(i-1)}\} \text{  }\forall i\in[3,k]
\end{align}
Similarly to the case of XOR we will define $\Gamma_i(\vec{v})$ as a function over factored vectors with $b$ bits and $g$ sets that returns a factored vector with $bk$ bits and $g$ sets. Let $\Gamma_i(\vec{v})[j]$ be the $j^{th}$ set of the factored vector produced by $\Gamma_i$. Then we define
$$\Gamma_i(\vec{v})[j] = \{s | s\in \gamma_i(u) \text{ for all } u\in \vec{v}[i]\}.$$
Now we define the $i^{th}$ partition of the \cksum[] instance as $P'_i = \{\vec{u} | \vec{u} = \Gamma_i(\vec{v}) \forall \vec{v} \in P_i\}$. Our new instance is the instance of \cksum[]~over $P'_1,\ldots,P'_k$. 

Now as before, we can use these to check if we have a valid solution. Any valid solution in \ckfunc[] corresponds to exactly one solution of this new \cksum[] instance (you must line up correct values for each entry in $\gamma_1$ which correspond to a full guess of a solution to $\mathfrak{f}$. To get a total sum of zero each portion must sum to zero. 

\paragraph{Runtime.} The runtime in each case is the size of the instance produced. We prove the runtime of \ckxor[] reduction, and the rest is similar. For any $b$-bit string $u$, the set $\gamma_1(u)$ has size at most $2^{(k-1)b}$, the set $\gamma_2(u)$ has size $2^b$ and the set $\gamma_i(u)$ for $i>2$ has size $1$. In partition $P_1$ in the \ckfunc[] instance, we have $n$ factored vectors each having $g$ sets of at most $2^b$ strings of length $b$. So in total in partition $P_1$ there are at most $n2^b$ strings, and so in $P'_1$ there are at most $n2^b\cdot 2^{(k-1)b}$ strings. Similarly, in $P'_2$ there are at most $n2^{2b}$ strings and in $P'_i$ for $i>2$ there are at most $n2^b$ strings. So in total the size of the $\ckxor$ instance is $O(ng2^{bk})$.
\end{proof}

\subsection{Even More Efficient Reductions to k-SUM}
The generic reductions are much improved over the previous incarnations. However, we can do even better when specifically going from \ckov[] or \ckxor[] to \cksum[]. Here we will use the fact that the sum of $k$ numbers in $\{0,1\}$ is in $[0,k]$. The reduction from \ckxor~to \cksum~is stated in \cite{factoredProblems} but we prove it here again for completeness.

\begin{lemma}
A $\oplus$\ckov$(n,b,g) \rightarrow \oplus$\cksum[(k+1)]$(n+1,b \lceil \lg(k) \rceil,g)$ reduction exists that takes time $O(ngk2^b)$. 

A $\oplus$\ckxor$(n,b,g) \rightarrow \oplus$\cksum[(k+1)]$( n+1,b\lceil \lg(k) \rceil,g)$ reduction exists that takes time $O(ngk2^b)$. 
\label{lem:SUMreductions}
\end{lemma}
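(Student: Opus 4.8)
The goal is to reduce a factored \ckov[] (resp.\ \ckxor[]) instance with parameters $(n,b,g)$ to a factored \cksum[(k+1)] instance with parameters $(n+1,b\lceil\lg k\rceil,g)$, so I need $k+1$ partitions and each $b$-bit string must be re-encoded as a $b\lceil\lg k\rceil$-bit number. The core observation is the one advertised just before the statement: if $a_1,\dots,a_k\in\{0,1\}$ then $\sum_i a_i\in[0,k]$, and $a_1\cdot\ldots\cdot a_k=1$ (all-ones, the obstruction to \ckov[] orthogonality) iff $\sum_i a_i=k$, while the XOR $a_1\oplus\cdots\oplus a_k=0$ iff $\sum_i a_i$ is even. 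So I will encode each bit position of a $b$-bit string as a $\lceil\lg k\rceil$-bit block, embed the original $k$ factored-vector lists into the first $k$ \ksum[] partitions essentially verbatim (applying $\nu$ block-by-block so there are no carries between the $g\cdot b$ blocks — this is exactly the ``$\lg(k)^g=n^{o(1)}$, guess-the-carries'' philosophy of the earlier footnote, except here each block is wide enough to hold a sum up to $k$ with room to spare), and then introduce \emph{one extra partition} $P_{k+1}$ of size roughly $n+1$ whose single meaningful vector supplies, in each of the $g$ sections and each of the $b$ bit-blocks, a ``target-completing'' constant.

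For \ckov[]: in each bit-block the $k$ real vectors contribute a value in $[0,k]$; I want the total, after adding the $(k+1)$st vector's block value, to equal $k$ exactly when the $k$ bits were \emph{not} all $1$ — i.e.\ I want to detect the all-ones event. The clean way is to negate: have partition $P_i$ for $i\le k$ contribute $-a$ where $a$ is the bit (so blocks hold values in $[-k,0]$, using signed $\lceil\lg k\rceil$-bit arithmetic), and have $P_{k+1}$ contribute the constant $+k$ in every block shifted by the appropriate $2^{\ell\lceil\lg k\rceil}$ power; then a block sums to zero iff the bit-sum was $k$, i.e.\ iff all $k$ bits were $1$. The all-blocks-zero condition then says: for every one of the $g$ sections and every one of the $b$ bit positions, the chosen $b$-bit strings were all-$1$ there — which is the complement of orthogonality. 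To flip this back to ``orthogonal'', I instead make $P_{k+1}$ contribute $k-1$ (not $k$) so that a block sums to zero iff the bit-sum is $k-1$; that doesn't isolate a single event, so the right move is: keep $P_{k+1}$'s constant at $k$ but have \emph{one of the original partitions}, say $P_1$, pass the \emph{negation} of its bit (so $P_1$ contributes $1-a_1\in\{0,1\}$ rather than $a_1$); then the block-sum over all $k$ real contributions lies in $[0,k]$ and equals $k$ iff $a_1=0$ and $a_2=\cdots=a_k=1$ — still not quite OV. Cleanest is simply: $P_{k+1}$ contributes a value that makes the block sum to $0$ precisely when the original $k$ bits are \emph{not} all ones, which cannot be done with a single fixed constant, so instead $P_{k+1}$ contributes the \emph{all-ones constant minus} nothing and we rely on: block-sum of the $k$ real bits $=k$ is the unique ``bad'' value, so we use $P_{k+1}$'s value to \emph{avoid} hitting $0$ in the bad case and to hit $0$ in every good case — achieved by having $P_{k+1}$ contribute a block value $c$ with $a_1+\cdots+a_k+c\equiv 0$ forced only for sums in $\{0,\dots,k-1\}$; since the $k$ real contributions are additive we can instead make each real partition contribute $a_i$ and make the encoding of ``orthogonal'' be that \emph{at least one} block per section is $<k$, then route the overall-zero requirement through the extra partition using the standard ``one list picks which coordinate witnesses orthogonality'' trick (exactly as in Figure~\ref{fig:factored} and the $k$-OV definition, $\forall i\,\exists j$). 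This is the step I expect to be fiddly: getting a \emph{single} extra vector (hence the $n+1$, not $n+b$) to certify the $\forall\text{-coordinate}\;\exists\text{-list}$ structure of OV inside a \ksum[] block decomposition.

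For \ckxor[] the analogous construction is easier and is the one already in \cite{factoredProblems}: in each $\lceil\lg k\rceil$-bit block, the $k$ real bits sum to a value in $[0,k]$ whose parity equals the XOR; the extra partition $P_{k+1}$ contributes, in each block, a constant chosen so the block sums to $0$ exactly when that parity is $0$ — but since a single constant cannot correct both parities, we instead exploit that we only need the block-sum to be $0$ \emph{versus nonzero} and that consecutive blocks don't carry: the extra vector contributes $1$ in a block iff... — no, the correct and standard resolution is that the $k$ real contributions are $a_i\in\{0,1\}$, their block-sum is in $\{0,\ldots,k\}$, and XOR$=0$ iff this sum is even; we make $P_{k+1}$ contribute a value in $\{0,\ldots,k-1\}$ that is added so the result equals $k$ (mod nothing — plain integer) exactly in the XOR$=0$ case, which it can do because on even sums $s$ we need $k-s$ and on odd sums we need it to be impossible, i.e.\ we let $P_{k+1}$'s block-set be $\{k-2t : 0\le t\le k/2\}$ and then the block totals to $k$ iff $s$ even — and ``all $g b$ blocks equal $k$'' is a single affine condition we convert to ``sums to zero'' by subtracting the fixed all-$k$ pattern, which is absorbed into $P_{k+1}$ itself (changing $n$ to $n+1$).

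\textbf{Main obstacle.} The hard part is the \ckov[] direction: encoding the $\forall$-coordinate/$\exists$-list quantifier structure of orthogonality into per-block integer sums while (i) keeping blocks only $\lceil\lg k\rceil$ bits wide so sums stay in a carry-free range, and (ii) adding only a \emph{single} extra factored vector (the ``$n+1$''), which must simultaneously play the role of ``guess which coordinate certifies orthogonality in each section.'' I expect to handle this by letting the extra vector, in section $j$, be a set ranging over the $b$ choices of ``witness coordinate,'' contributing the completing constant only in the chosen block and a neutral value elsewhere — mirroring exactly how the main factored-to-unfactored reduction (Figure~\ref{fig:factored}) uses a list per subset; verifying that this yields a parity-preserving bijection on solutions, and that the size is $O(n+1)$ factored vectors with the claimed $O(ngk2^b)$ runtime, is the bulk of the remaining work. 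Once the \ckov[] case is set up, the \ckxor[] case follows by the same template with the parity-completing block-set in place of the OV-witness gadget.
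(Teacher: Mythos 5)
Your high-level framing is right, and the XOR half essentially lands: pad each $b$-bit string into $b$ blocks of $\lceil\lg k\rceil$ bits so per-block sums are carry-free, embed the $k$ original lists in $k$ partitions, and add a single extra factored vector whose block-values cancel the bit-sum precisely when that sum is even. That is the paper's construction with $S_{even}$.

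The OV half has a genuine gap, though, and it is exactly where you flag the ``main obstacle.'' You briefly state the correct resolution — ``$P_{k+1}$ contributes a block value $c$ with $a_1+\cdots+a_k+c = 0$ forced only for sums in $\{0,\dots,k-1\}$'' — and then abandon it for a ``witness coordinate'' gadget in which, per section, the extra vector ranges over the $b$ choices of which coordinate certifies orthogonality, contributing a completing constant in one block and a neutral value elsewhere. This cannot work: orthogonality of a $k$-tuple is $\forall$-coordinate $\exists$-list (every coordinate's bit-sum must be $<k$), not $\exists$-coordinate, and you in fact misstate it mid-proposal as ``at least one block per section is $<k$.'' If the extra vector only corrects a single chosen block, the remaining $b-1$ blocks in that section have to sum to zero on their own, which forces all $k$ bits there to be $0$ — far stronger than orthogonality. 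The fix, which is what the paper actually does, is the idea you stated and then dropped: in each of the $g$ sections, the extra factored vector's set is the negation of \emph{every} concatenation of $b$ values drawn from $[0,k-1]$ (so $k^b$ elements per set, not $b$). Then a choice from $P_{k+1}$ that zeroes the total exists if and only if every block's real bit-sum is $\le k-1$, i.e.\ the $k$ padded strings are orthogonal in every coordinate, and it is unique when it exists, which is what makes the map parity-preserving. Trust your first instinct here: the ``single affine-completion set'' really is the gadget, and no per-coordinate witness guessing is needed or correct.
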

\begin{proof}
For our convenience let $c = \lceil \lg(k) \rceil$. The first part of both reductions is the same and we explain it here. Then we explain the second part of each separately. 

In both reductions we start with a $k$ partite problem with $k$ partitions containing factored vectors, $P_1,\ldots,P_k$. In both reductions we will produce new lists of factored vectors $P'_i$ by padding the original factored vectors. Here we explain what $P'_1,\ldots,P'_k$ are, and then we define $P'_{0}$ separately for when our reduction is from \ckov[] and \ckxor[].

Specifically, given a string $s$ with $b$ bits $s[0], \ldots, s[b-1]$ let $pad(s)$ be a function which returns a string of length $cb$ where $pad(s)[ci] = s[i]$ and $pad(s)[j] = 0$ if $j \not \equiv 0 \mod c$. So, we have padded the string with zeros around each number. Now let padding a factored vector be defined as running the function $F_{pad}(\vec{v})$  over a factored vector $\vec{v}$. We define this function as $\vec{u} = F_{pad}(\vec{v})$ such that for all $i\in [1,g]$
$$\vec{u}[i] = \{ pad(s) | s\in \vec{v}[i] \}.$$
That is, we pad every string that appears in every set of the factored vector. Now finally let us define $PAD(P_i)$ as
$$
PAD(P_i) \in \{F_{pad}(\vec{v}) | \vec{v} \in P_i \}.
$$
We define $P'_i=PAD(P_i)$. 
Consider $k$ zero one strings $s_1, \ldots,s_k$ and then consider the sum of the padded strings as if they were integers $x= pad(s_1) + \ldots + pad(s_k)$. Then the sum $s_1[i]+\ldots+s_k[i]$ is equal to the number represented by the bits $x[ci, ci+c-1]$. By padding our vectors we ensure that our sums are separated. Now note that with \ckov[]~we are looking for vectors where for all $i\in[0,b]$ the number represented by $x[ci, ci+c-1] <k$. Further note that for \ckxor[]~we are looking for vectors where for all $i\in[0,b]$ the number represented by $x[ci, ci+c-1] \equiv 0 \mod 2$. We are going to control this sum by creating appropriate factored vectors in partition $P'_0$. We in fact create a single factored vector with ``guesses" of all possible ``valid" sums (strings where each group of bits  $x[ci, ci+c-1]$ is one of the possible valid values). 
Below we formally define $P'_0$.

\paragraph{\ckov[]:} Let $S_{<k}$ be the set of all string representations of numbers in $[0,k-1]$.  Let $S_{concat}$ be the set of all possible concatenations of strings $s_1,\ldots,s_b$ where $s_i \in S_{<k}$. Let $S_{concat}^{(-)}$ be the set of all numbers represented in $S_{concat}$, now negated. Finally define a factored vector $\vec{u}_{OV}$ where for all $i\in[1,g]$ we have $\vec{u}_{OV}[i] = S_{concat}^{(-)}$. Now define $P'_0$ to be a partition containing only $\vec{u}_{OV}$. 
Note that the number of solutions exactly match in the \ckov[] instance and the \cksum[] instance. Considering any solution to the \ckov[] instance, for each bit in the strings chosen from set $j$ of the $k$ factored vectors, the sum of these bits is less than $k$ since there is a zero among them, and so their padded sum equals to some value represented in $S_{concat}$, and so there is (exactly) one string in the $j^{th}$ set of $\vec{u}_{OV}$ whose sum with the other strings chosen from the $j^{th}$ set of the $k$ factored vectors is zero. 

Moreover, if for some $j=1,\ldots,g$, the vectors selected from the $j^{th}$ set of the $k$ factored vectors are not orthogonal, there is a bit which is one in all the $k$ strings. So the sum of the padded strings in the corresponding \cksum[] instance is $k$, and there is no string in $\vec{u}_{OV}[j]$ that can make this sum zero. 

\paragraph{\ckxor[]:} Let $S_{even}$ be the set of all string representations of even numbers in $[0,k]$.  Let $S_{concat}$ be the set of all possible concatenations of strings $s_1,\ldots,s_b$ where $s_i \in S_{even}$. Let $S_{concat}^{(-)}$ be the set of all numbers represented in $S_{concat}$, now negated. Finally define a factored vector $\vec{u}_{XOR}$ where for all $i\in[1,g]$ we have $\vec{u}_{XOR}[i] = S_{concat}^{(-)}$. Now define $P'_0$ to be a partition containing only $\vec{u}_{XOR}$. 
Similar as above, we can see that the number of solutions exactly match. 
\end{proof}

\section{Worst-Case to Average-Case Fine-Grained Problems}
\label{sec:WCACxor}

In this section we will show that we can get \emph{average-case} lower bounds for some of the most important problems in fine-grain complexity from worst-case hypotheses. The distribution on which these problems are hard is not the uniform distribution, but an easy to sample distribution. 

Our high level approach is as follows. We want to derive hardness for average-case $K$-OV, $K$-XOR and $K$-SUM of size $N$. 
We want to base the hardness on $k$-OV, $k$-XOR or $k$-SUM hypothesis, for some $k$ as a function of $K$, and we use Lemma \ref{lem:decisionToParity} which states that under $k$-OV, $k$-XOR and $k$-SUM hypothesis, $\oplus k$-OV, $\oplus k$-XOR and $\oplus k$-SUM are hard respectively. Suppose that the starting problem is $\oplus k$-$P$ and the problem that we seek average case hardness for is $\oplus K$-$Q$ (for example $P$ can be OV and $Q$ can be SUM. They can be the same problem as well). We do the following steps to reduce worst-case $\oplus k$-$P$ to average-case $\oplus K$-$Q$. See Figure \ref{fig:roadmapsection6}.
\begin{enumerate}
    \item We reduce $\oplus k$-$P$ to factored $\oplus k$-$P$ (Theorem \ref{lem:selfReductionsWithbg}).
    \item We reduce factored $\oplus k$-$P$ to factored $\oplus k$-$Q$ (Section  \ref{sec:betweenProblem-bg}).
    \item We reduce worst case factored $\oplus k$-$Q$ to average case factored $\oplus k$-$Q$ on uniform distribution (Theorem \ref{thm:frameworkImprovement}).
    \item Finally we reduce factored $\oplus k$-$Q$ to $\oplus K$-$Q$. (Theorem \ref{thm:factoredToUnfactored}).
\end{enumerate}

Note that if the problems $P$ and $Q$ are the same, we don't need step $2$. In the following subsections we introduce the tools used for each step, and then we put them together to prove our results. 


\begin{figure}
    \centering
    \includegraphics[width=\linewidth]{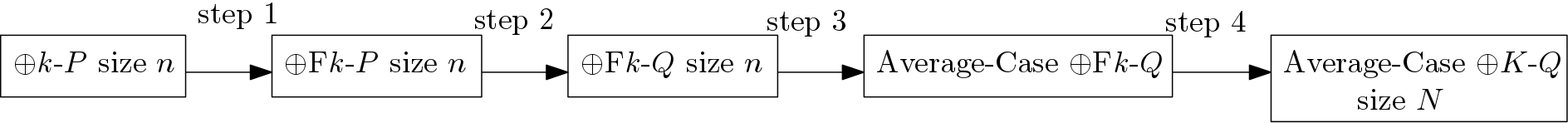}
    \caption{
    Roadmap of the approach to prove average-case complexity for the $K$-$Q$ problem from $k$-$P$ problem. Each problem is $k$ (or $K$) partite, and the size of the problems refer to the number of vectors or factored vectors in each partition.}
    \label{fig:roadmapsection6}
\end{figure}






\subsection{Step 1: Un-factored to factored reduction}
Dalirrooyfard et al \cite{factoredProblems} show that one can reduce any problem to its factored version with only constant blowups in the size of the problem. 

\begin{theorem}[\cite{factoredProblems}]
In $O(n)$ time, one can reduce an instance of size $n$ of $k$-OV, $k$-XOR, $k$-SUM and Z$k$C to a single call to an instance of size $\tO(n)$  of \ckov[], \ckxor[], \cksum[]~and \czkc~respectively.
\label{thm:unfactoredToFactoredOld}
\end{theorem}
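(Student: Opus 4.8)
The plan is to use the essentially trivial embedding of a problem into its factored counterpart: take $g=1$ and let every factored vector carry a single, one-element set. Concretely, fix $P\in\{k\text{-OV},\,k\text{-XOR},\,k\text{-SUM},\,\zkc\}$ and an instance of $P$ of size $n$ whose vectors/numbers have bit length $d$; here $d=O(\log n)$ for $k$-OV and $k$-XOR, and $d=k\log n+O(1)$ for $k$-SUM (by the hashing of Lemma~\ref{lem:lengthForkSumkXOR}). I would pick $b$ slightly larger than $d$ --- $b=d$ for OV and XOR, and $b=d+\lceil\log k\rceil+2$ for SUM so that a sum of $k$ signed $b$-bit integers never overflows --- and set $g=1$. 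Then replace each input element $v$ occurring in partition $L_i$ by the $(b,1)$-factored vector $\vec v$ with $\vec v[1]=\{v\}$, padding $v$ to $b$ bits. This produces, in $O(n)$ time, a factored instance with the same number $n$ of factored vectors per partition and bit length $\tO(1)$, hence of size $\tO(n)$, and the reduction queries it once.

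For correctness I would observe that every set $\vec v[j]$ (only $j=1$ occurs) is a singleton, so in the definition of the factored output the only $b$-bit string selectable from the $i$-th chosen factored vector is $v_i$ itself. Consequently the factored output --- which sums over all tuples $\vec v_1,\dots,\vec v_k$ of chosen factored vectors the number of tuples $(w_1,\dots,w_k)$ with $w_i\in\vec v_i[1]$ that are orthogonal (resp.\ XOR to zero, sum to zero) --- collapses to
\[
\sum_{v_1\in L_1,\dots,v_k\in L_k}\mathbf{1}\bigl[(v_1,\dots,v_k)\text{ is a solution of }P\bigr],
\]
which is exactly the number of solutions of $P$; the choice of $b$ guarantees that the $b$-bit evaluation of the relevant predicate agrees with its evaluation on the original inputs. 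Since the map is solution-count preserving, it reduces simultaneously the counting, parity, and decision versions of $P$ to the corresponding version of \ckov[], \ckxor[], or \cksum[], with a single call.

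The one step I expect to require care --- the only place where a careless version of this embedding could break --- is the bit-length bookkeeping for the arithmetic problems: the factored problem evaluates its predicate on $b$-bit numbers, so $b$ must be large enough that $\sum_{i=1}^k v_i$ cannot wrap around modulo $2^b$; for inputs in $[-2n^k,2n^k]$ the stated choice $b=k\log n+O(\log k)$ suffices and inflates the instance size by only a constant factor, keeping it $\tO(n)$. The case of $\zkc$ is handled in exactly the same way, with the host graph left unchanged and each edge label replaced by its singleton set (so that $\czkc$ restricted to singletons is literally $\zkc$), subject to the identical bit-length caveat for the zero-sum test; everything else is a direct unfolding of the factored-problem definitions of Section~\ref{sec:prelims}, requiring no further argument.
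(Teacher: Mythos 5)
Your construction is technically a valid reduction, but it fixes $g=1$, and this is not what the theorem is really asserting or what the paper needs. The sentence immediately following the theorem spells out the intended proof: "The reduction works by splitting up each vector/number into $g$ pieces of length $b$ bits\dots if the dimension of the original vector is $d$ then $b\cdot g=d$." In other words, the reduction is supposed to produce, for any choice of $(b,g)$ with $bg=d$, an instance of \ckov[]/\ckxor[]/\cksum[] with those parameters, by cutting the $d$-bit vector into $g$ contiguous $b$-bit pieces and putting piece $j$ as the lone element of $\vec v[j]$. That flexibility in $g$ is exactly what Lemma~\ref{lem:selfReductionsWithbg} invokes ("the $\oplus$\ckov$(n,b,g)$ problem where $bg=ck\lg n$\dots") and what Sections \ref{sec:ac-kov} onward rely on when they set $g=\sqrt K$, $g=K^{2/3}$, etc. Your $g=1$ embedding collapses the factored problem back to the unfactored one, so it satisfies a weak literal reading of the statement ("some instance of size $\tO(n)$") but it cannot be used to instantiate the parameters downstream: with $g=1$ you only ever get $K=kg=k$ in Step~4, and the whole machinery of blowing up $k$ to $K$ degenerates.

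Two further remarks. First, for $k$-OV and $k$-XOR the general-$g$ split is correct with no extra care, because orthogonality and XOR are coordinate-local: the $k$ vectors are a solution iff each $b$-bit block is a solution, so the product over $j$ of the per-block indicators is the global indicator. Second, for $k$-SUM the general-$g$ split is genuinely subtle because of carries between blocks, and the paper acknowledges this (see the footnote inside the proof of Theorem~\ref{thm:ovXORsumFromAny}: "you basically need to guess and enforce carries," with the constraint $\lg(k)^g=n^{o(1)}$). Your $b=d+\lceil\log k\rceil+2$, $g=1$ choice sidesteps the carry issue entirely, which is fine, but again only because you have done nothing: it does not show how to produce a \cksum$(n,b,g)$ instance with $g>1$, which is what Lemma~\ref{lem:selfReductionsWithbg} item~\ref{item:4} claims as a consequence of this theorem. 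So you should state the reduction for general $g$ with $bg=d$ and, for the SUM case, either restrict to parameter regimes where carries cannot occur or include the carry-guessing gadget that \cite{factoredProblems} uses.
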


A brief explanation of the proof of Theorem \ref{thm:unfactoredToFactoredOld} is as follows. The reduction works by splitting up each vector/number into $g$ pieces of length $b$ bits. Then, from each original vector, we make one factored vector that has $g$ sets each containing exactly one vector (the $g$ vector pieces from the original). 
So in fact, if the dimension of the original vector is $d$ then $b\cdot g=d$. For $k$-SUM if the numbers were $d$ bits long then $b \cdot g =d$. This means that $\oplus k$-OV, $\oplus k$-XOR, and $\oplus k$-SUM reduce to a single call of size $n$ of $\oplus$\ckov[], $\oplus$\ckxor[], and $\oplus$\cksum[] respectively.

Using rETH, $k$-XOR and $k$-SUM hypothesis, the following lemma which is inferred from \cite{factoredProblems} specifies the parameters for which factored versions of $k$-OV, $k$-XOR and $k$-SUM are hard under the corresponding hypothesis.


\begin{lemma}
We have the following hardness results from rETH, $k$-XOR and $k$-SUM.
\begin{enumerate}

\item \label{item:2}There exists a fixed constant $c$ such that  the $\oplus$\ckov$(n,b,g)$~problem where $bg = c k \lg(n)$ and $kg=o(\lg(n))$ requires $n^{\Omega(k)}$ time if rETH is true. 

\item \label{item:3}The  $\oplus$\ckxor$(n,b,g)$~problem requires $n^{\lceil k/2 \rceil-o(1)}$ time when $bg \ge k\lg(n)+2$ and $kg=o(\lg(n))$ if the $k$-XOR hypothesis holds.

\item \label{item:4}The $\oplus$\cksum$(n,b,g)$~problem requires $n^{\lceil k/2 \rceil-o(1)}$ time when $bg \ge k\lg(n)+2$ and $kg=o(\lg(n))$ if the $k$-SUM hypothesis holds. 
\end{enumerate}

\label{lem:selfReductionsWithbg}
\end{lemma}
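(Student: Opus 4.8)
The plan is to combine three ingredients already established in the excerpt: (i) the worst-case hardness of the un-factored problems over small-dimension vectors/numbers (Lemma \ref{lem:dimensionReductionETH} for $k$-OV under rETH, Lemma \ref{lem:lengthForkSumkXOR} for $k$-XOR and $k$-SUM under their respective hypotheses); (ii) the folklore decision-to-parity reduction (Lemma \ref{lem:decisionToParity}), which upgrades each worst-case hardness statement to the parity-counting version with only a sub-polynomial overhead; and (iii) the un-factored-to-factored reduction (Theorem \ref{thm:unfactoredToFactoredOld}), which takes an instance of size $n$ with $d$-bit vectors/numbers and produces, in $\tO(n)$ time, a single equivalent instance of the corresponding factored problem with parameters $b,g$ satisfying $bg=d$. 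Chaining these, any algorithm solving the factored parity problem in time $T$ yields an algorithm for the un-factored parity problem in time $\tO(n)+T$, hence (by Lemma \ref{lem:decisionToParity}) an algorithm for the un-factored decision problem in time $n^{o(1)}(\tO(n)+T)$, which must be $\ge n^{k-o(1)}$ (OV) or $\ge n^{\lceil k/2\rceil -o(1)}$ (XOR, SUM) under the hypothesis. This forces the stated lower bounds on $T$.

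Concretely, for item \ref{item:2}: by Lemma \ref{lem:dimensionReductionETH} there is a constant $c_\epsilon$ so that $k$-OV on $n$ vectors of dimension $d=c_\epsilon k\lg n$ needs $n^{\Theta(k)}$ time under rETH; pick $c=c_\epsilon$ and apply Theorem \ref{thm:unfactoredToFactoredOld} with any split $b\cdot g=d=ck\lg n$, and note that the theorem only costs $\tO(n)$, so $\oplus$\ckov$(n,b,g)$ inherits an $n^{\Omega(k)}$ lower bound; the extra side condition $kg=o(\lg n)$ is exactly what is needed later (for Theorem \ref{thm:factoredToUnfactored} and the framework) and is harmless here since we are free to choose $g$ small. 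For items \ref{item:3} and \ref{item:4}: by Lemma \ref{lem:lengthForkSumkXOR}, $k$-XOR (resp. $k$-SUM) on $(k\lg n+2)$-bit inputs requires $n^{\lceil k/2\rceil-o(1)}$ time under the $k$-XOR (resp. $k$-SUM) hypothesis; combine with Lemma \ref{lem:decisionToParity} to get the same bound for $\oplus k$-XOR (resp. $\oplus k$-SUM), then apply Theorem \ref{thm:unfactoredToFactoredOld} with any $b,g$ such that $bg\ge k\lg n+2$ (pad the $d$-bit inputs up to $bg$ bits with zeros if $bg>k\lg n+2$) to transfer the bound to $\oplus$\ckxor$(n,b,g)$ (resp. $\oplus$\cksum$(n,b,g)$), again keeping $g$ small enough that $kg=o(\lg n)$.

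The only mildly delicate points, rather than a single hard obstacle, are bookkeeping: making sure the $\tO(n)=n^{1+o(1)}$ overhead of Theorem \ref{thm:unfactoredToFactoredOld} and the $n^{o(1)}$ overhead of Lemma \ref{lem:decisionToParity} are genuinely sub-polynomial relative to the $n^{\Omega(k)}$ (or $n^{\lceil k/2\rceil}$) target — which holds because $k$ is a constant, or more generally as long as $k=\omega(1)$ is small enough that these overheads do not swamp the bound — and checking that the parity reduction of Lemma \ref{lem:decisionToParity} composes correctly with the deterministic instance-reshaping of Theorem \ref{thm:unfactoredToFactoredOld} (it does, since the latter is an exact one-to-one reduction preserving the number of solutions, so parities are preserved). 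I would also explicitly note that because Theorem \ref{thm:unfactoredToFactoredOld} maps each original vector to a factored vector with exactly one string per set, the resulting factored instances are supported on a very restricted sub-distribution, but this is fine: we only need a worst-case lower bound on factored parity problems here, which is then amplified to average-case hardness in Step 3 via Theorem \ref{thm:frameworkImprovement}.
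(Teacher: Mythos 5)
Your proof is correct and follows essentially the same chain the paper uses: worst-case hardness with small dimension/range (Lemma~\ref{lem:dimensionReductionETH} resp.\ Lemma~\ref{lem:lengthForkSumkXOR}), then the decision-to-parity reduction (Lemma~\ref{lem:decisionToParity}), then the count-preserving un-factored-to-factored reduction (Theorem~\ref{thm:unfactoredToFactoredOld}). You are actually slightly more explicit than the paper's two-sentence proof, which implicitly folds the decision-to-parity step into its citation of Theorem~\ref{thm:unfactoredToFactoredOld} via the surrounding discussion; your remarks about padding, the role of $kg=o(\lg n)$, and parity preservation under the exact one-to-one reduction are all correct and complete the bookkeeping the paper leaves implicit.
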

\begin{proof}
Item \ref{item:2} directly results from Lemma \ref{lem:dimensionReductionETH} and Theorem \ref{thm:unfactoredToFactoredOld}.

Using Lemma \ref{lem:lengthForkSumkXOR} and Theorem \ref{thm:unfactoredToFactoredOld} we have that the worst case $\oplus$\ckxor~and $\oplus$\cksum~problems require $n^{\lceil k/2 \rceil-o(1)}$ time when $bg \ge k\lg(n)+2$ if the $k$-XOR hypothesis or the $k$-SUM hypothesis hold respectively, and hence we get items \ref{item:3} and \ref{item:4}. 
\end{proof}



\subsection{Step 2: Transferring Between Problems}
If the problems $P$ and $Q$ are the same there is no need for this step. If they are not the same problem then we will use the results from Section \ref{sec:betweenProblem-bg} to transfer between the problems. 

\subsection{Step 3: Worst-case to average-case reduction of factored problems}
Lemma \ref{lem:factored-gdlp} states that for any problem \ckfunc[]$(n,b,g)$, there is a good $kg$-degree polynomial. Then from Theorem \ref{thm:framework-parity}, we have the following hardness result from worst case \ckfunc[] to average case \ckfunc. 

\begin{theorem}
Let $P$ be any factored problem $\oplus$\ckfunc$(n,b,g)$~with $n$ factored vectors made up of $g$ subsets of $\{0,1\}^b$. 
Then an algorithm for $P$ on the uniform average-case that runs in time $T(n)$ and succeeds with probability $1-\frac{2^{-kg}}{8}$ implies a worst case randomized algorithm that succeeds with probability $3/4$ that runs in $2^{kg}T(n)$ time. 
\label{thm:frameworkImprovement}
\end{theorem}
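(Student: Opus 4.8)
The plan is to simply instantiate Theorem \ref{thm:framework-parity} (the binary-output version of the framework) using the good low-degree polynomial for factored problems supplied by Lemma \ref{lem:factored-gdlp}. First I would recall that $\oplus$\ckfunc$(n,b,g)$ has a binary output: it returns the parity of the count \ckfunc$(V_1,\ldots,V_k)$, so $P(I)\in\mathbb{Z}_2$ as required by Theorem \ref{thm:framework-parity}. Next, by Lemma \ref{lem:factored-gdlp}, the (non-parity) counting problem \ckfunc$(n,b,g)$ has a \emph{good $kg$-degree polynomial} $Q$ over $\mathbb{Z}$, i.e.\ $Q$ is strongly $kg$-partite and $Q(I)=$\ckfunc$(I)$ for all Boolean inputs $I$. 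Reducing $Q$ modulo $2$ gives a $kg$-degree polynomial $f$ with $f(I)\equiv$\ckfunc$(I)\equiv\oplus$\ckfunc$(I)\pmod 2$, which is exactly the hypothesis ``$f$ is a $d$-degree polynomial with $P(I)=f(I)\pmod 2$'' needed by Theorem \ref{thm:framework-parity}, with $d:=kg$.

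With $d=kg$ in hand, the rest is bookkeeping. The success probability demanded by Theorem \ref{thm:framework-parity} is $1-1/2^{d+3}=1-2^{-(kg+3)}=1-\tfrac{1}{8}\cdot 2^{-kg}=1-\tfrac{2^{-kg}}{8}$, matching the hypothesis of the statement verbatim. Applying Theorem \ref{thm:framework-parity} then yields a randomized algorithm $B$ that is correct with probability at least $3/4$ on \emph{any} input, running in time $O\!\left(2^{d+1}(n+T(n))\right)=O\!\left(2^{kg}(n+T(n))\right)$. Since the number of input bits of a factored instance is $n\cdot g\cdot 2^b = \tO(n)$ for the parameter regimes of interest (and in any case polynomial in $n$), the $n$ term is dominated and the running time is $2^{kg}\,T(n)$ up to lower-order factors, as claimed.

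The only subtlety — and the one place I would be careful — is matching the notion of ``input size'' between the two theorems: Theorem \ref{thm:framework-parity} measures $n$ as the number of \emph{input bits}, whereas Theorem \ref{thm:frameworkImprovement} uses $n$ for the number of \emph{factored vectors}. One must check that the polynomial $f$ from Lemma \ref{lem:factored-gdlp} is a polynomial in the Boolean variables indexing the bits of the $g\cdot 2^b$-bit encodings of the factored vectors, so that ``sampling $\vec v$ uniformly from $\mathbb{Z}_2^{(\text{bit length})}$'' coincides with the uniform distribution over factored instances used in the statement. This is immediate from how \cite{factoredProblems} constructs the polynomial (each monomial picks one bit-variable from each of the $kg$ partitions). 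No genuine obstacle arises; the content of the theorem is entirely in Theorem \ref{thm:framework-parity} and Lemma \ref{lem:factored-gdlp}, and this step is their composition. \qed
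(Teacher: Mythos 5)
Your proposal is correct and follows exactly the same route as the paper: cite Lemma~\ref{lem:factored-gdlp} to obtain a good $kg$-degree polynomial for \ckfunc$(n,b,g)$, reduce it modulo $2$, and feed it to Theorem~\ref{thm:framework-parity} with $d=kg$. The paper's proof is precisely this two-line composition; your extra care about the bit-level vs.\ vector-level input-size convention and the constant factors in the runtime is sound bookkeeping that the paper leaves implicit.
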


\subsection{Step 4: Reduction from Factored to Un-factored Versions}
\label{sec:step4}
In this section we are going to present a reduction from factored problems to their un-factored versions. Each factored vector has $g$ sets of $b$-bit strings, so we are going to treat these $g$ sets as additional partitions, and hence we are going to have $gk$ partitions. We are going to make these $b$-bit strings longer, to encode which factored vector they are coming from. Hence we are going to represent a single factored vector with $g$ subsets of $\{0,1\}^b$ with $g2^b$ vectors of length $k\cdot g \cdot \lg(n)+bg$, in $g$ new partitions.  

The reductions from factored to un-factored versions of problems will be quite inefficient. However, even these inefficient reductions will give us meaningful new lower bounds on these problems. In some sense the key insight of this reduction is that factored problems present a lossy way to re-write our problems as low degree polynomials. 

\begin{restatable}{theorem}{factoredtounfactored}
We have the following reductions from factored problems to their un-factored versions.
\begin{itemize}
    \item An instance of \ckxor$(n,b,g)$~
    can be turned into one instance of $kg$-XOR with $k2^bn$ vectors of length $bg+(k-1)g\lg(n)$. 
    \item An instance of \ckov$(n,b,g)$
    can be turned into one instance of $kg$-OV with $k2^bn$ vectors of length $bg+2(k-1)g\lg(n)$. 
    \item An instance of \cksum$(n,b,g)$~
    can be turned into one instance of $kg$-SUM with $k2^bn$ vectors of length $(b+\lg(k))g+2(k-1)g(\lg(n)+\lg(k))$.
\end{itemize}
\label{thm:factoredToUnfactored}
\end{restatable}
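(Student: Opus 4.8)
<br>

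The plan is to turn each $(b,g)$-factored vector into $g$ ordinary vectors, one for each of its sets. From a factored instance with lists $I_1,\dots,I_k$ I build a $kg$-ary instance with lists $I^*_1,\dots,I^*_{kg}$, where $I^*_{(i-1)g+j}$ contains, for every factored vector $\vec v\in I_i$ and every string $u\in\vec v[j]$, a single encoded vector $\mathrm{enc}(\vec v,u)$. Each encoded vector consists of a \emph{data region} of $g$ consecutive blocks followed by a \emph{check region}. In $\mathrm{enc}(\vec v,u)$ I put $u$ into block $j$ of the data region and the identity element of the relevant operation into every other block --- all-ones for $k$-OV (since $1\wedge x=x$), the zero string for $k$-XOR, the zero number for $k$-SUM (after the padding described below). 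With this choice, combining the $g$ vectors chosen from one group $\{(i-1)g+1,\dots,(i-1)g+g\}$ --- \emph{assuming they all come from a single factored vector $\vec v_i$} --- reconstitutes in the data region exactly the un-factored vector $w_i^1\bullet\cdots\bullet w_i^g\in S(\vec v_i)$; and then combining all $kg$ vectors makes block $j$ of the data region vanish precisely when $\mathfrak{f}(w_1^j,\dots,w_k^j)=0$. So, modulo the single-$\vec v_i$ proviso, the solution condition on the data region is exactly the factored constraint for every $j$.

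The job of the check region is to enforce that proviso. Since $|I_i|=n$, I index each $\vec v\in I_i$ by a string $\iota(\vec v)\in\{0,1\}^{\lceil\lg n\rceil}$ and give group $i$ its own disjoint block of check bits, subdivided into slots --- one per consecutive pair of sub-lists of the group --- each slot filled by a pairwise equality gadget and written with the operation's identity element by all vectors not participating in that slot. For $k$-XOR, in a slot shared by sub-lists $(i-1)g+\ell$ and $(i-1)g+\ell+1$ the first writes $\iota$ and the second writes $\iota'$, everyone else writes $0$, so the slot XORs to $\iota\oplus\iota'$, which is zero iff $\iota=\iota'$. For $k$-OV I use the forward/backward trick already used in Section~\ref{sec:betweenProblem-bg}: the first writes $\iota\bullet\bar\iota$, the second writes $\bar{\iota'}\bullet\iota'$, everyone else writes all-ones, so the slot ANDs to $(\iota\wedge\bar{\iota'})\bullet(\bar\iota\wedge\iota')$, which is zero iff $\iota=\iota'$, at twice the slot width. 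For $k$-SUM the first writes $\nu(\iota)$ and the second $-\nu(\iota')$, shifted to the slot's place value, everyone else writes $0$, so the slot contributes zero to the overall sum iff $\iota=\iota'$. Chaining these gadgets along a group forces all $g$ of its indices to agree; disjointness of the per-group check blocks together with the identity-element padding makes the gadgets mutually inert and inert with respect to the data region.

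The crux of the correctness proof --- and the reason a single reduction handles both the counting and the parity version --- is that the construction is a \emph{bijection} between solutions of the factored instance and solutions of the produced instance. A factored solution, i.e.\ a choice of $\vec v_i\in I_i$ and $w_i^j\in\vec v_i[j]$ with $\mathfrak{f}(w_1^j,\dots,w_k^j)=0$ for all $j$, maps to the $kg$-tuple $\big(\mathrm{enc}(\vec v_i,w_i^j)\big)_{i,j}$; this map is injective because $\mathrm{enc}(\vec v,u)$ determines the pair $(\iota(\vec v),u)$ and hence $(\vec v,u)$. Conversely, a solution of the produced instance must annihilate the check region, which by the chained gadgets forces each group to expose one common index, hence one common $\vec v_i$, so the tuple decodes to a legitimate factored tuple with $w_i^j\in\vec v_i[j]$; and annihilating the data region is exactly $\mathfrak{f}(w_1^j,\dots,w_k^j)=0$ for every $j$. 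Thus the solution multisets have the same cardinality, so in particular the same parity, and the parameter count follows: each list has at most $n2^b$ vectors (pad all $kg$ lists to a common size $k2^bn$ if uniform size is wanted), and the dimension is the $g$ data blocks plus the check region, namely $bg+2(k-1)g\lceil\lg n\rceil$ for $k$-OV and $bg+(k-1)g\lceil\lg n\rceil$ for $k$-XOR.

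The delicate case is $k$-SUM, where two requirements must be met simultaneously inside a small dimension. First, adding up to $k$ signed $b$-bit numbers in each data block must not carry or borrow into a neighbouring block, which forces each block to be widened by $\Theta(\lg k)$ bits and, correspondingly, widens the index slots by $\lg k$ --- this is what produces the stated length $(b+\lg k)g+2(k-1)g(\lg n+\lg k)$. Second, the signed index-equality gadget must be truly neutral: it must never add a spurious zero to the total for unequal indices and never leak a borrow across slots, which is why the slot is a constant factor wider and why one extra ``correction'' vector (in the spirit of $P'_0$ in Lemma~\ref{lem:SUMreductions}) is included to absorb the fixed offsets that keep every chunk nonnegative. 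Once the chunks are made independent in this way, the $k$-OV and $k$-XOR cases are the same argument with cleaner arithmetic, so I would present $k$-SUM last.
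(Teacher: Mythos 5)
Your construction is the paper's: you build $kg$ lists ($g$ per original partition), split each produced vector into a data region (one $b$-bit block per sub-list, filled with the operation's identity element elsewhere) and a check region of chained pairwise index-equality gadgets, one per consecutive pair inside each group, kept disjoint by identity padding; and you argue the resulting bijection between solution multisets so that the count, and hence the parity, carries over. The only place you drift from the paper is $k$-SUM: the paper's $\gamma_{SUM}$ gadgets already use signed contributions (terms like $-Y^{i-2}\ell$), so no absorbing partition is needed and the output remains a $kg$-SUM instance, whereas your ``correction vector in the spirit of $P'_0$ from Lemma~\ref{lem:SUMreductions}'' would add a $(kg{+}1)$-th list, a minor discrepancy with the stated parameter that does not affect the asymptotics.
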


\begin{proof}
The idea of these reductions is to generate an instance with $kg$ partitions. Each group of $g$ partitions $P'_{gj}, \ldots, P'_{gj+g-1}$ will represent a single partition $P_j$ from the original problem. Each factored vector from the original problem will be represented with at most $g2^b$ vectors, at most $2^b$ vectors in each of the $g$ partitions associated to the partition this vector is from in the original problem.
We will use $0^x$ and $1^x$ to refer to strings of length $x$ of all zeros and all ones respectively. We will use $\bullet$ to mean concatenation. Let $\nu_n(\ell)$ be a function from integers $\ell \in [0,n-1]$ to the zero one string indicating that number. For example, $\nu_{4}(3) =\text{`11'}$.

\paragraph{\ckxor~to $gk$-XOR}
First we will define a function $\gamma_{xor}(\vec{v}, \ell ,i,j)$ which takes a single factored vector $\vec{v}$, the index of $\vec{v}$ in its partition $\ell$, an index $i\in [1,g]$, and the partition index $j$. The function $\gamma_{xor}(\vec{v}, \ell ,i,j)$ is going to produce a set of strings, where each string in this set is associated to a string $s\in \vec{v}[i]$. The output strings will be having two sections. The first section is a validity check. The reason for the validity check is as follows: in the $gk$-XOR instance, we are going to select one vector from each of the $kg$ partitions, and as said above, partitions $P'_{g(j-1)+1}, \ldots, P'_{g(j-1)+g}$ will represent partition $P_j$ in the original instance. So the vectors chosen from each of these $g$ partitions must be from the same factored vector in order for the reduction to work. The validity check is going to enforce this property.  
More formally, the validity check ensures that strings chosen from $P'_{g(j-1)+i}$ and $P'_{g(j-1)+i+1}$ are from the same factored vector. We are going to have $k(g-1)$ validity checks, one for each $i=1,\ldots,g-1$, and to ensure no overlaps we ``separate" all $k(g-1)$ validity checks by putting them in a unique position in the string.

The second section of each output string is intended to encode the string $s\in \vec{v}[i]$ it is associated to. 
We encode $s$ in a way that when we consider all possible strings formed by xoring one string from each set $\gamma_{xor}(\vec{v}, \ell ,i,j)$ for all $i\in[1,g]$ the second parts of the strings will capture all the strings represented by the factored vector $\vec{v}$.

Let $H_j=0^{\lg(n)(g-1)(j-1)}$ and let $T_j=0^{\lg(n)(g-1) (k-j)}$, these are the zeros that separate the validity checks from each other. Let $\overline{\nu(\ell)}$ be the bitwise negation of the string $\nu(\ell)$.  
\begin{align}
\gamma_{xor}(\vec{v},\ell, 1,j) =& \{H_j \bullet  \nu_n(\ell) \bullet 0^{\lg(n)(g-2)} \bullet T_j &\bullet & s \bullet 0^{b(g-1)}&|&s\in \vec{v}[1] \}\\
\gamma_{xor}(\vec{v}, \ell, i,j) =& \{ H_j \bullet 0^{\lg(n)(i-2)} \bullet \overline{\nu_n(\ell)} \bullet  \nu_n(\ell) \bullet 0^{\lg(n)(g-i-1)} \bullet T_j  &\bullet & 0^{b(i-1)} \bullet s \bullet 0^{b(g-i)}\hspace{-0.1in}&|&s\in \vec{v}[i] \} \\
\gamma_{xor}(\vec{v},\ell, g,j) =&  \{H_j \bullet 0^{\lg(n)(g-2)} \bullet \nu_n(\ell) \bullet T_j &\bullet & 0^{b(g-1)} \bullet s&|&s\in \vec{v}[g] \}
\end{align}
Now we can define our new sets $P'_{g(j-1)+i}$ for $j=1,\ldots,k$: 
$$P'_{g(j-1)+i} = \bigcup_{\vec{v_\ell} \in P[j]} \gamma_{xor}(\vec{v}_\ell,\ell, i ,j).$$
So each new partition $P'_{g(j-1)+i}$ is the union of all of the sets representing the $i^{th}$ groups of strings from factored vectors in partition $P_j$. Now we show why the reduction works.

First suppose that we pick $g$ strings $s_{g(j-1)+i}$ from $P'_{g(j-1)+i}$ for $i\in[1,g]$. Then for $i=1,\ldots,g-1$ the bits from $\lg(n)(g-1)(j-1)+\lg(n)(i-1)+1$ to $\lg(n)(g-1)(j-1)+\lg(n)i+1$ are zero for all strings except the strings chosen from $P'_{gj+i}$ and $P'_{gj+i+1}$. In order for the xor of these strings to be zero, it must be that these two strings are both having the same value of $\ell$, so they must be from the same factored vector. 
So the bits from $[\lg(n)(g-1)(j-1) , \lg(n)(g-1)j]$ will XOR to zero iff all of these strings were generated with the same value of $\ell$. So, a choice of $gk$ strings will only XOR to the all zeros string on the bits $[0,\lg(n)(g-1)k]$ iff for all $j$ the strings $s_{g(j-1)+i}$ from $P'_{g(j-1)+i}$ were generated with the same $\ell$. Note that we select a unique $\ell$ for each factored vector in each partition. This shows that each choice of $kg$ vectors in the un-factored instance is corresponding to a choice of $k$ factored vectors (with a choice of $b$-bit strings) in the factored instance. From the definition of the un-factored version it can be seen that each choice of $k$ factored vectors (with a choice of $b$-bit strings) corresponds to a choice of $kg$ vectors in the un-factored instance, and essentially these two correspondences are the same.   

Now it is easy to see that the XOR of these $gk$ vectors is zero if and only if the XOR of the corresponding factored vectors with the corresponding choices of $b$-bit strings in the factored instance is zero. Note that each factored vector is choosing one vector in each of its $g$ sets, so each factored vector in the factored instance is representing a $bg$-bit string in the solution. Then $bg$ last bits of the XOR of strings chosen from $P'_{g(j-1)+i}$ for $i=1,\ldots,g$ creates this $gb$-bit vector that the factored vector from partition $j$ chooses in the solution.


Since each solution in the factored instance is corresponding to a unique solution in the un-factored instance, the number of solutions to the new $(gk)$-XOR problem is equal to the number of solutions to the original \ckxor~problem.

\paragraph{\ckov~to $gk$-OV} We are taking the same idea as above, but, making it work for for bitwise AND instead of bitwise XOR. First we will define a function $\gamma_{OV}(\vec{v}, \ell ,i,j)$ which takes a factored vector $\vec{v}$, the index $\ell$ of that factored vector in its partition, an index $i\in [1,g]$, and the partition index $j$. Redefine  $H_j=1^{2\lg(n)(g-1)(j-1)}$ and let $T_j=1^{2\lg(n)(g-1) (k-j)}$, these are the ones that separate the validity checks from each other. Let $\lambda_n(\ell) = \overline{\nu(\ell)} 
\bullet \nu(\ell)$ and let $\overline{\lambda_n(\ell)} = \nu(\ell) \bullet \overline{\nu(\ell)}$.

\begin{align}
\gamma_{OV}(\vec{v},\ell, 1,j) =& \{H_j \bullet  \lambda_n(\ell) \bullet 1^{2\lg(n)(g-2)} \bullet T_j &\bullet & s \bullet 1^{b(g-1)}&|&s\in \vec{v}[1] \}\\
\gamma_{OV}(\vec{v}, \ell, i,j) =& \{ H_j \bullet 1^{2\lg(n)(i-2)} \bullet \overline{\lambda_n(\ell)} \bullet  \lambda_n(\ell) \bullet 1^{2\lg(n)(g-i-1)} \bullet T_j  &\bullet & 1^{b(i-1)} \bullet s \bullet 1^{b(g-i)}\hspace{-0.1in}&|&s\in \vec{v}[i] \} \\
\gamma_{OV}(\vec{v},\ell, g,j) =&  \{H_j \bullet 1^{2\lg(n)(g-2)} \bullet \overline{\lambda_n(\ell)} \bullet T_j &\bullet & 1^{b(g-1)} \bullet s&|&s\in \vec{v}[g] \}
\end{align}

Now we can define our new sets $P'_{g(j-1)+i}$:
$$P'_{g(j-1)+i} = \bigcup_{\vec{v_\ell} \in P[j]} \gamma_{OV}(\vec{v}_\ell,\ell, i ,j).$$
So each new partition $P'_{g(j-1)+i}$ is the union of all of the sets representing the $i^{th}$ groups of strings from factored vectors in partition $P_j$. As in the $k$-XOR reduction, our validity checks in the first half of the strings validate that we have selected $k$ factored vectors, so each orthogonal $gk$-tuple of vectors correspond to $k$-orthogonal factored vectors in the factored instance with a choice of $b$-bit strings in each of their $g$ sets. Moreover, each choice of $k$ factored vectors with a choice of $b$-bit strings corresponds to a choice $kg$ vectors in the un-factored instance. Now the second part of the strings defined in $\gamma_{OV}$ shows that $k$ factored vectors with a choice of $b$-bit strings are orthogonal if and only if their corresponding $kg$ vectors in the un-factored set are orthogonal. To see this, note that each factored vector is choosing one vector in each of its $g$ sets, so each factored vector in the factored insance is represenging a $bg$-bit string. In the un-factored instance, the last $bg$ bits of the bitwise AND of the strings chosen from $P'_{g(j-1)+i}$ for $i=1,\ldots,g$ create this $bg$ bit vector that the factored vector from partition $j$ represents.  
So the number of solutions to the new $(gk)$-OV problem is equal to the number of solutions to the original \ckov~problem.

\paragraph{\cksum~to $gk$-SUM}
We are taking the same idea as above, but, making it work for addition, instead of bitwise AND or XOR. Let $X=2^{b+\lg(k)}$, $Y = 2^{(\lg(n)+\lg(k))}$, and $Z=Y^{(g-1)}$.
\begin{align}
\gamma_{SUM}(\vec{v},\ell, 1,j) =& \{X^gZ^{j-1}(\ell) &+& s&|&s\in \vec{v}[1] \}\\
\gamma_{SUM}(\vec{v}, \ell, i,j) =& \{X^gZ^{j-1}(Y^{i-1} \ell -Y^{i-2}\ell) \hspace{-1in} &+& s X^{i-1} &|&s\in \vec{v}[i] \} \\
\gamma_{SUM}(\vec{v},\ell, g,j) =&  \{-X^gZ^{j-1}Y^{g-2}\ell &+& s X^{g-1} &|&s\in \vec{v}[g] \}
\end{align}
Now we can define our new sets $P'_{g(j-1)+i}$:
$$P'_{g(j-1)x+i} = \bigcup_{\vec{v_\ell} \in P[j]} \gamma_{SUM}(\vec{v}_\ell,\ell, i ,j).$$
So each new partition $P'_{g(j-1)+i}$ is the union of all of the sets representing the $i^{th}$ groups of strings from factored vectors in partition $P_j$. As in the $k$-XOR and $k$-OV reductions, our validity checks in the first half of the strings validate that we have selected $k$ factored vectors. Note that the part of the string that encodes $\ell$ is multiplied by a large power of $2$, to make it separate from the second part of the string. Note that again the last $bg$ bits of each string associated to a $b$-bit string $s$ are meant to encode $s$. Similar as previous cases, it can be seen that the solutions in the factored instance and un-factored instance correspond to each other, and hence 
the number of solutions to the new $(gk)$-SUM problem is equal to the number of solutions to the original \cksum~problem.
\end{proof}

\subsection{Average-case hardness of $\oplus K$-OV}\label{sec:ac-kov}
We are going to prove Theorem \ref{thm:ovxorhard} for $\oplus K$-OV. We want to show that for any $K$ and $N$, $\oplus K$-OV of size $N$ is hard over some distribution. We start from step 4 and work our way back to step 1, figuring out the necessary parameters based on $K$ and $N$. See Figure \ref{fig:ov_map}.

\begin{figure}
    \centering
    \includegraphics[width=\linewidth]{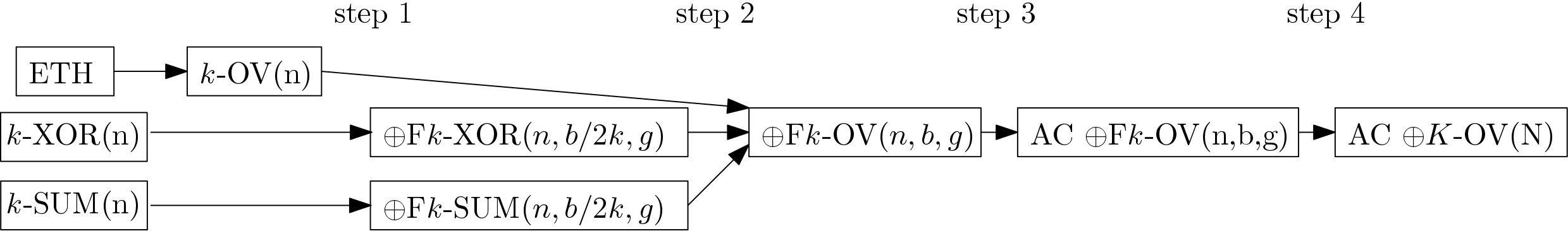}
    \caption{Reductions to average case $K$-OV of size $N$. The size of the unfactored problems is mentioned as a parameter in front of them. To see what the values of $b,g,n$ and $k$ are in terms of $K$ and $N$ see Table \ref{tab:ov-reduction-param-values}.}
    \label{fig:ov_map}
\end{figure}

\begin{table}
    \centering
    \begin{tabular}{|c|c|c|c|}
    \hline
         & $\oplus k$-OV($n$)& $\oplus k$-XOR($n$)& $\oplus k$-SUM($n$) \\ \hline
        $b$ & $\lg{N}$& $ \lg{N}$& $ \lg{N}$\\ \hline
        $g$ & $\sqrt{K}$ & $K^{2/3}$ & $K^{2/3}$\\ \hline
        $n$ & $\sqrt{N}$& $\sqrt{N}$& $\sqrt{N}$\\ \hline
        $k$ & $\sqrt{K}$& $K^{1/3}$&$K^{1/3}$ \\ \hline
    \end{tabular}
    \caption{parameter values for reductions from $k$-OV, $k$-XOR and $k$-SUM to average case $K$-OV, where the starting problem is of size $n$ and $K$-OV is of size $N$. The exact values are within constant factor away from the values mentioned in the table.}
    \label{tab:ov-reduction-param-values}
\end{table}

We define the distribution using Theorem \ref{thm:factoredToUnfactored}. Choosing appropriate parameters $b$ and $g$, this Theorem gives a reduction from $\oplus$\ckov$(\frac{Ng}{2^bK},b,g)$ to $\oplus K$-OV of size $N$ with vectors of dimension $bg+(k-1)g\lg{n}$ where $k:=K/g$ (step 4).
Starting from a uniform distribution on $\oplus$\ckov$(\frac{Ng}{2^bK},b,g)$, we get a distribution for $\oplus K$-OV through this reduction. We call this distribution $D_{OV}(N,K,b,g)$.

Let $n:=\frac{Ng}{2^bK}$. To show that $\oplus K$-OV is hard on this distribution with some success probability $q$, we have to show that $\oplus$\ckov$(n,b,g)$ is hard on uniform distribution with success probability $q$ and we will derive the appropriate value for $g$ and $b$.
We show multiple hardness for average case $\oplus$\ckov$(n,b,g)$ under different hypothesis. 

First note that by Theorem \ref{thm:frameworkImprovement} if worst-case $\oplus$\ckov$(n,b,g)$ requires $T(n)$ time, then uniform average-case $\oplus$\ckov$(n,b,g)$ with success probability $q=1-2^{-kg}/8=1-2^{-K}/8$ requires $T(n)/2^{kg}=T(n)/2^{K}$ (step 3). So we have to find a lower bound for $\oplus$\ckov$(n,b,g)$ in the worst case. 
First we show hardness under rETH.

\begin{theorem}
Let $K$ be a constant. Under rETH, any algorithm that solves average case $\oplus K$-OV of size $N$ with vectors of dimension $\Theta(K\lg{N})$ with probability $1-\frac{2^{-K}}{8}$ where the input is drawn from $D_{OV}(N,K,O(\lg{N}),\sqrt{K})$ distribution  requires $N^{\Omega(\sqrt{K})}$ time.
\label{thm:seth-to-ov}
\end{theorem}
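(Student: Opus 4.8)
The plan is to run the four-step roadmap of this section with $P=Q=\mathrm{OV}$, so that step~2 (transferring between problems) is vacuous, instantiated with the parameters of Table~\ref{tab:ov-reduction-param-values}: pick $g=\Theta(\sqrt K)$ and $k=K/g=\Theta(\sqrt K)$ (so $kg=K$), and set $b=\Theta(\lg N)$ and $n=\Theta(\sqrt N)$, with the hidden constants pinned down by the constraints below. The distribution $D_{OV}(N,K,b,g)$ is, by definition, the push-forward of the uniform distribution on $\oplus$\ckov$(n,b,g)$ instances through the step-4 reduction of Theorem~\ref{thm:factoredToUnfactored}. Since $kg=K$ is a constant, every $2^{kg}=2^K$ factor picked up along the way is $O(1)$ and vanishes into the final $N^{\Omega(\sqrt K)}$ bound.

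First I would establish the worst-case hardness of the factored problem. Lemma~\ref{lem:selfReductionsWithbg} (item~\ref{item:2}), which composes Williams' sparsified $3$-SAT to $k$-OV reduction (Lemma~\ref{lem:dimensionReductionETH}) with the un-factored-to-factored reduction of Theorem~\ref{thm:unfactoredToFactoredOld}, states that there is a fixed constant $c$ such that, assuming rETH, worst-case $\oplus$\ckov$(n,b,g)$ requires $n^{\Omega(k)}$ time even for randomized algorithms, provided $bg=ck\lg n$ and $kg=o(\lg n)$. Solving these together with the output-size requirement $N=k2^bn$ of Theorem~\ref{thm:factoredToUnfactored} forces $g=\Theta(\sqrt K)$, $b=\Theta(\lg N)$ and $n=\Theta(\sqrt N)$ (the constant $c$ conveniently cancels out of $b$ and $n$); the second constraint $kg=K=o(\lg n)$ holds since $K$ is constant while $n$ is polynomial in $N$. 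Hence worst-case $\oplus$\ckov$(n,b,g)$ requires $n^{\Omega(k)}=N^{\Omega(\sqrt K)}$ time.

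For step~3, Lemma~\ref{lem:factored-gdlp} furnishes $\oplus$\ckov$(n,b,g)$ with a good $kg$-degree polynomial, so (reducing it mod~$2$) Theorem~\ref{thm:frameworkImprovement} applies: its contrapositive says that if worst-case $\oplus$\ckov$(n,b,g)$ needs $\tau$ time then the \emph{uniform} average-case version with success probability $1-2^{-kg}/8=1-2^{-K}/8$ needs $\tau/2^{K}=N^{\Omega(\sqrt K)}$ time. For step~4, Theorem~\ref{thm:factoredToUnfactored} turns an instance of \ckov$(n,b,g)$ into a single $\oplus K$-OV instance of size $N=k2^bn$ with vectors of dimension $bg+2(k-1)g\lg n=\Theta(K\lg N)$; this reduction is deterministic, runs in time $\tilde{O}(N)$, and is one-to-one on solutions, hence it preserves the parity of the count and maps the uniform distribution on factored instances to exactly $D_{OV}(N,K,b,g)$. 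Consequently, an algorithm solving average-case $\oplus K$-OV over $D_{OV}(N,K,b,g)$ in time $T(N)$ with success probability $1-2^{-K}/8$ yields, after the $\tilde{O}(N)$-time reduction, an algorithm for uniform average-case $\oplus$\ckov$(n,b,g)$ with the same success probability and running time $T(N)+\tilde{O}(N)$, which must be $N^{\Omega(\sqrt K)}$. Thus $T(N)\ge N^{\Omega(\sqrt K)}$, and the dimension $\Theta(K\lg N)$ matches the statement.

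The main work — routine rather than deep — is the simultaneous bookkeeping of the three parameter relations $bg=ck\lg n$, $kg=o(\lg n)$ and $N=k2^bn$: one must check they are jointly satisfiable and pin down the exponents ($k,g=\Theta(\sqrt K)$, $n=\Theta(\sqrt N)$, $b=\Theta(\lg N)$) that make the chain deliver the target input size and dimension, which is exactly where the AM–GM-optimal split $g\approx\sqrt{cK}$ comes from. The one genuinely delicate point is that an arbitrary constant $K$ need not factor as $k\cdot g$ with both factors $\Theta(\sqrt K)$; this is handled by choosing integers $k,g$ with $kg\le K$ and $k,g=\Theta(\sqrt K)$ and then padding the resulting $\oplus(kg)$-OV instance up to $\oplus K$-OV with $K-kg$ dummy partitions each containing only the all-zeros vector (appending the all-zeros vector extends each orthogonal tuple to exactly one orthogonal tuple, so the solution count, hence its parity, is preserved), absorbing the $O(1)$ slack into the $\Omega(\cdot)$. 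Everything else is a direct concatenation of Lemma~\ref{lem:selfReductionsWithbg}, Theorem~\ref{thm:frameworkImprovement}, and Theorem~\ref{thm:factoredToUnfactored}.
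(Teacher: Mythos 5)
Your proposal is correct and follows essentially the same approach as the paper's proof: the identical four-step roadmap with $P=Q=\mathrm{OV}$, the same parameter choices $g=k=\Theta(\sqrt K)$, $b=\Theta(\lg n)$, $n=N^{1/\Theta(1)}$, and the same composition of Lemma~\ref{lem:selfReductionsWithbg}~(item~\ref{item:2}), Theorem~\ref{thm:frameworkImprovement}, and Theorem~\ref{thm:factoredToUnfactored}, with the constant-$K$ observation absorbing all the $2^{K}$ blow-ups. Your padding remark for non-perfect-square $K$ is a reasonable addendum (the paper only writes $g=\lfloor\sqrt K\rfloor$ and leaves the divisibility of $K$ by $g$ unaddressed), though as stated the dummy partitions would have a single vector rather than the nominal per-list size, which slightly changes the resulting distribution's shape; this is harmless for the asymptotic statement but worth flagging if one were to state $D_{OV}$ exactly.
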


\begin{proof}
By Theorem \ref{lem:selfReductionsWithbg} (step 1), under rETH, $\oplus$\ckov$(n,b,g)$ requires $n^{\Omega{(k)}}$ if 
\begin{enumerate}
    \item $bg>ck\lg{n}$ 
    \item $kg = o(\lg{n})$
\end{enumerate}

Note that the second constraint is equivalent to $K=o(\lg{n})$.
We are going to chose the value of $g$ and $b$ as follows and then we argue why these values give us the best bound we can get. Let $b=2c\lg{n}-0.5\lg{K}>c\lg{n}$, and let $g=\sqrt{K}$. We show that with this choice of parameters the constraints are satisfied: The first constraint is equivalent to $bg^2>c K\lg{n}$ which is clearly satisfied. We have that $n=\frac{Ng}{2^bK}=\frac{N\sqrt{K}}{Kn^{2c}/\sqrt{K}}=N/n^{2c}$. So $N=n^{1+2c}$. Since $K$ is a constant with respect to $N$ and $\lg{N}=O(\lg{n})$, the second constraint is satisfied as well. Note that in this case the dimension of vectors in the $K$-OV instance is $bg+(k-1)g\lg{n}=\Theta(K\lg{n})$. So from rETH we get that average case $\oplus K$-OV of size $N$ requires $N^{\Omega(\sqrt{K}/(1+2c))}=N^{\Omega(\sqrt{K})}$.

Now we show that our choice of parameters is optimal in the sense that in the exponent of $N$, the exponent of $K$ cannot be any constant more than $1/2$. First note that because of the second constraint, $b=\Omega{(\lg{n})}$. Let $b=O(\lg{n}{K}^t)$ for some constant $t$. Then we must have $g^2=\Omega{(K^{1-t})}$, so we let $g=K^{1-t}$. This means that we have $n=\Theta_K{(N^{1/(1+K^t)})}$, and so the exponent of the lower bound that rETH gives us is $\Theta{(K^{1/2-t/2})}$. Setting $t=0$, we get the values for $b$ and $g$ that we set first. Where we set $g=\floor{\sqrt{K}}$. 
\end{proof}


\begin{theorem}
\label{thm:xor-to-ov}
Let $K$ be a constant. Any algorithm that solves $\oplus K$-OV of size $N$ with vectors of dimension $\Theta(K\lg{N})$ with probability $1-\frac{2^{-K}}{8}$ where the input is drawn from $D_{OV}(N,K,0.5\lg{\frac{16N}{K^{1/3}}},K^{2/3})$ requires at least $N^{K^{1/3}/4-o(1)}$ time assuming $K^{1/3}$-XOR Hypothesis.
\end{theorem}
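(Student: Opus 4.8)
The plan is to run the four-step pipeline of Section~\ref{sec:WCACxor} in reverse, exactly as in the proof of Theorem~\ref{thm:seth-to-ov}, but now starting the chain from the $K^{1/3}$-XOR hypothesis and inserting a problem-transfer step to pass from XOR to OV. Set $\kappa:=K^{1/3}$ and $g:=K^{2/3}$, so $\kappa g=K$; every factored problem below has $g$ sets, which is exactly what lets Step~4 land on a problem with $\kappa g=K$ partitions. By construction $D_{OV}(N,K,b,g)$ with $b:=\tfrac12\lg\tfrac{16N}{K^{1/3}}$ is the push-forward, under the factored-to-unfactored reduction of Theorem~\ref{thm:factoredToUnfactored}, of the uniform distribution on $\oplus$F$\kappa$-OV$(n_0,b,g)$ where $n_0:=\tfrac{Ng}{2^bK}$; a one-line computation gives $n_0=\Theta_K(\sqrt N)$, the reduction produces exactly $N$ vectors, and their dimension is $bg+2(\kappa-1)g\lg n_0=\Theta(K\lg N)$. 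Composing a time-$T(N)$ average-case algorithm for $\oplus K$-OV over $D_{OV}$ with success probability $1-2^{-K}/8$ with this $\tilde O(N)$-time map therefore yields an average-case algorithm for $\oplus$F$\kappa$-OV$(n_0,b,g)$ over the uniform distribution with the same success probability in time $T(N)+\tilde O(N)$, so it suffices to lower-bound the latter.

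Now inject worst-case hardness via the remaining three steps. Step~3: Theorem~\ref{thm:frameworkImprovement} turns a uniform average-case algorithm for $\oplus$F$\kappa$-OV$(n_0,b,g)$ with success probability $1-\tfrac{2^{-\kappa g}}{8}=1-\tfrac{2^{-K}}{8}$ and running time $T$ into a worst-case randomized algorithm of running time $2^KT$. Step~2: Theorem~\ref{thm:ovXORsumFromAny} supplies a reduction $\oplus$F$\kappa$-XOR$(n_0,b_\star,g)\to\oplus$F$\kappa$-OV$(n_0,\Theta(\kappa)\,b_\star,g)$ taking time $O(n_0 g\,2^{\Theta(\kappa b_\star)})$; pick $b_\star$ with $\Theta(\kappa)\,b_\star=b$, i.e.\ $b_\star=\Theta_K(\lg n_0)$, which also makes this reduction time $O(n_0 g\,2^{b})=\tilde O(N)$. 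Step~1: by item~(\ref{item:3}) of Lemma~\ref{lem:selfReductionsWithbg}, the $\kappa$-XOR hypothesis implies that worst-case $\oplus$F$\kappa$-XOR$(n_0,b_\star,g)$ needs $n_0^{\lceil\kappa/2\rceil-o(1)}$ time, provided $b_\star g\ge\kappa\lg n_0+2$ and $\kappa g=o(\lg n_0)$; the latter is $K=o(\lg n_0)$, true since $K$ is a fixed constant and $n_0\to\infty$. Chaining the four reductions, a time-$T(N)$ algorithm for $\oplus K$-OV over $D_{OV}$ gives a worst-case randomized algorithm for $\oplus$F$\kappa$-XOR$(n_0,b_\star,g)$ of running time $2^K(T(N)+\tilde O(N))=\tilde O(T(N)+N)$; since this must be at least $n_0^{\lceil\kappa/2\rceil-o(1)}$ and $n_0=\Theta_K(\sqrt N)$ with $\lceil\kappa/2\rceil\ge2$ (so $n_0^{\lceil\kappa/2\rceil}=\Theta_K(N^{\lceil\kappa/2\rceil/2})\gg N$), we conclude $T(N)\ge N^{\lceil\kappa/2\rceil/2-o(1)}\ge N^{K^{1/3}/4-o(1)}$.

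The only genuine content is the joint choice of $b$ and $g$ — everything else is a mechanical concatenation of earlier theorems. These parameters must simultaneously (i) leave the factored-XOR self-reduction its bit budget $b_\star g\ge\kappa\lg n_0+2$ \emph{after} the $\Theta(\kappa)$-factor blow-up of the XOR$\to$OV step has shrunk $b_\star$ relative to $b$; (ii) keep Step~4's multiplicative size blow-up $N=\kappa\,2^b n_0$ small enough that $n_0$ stays a fixed root of $N$ (here $n_0=\Theta_K(\sqrt N)$), since the exponent we finally obtain is $\lceil\kappa/2\rceil\cdot\log_N n_0$; and (iii) keep $\lceil\kappa/2\rceil$ strictly above the exponents of the polynomial overheads of the Step~2 and Step~4 reductions, which holds once $K$ is a large enough constant. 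Checking that $b=\tfrac12\lg\tfrac{16N}{K^{1/3}}$ and $g=K^{2/3}$ thread this needle — and, in particular, that the XOR$\to$OV factored reduction of Theorem~\ref{thm:ovXORsumFromAny} has a small enough constant in its $\Theta(\kappa)$ blow-up for (i) to survive — is where the work lies, and is exactly what forces the exponent down from the $\Omega(\sqrt K)$ of the rETH case (Theorem~\ref{thm:seth-to-ov}) to $\Omega(K^{1/3})$ here.
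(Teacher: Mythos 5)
Your pipeline is exactly the paper's own (Step~1: $\kappa$-XOR hardness of factored $\oplus$\ckxor[]; Step~2: factored XOR$\to$OV; Step~3: worst-to-average for factored $\oplus$\ckov[]; Step~4: factored-to-unfactored defining $D_{OV}$), and your parameter choices match the theorem statement. The problem is that you punt on precisely the check you yourself single out as ``where the work lies'': you never verify the bit-budget inequality $b_\star g \ge \kappa\lg n_0 + 2$ from item~(\ref{item:3}) of Lemma~\ref{lem:selfReductionsWithbg}. If you actually do it, it fails. Theorem~\ref{thm:ovXORsumFromAny} blows up the bit length by a factor of exactly $2\kappa$ on the way from factored XOR to factored OV, so to land on $\oplus$\ckov$(n_0,b,g)$ you are forced to take $b_\star = b/(2\kappa)$, giving
\[
b_\star g \;=\; \frac{bg}{2\kappa} \;=\; \frac{b\,K^{2/3}}{2\,K^{1/3}} \;=\; \frac{b\,K^{1/3}}{2},
\]
and the requirement $b_\star g \ge \kappa\lg n_0 + 2$ becomes $b \ge 2\lg n_0 + 4K^{-1/3}$. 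But your $b = \tfrac12\lg\tfrac{16N}{K^{1/3}}$ equals $\lg n_0 + 4$ once you substitute $n_0 = Ng/(2^b K)$, i.e.\ it supplies only \emph{one} copy of $\lg n_0$ where two are needed, so the condition is violated for every $n_0$ with $\lg n_0 > 4$. Your phrasing (``$\Theta(\kappa)\,b_\star = b$,'' hoping the hidden constant is small enough) acknowledges the danger without resolving it; the hidden constant is exactly $2$, and it sinks the inequality.

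This is not a gap you invented: the paper's own proof has the same slip. The line asserting ``the first condition is equivalent to $bg^3 = K^2\lg n + 4Kg$'' drops a factor of $2$ (the correct rearrangement of $\tfrac{b}{2k}g \ge k\lg n + 2$ is $bg^3 \ge 2K^2\lg n + 4Kg$), and with that factor restored the stated $g = K^{2/3}$, $b = \lg n + 4$ no longer make the condition ``clearly hold.'' Fixing it requires either enlarging $g$ to about $(2K^2)^{1/3}$ — which changes $D_{OV}$ and, since $\kappa = K/g$ shrinks by a $2^{1/3}$ factor, weakens the exponent — or else tightening the XOR$\to$OV factored reduction so that the budget blow-up is $\kappa b_\star$ rather than $2\kappa b_\star$. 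Either way, the decisive inequality in your write-up is both unverified and, as stated, false.
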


\begin{proof}
Using Theorem \ref{thm:ovXORsumFromAny} (step 2) we can reduce $\oplus$\ckxor$(n,b/2k,g)$ to $\oplus$\ckov$(n,b,g)$, where $n=\frac{Ng}{2^bK}$ and $k=K/g$ and we haven't defined the values of $b$ and $g$ yet. This reduction takes $O(ng2^{b/2})$ time. 

By Theorem \ref{lem:selfReductionsWithbg} (step 1), assuming $k$-XOR hypothesis, $\oplus$\ckxor$(n,b/2k,g)$ requires $n^{k/2}$ time if
\begin{enumerate}
    \item $\frac{b}{2k}g\ge k\lg{n}+2$
    \item $kg=o(\lg{n})$
\end{enumerate}
The first condition is equivalent to $bg^3=K^2\lg{n}+4Kg$, and the second condition is equivalent to $K=o(\lg{n})$. Let $g=K^{2/3}$ and $b=\lg{n}+4$. Note that this means that $b=0.5\lg{\frac{16N}{K^{1/3}}}$. The first condition clearly holds. Moreover, we have that $n=N^{1/2}/K^{1/3}=\Theta_K(N^{1/2})$. This means that $\lg{n}=\Theta(\lg{N})$ and so the second condition holds. To see what lower bound we get, Note that $k=K/g=K^{1/3}$. So under $K^{1/3}$-XOR hypothesis, average case $\oplus K$-OV of size $N$ requires $N^{K^{1/3}/4-o(1)}$.

Now we reason how we choose the values of $b$ and $g$. Since $g=o(\lg{n})$ from the second condition, we have that $b>\lg{n}$. Suppose that we set $b=K^{c}\lg{n}$ for some constant $c$. Then $g=K^{(2-c)/3}$ and so $n=\Theta_K(N^{1/(1+K^c)})$. Thus under $k$-XOR hypothesis $K$-OV of size $N$ requires $N^{K^{1/3-4c/3-o(1)}}$. To get the maximum lower bound possible we should set $c=0$, which results in our initial values for $b$ and $g$.
\end{proof} 

\begin{theorem}
Let $K$ be a constant. Any algorithm that solves $\oplus K$-OV of size $N$ with vectors of dimension $\Theta(K\lg{N})$ with probability $1-\frac{2^{-K}}{8}$ where the input is drawn from $D_{OV}(N,K,0.5\lg{\frac{16N}{K^{1/3}}},K^{2/3})$ requires at least $N^{K^{1/3}/4-o(1)}$ time assuming $K^{1/3}$-SUM Hypothesis.
\end{theorem}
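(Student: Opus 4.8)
My plan is to follow the proof of Theorem~\ref{thm:xor-to-ov} essentially line for line, replacing the $K^{1/3}$-XOR hypothesis by the $K^{1/3}$-SUM hypothesis and taking the intermediate factored problem to be factored $k$-SUM rather than factored $k$-XOR. I would set $g = K^{2/3}$, $k := K/g = K^{1/3}$, and $b := \lg n + 4$ (equivalently $b = 0.5\lg\frac{16N}{K^{1/3}}$), where $n := \frac{Ng}{2^{b}K}$ is the number of factored vectors at the start of the chain, and then instantiate the four steps of Section~\ref{sec:WCACxor} as follows. Step~(1): by Lemma~\ref{lem:selfReductionsWithbg}, item~\ref{item:4}, under the $k$-SUM hypothesis $\oplus$\cksum$(n,b/(2k),g)$ requires $n^{\lceil k/2\rceil-o(1)}$ time, since the chosen $b,g$ satisfy $\frac{b}{2k}\,g \ge k\lg n + 2$ and $kg = K = o(\lg n)$ (verifying these two inequalities, and deriving $n = \Theta_K(N^{1/2})$ and $b = 0.5\lg\frac{16N}{K^{1/3}}$ from them, is the same computation as in the proof of Theorem~\ref{thm:xor-to-ov}). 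Step~(2): since $\oplus$\cksum~is an instance of the generic factored problem $\oplus$\ckfunc, Theorem~\ref{thm:ovXORsumFromAny} supplies a size-preserving reduction $\oplus$\cksum$(n,b/(2k),g) \to \oplus$\ckov$(n,b,g)$ running in $O(ng2^{b}) = \tilde{O}(N)$ time. Step~(3): Theorem~\ref{thm:frameworkImprovement} turns a uniform-average-case algorithm for $\oplus$\ckov$(n,b,g)$ with success probability $1 - 2^{-kg}/8 = 1 - 2^{-K}/8$ and running time $T(n)$ into a worst-case randomized algorithm running in time $2^{K}T(n)$. Step~(4): Theorem~\ref{thm:factoredToUnfactored} converts $\oplus$\ckov$(n,b,g)$ into an instance of $\oplus K$-OV on $k2^{b}n = N$ vectors of dimension $bg + 2(k-1)g\lg n = \Theta(K\lg N)$, and pushing the uniform distribution over $\oplus$\ckov$(n,b,g)$ through this last reduction is exactly the distribution $D_{OV}(N,K,b,g)$ in the statement.

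Composing the four steps: an algorithm that solves average-case $\oplus K$-OV on $D_{OV}$ with success probability $1 - 2^{-K}/8$ in time $\tau(N)$ yields a worst-case algorithm for $\oplus$\cksum$(n,b/(2k),g)$ running in time $\tilde{O}(N) + 2^{K}\tau(N)$. Since $2^{K}$ and all the reduction overheads are $N^{o(1)}$, and since $n^{\lceil k/2\rceil - o(1)} = N^{(1/2)\lceil K^{1/3}/2\rceil - o(1)} \ge N^{K^{1/3}/4 - o(1)}$ for $n = \Theta_K(N^{1/2})$, this forces $\tau(N) \ge N^{K^{1/3}/4 - o(1)}$, the claim. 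I would finish, exactly as in Theorem~\ref{thm:xor-to-ov}, by recording that the split $g = K^{2/3}$ is essentially optimal: writing $b = K^{c}\lg n$ for a constant $c \ge 0$, the constraint of step~(1) forces $g = \Theta(K^{(2-c)/3})$ and hence $n = \Theta_K(N^{1/(1+K^{c})})$, so the transferred lower bound is $N^{\Theta(K^{1/3 - 4c/3})}$, maximized at $c = 0$.

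I do not expect any genuine obstacle: the whole argument is a transcription of the proof of Theorem~\ref{thm:xor-to-ov}, made possible by three facts — item~\ref{item:4} of Lemma~\ref{lem:selfReductionsWithbg} has exactly the same form as item~\ref{item:3}; $\oplus$\cksum~plugs into the reduction of Theorem~\ref{thm:ovXORsumFromAny} in precisely the same way $\oplus$\ckxor~does; and steps~(3) and~(4) involve only $\oplus$\ckov~and $\oplus K$-OV and hence never mention the originating operation. The one thing that requires care, and it is pure bookkeeping, is that a single tuple $(b,g,k,n)$ must simultaneously satisfy the factored-$k$-SUM hypothesis constraints, make the factored-to-unfactored reduction land on exactly $N$ vectors of dimension $\Theta(K\lg N)$, and keep every reduction overhead $N^{o(1)}$, so that the $N^{K^{1/3}/4 - o(1)}$ lower bound survives the entire chain — but this is identical to the $K^{1/3}$-XOR case already carried out.
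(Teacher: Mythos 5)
Your proposal is correct and takes essentially the same approach as the paper: the paper's own proof of this theorem is a one-paragraph note that it is ``almost the same as Theorem~\ref{thm:xor-to-ov},'' replacing $\oplus$\ckxor\ by $\oplus$\cksum\ in Step~2 and item~\ref{item:3} by item~\ref{item:4} of Lemma~\ref{lem:selfReductionsWithbg} in Step~1, with the identical parameter choice $g=K^{2/3}$, $b=\lg n+4$; your write-up is an accurate expansion of that sketch.

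One caveat, which you inherit verbatim from the paper's proof of Theorem~\ref{thm:xor-to-ov}: the Step~1 constraint $\frac{b}{2k}g\ge k\lg n+2$ is not actually satisfied by these parameters. Substituting $b=\lg n+4$, $k=K^{1/3}$, $g=K^{2/3}$ gives $\tfrac12(\lg n+4)K^{1/3}\ge K^{1/3}\lg n+2$, equivalently $4-4K^{-1/3}\ge\lg n$, which fails for large $n$. The paper's rearrangement ``the first condition is equivalent to $bg^3=K^2\lg n+4Kg$'' drops a factor of $2$ on the leading term; the correct inequality $bg^3\ge 2K^2\lg n+4Kg$ together with $g=K^{2/3}$ forces $b\ge 2\lg n+O(1)$, hence $n=\Theta_K(N^{1/3})$ rather than $\Theta_K(N^{1/2})$, and a final exponent of roughly $K^{1/3}/6$ rather than $K^{1/3}/4$. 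Since the slip lives in the paper's own argument (and propagates into the theorem statement), your transcription is faithful and not at fault — but you should be aware that, as written, the parameter choice does not verify the hypothesis of Lemma~\ref{lem:selfReductionsWithbg}, so the $N^{K^{1/3}/4-o(1)}$ bound would need this to be repaired.
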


\begin{proof}
The approach is almost the same as Theorem \ref{thm:xor-to-ov}. Using Theorem  \ref{thm:ovXORsumFromAny} we can reduce $\oplus$\cksum$(n,b/2k,g)$ to $\oplus$\ckov$(n,b,g)$, where $n=\frac{Ng}{2^bK}$ and $k=K/g$ and we haven't defined the values of $b$ and $g$ yet. This reduction takes $O(ng2^{b/2})$ time. Then by Theorem \ref{lem:selfReductionsWithbg}, assuming $k$-SUM hypothesis, $\oplus$\cksum$(n,b/2k,g)$ requires $n^{k/2}$ time if $\frac{b}{2k}g\ge k\lg{n}+2$ and $kg=o(\lg{n})$. Setting $g=K^{2/3}$ and $b=\lg{n}+4$, we get the lower bound of $N^{K^{1/3}/4-o(1)}$ for average case $\oplus K$-OV under $K^{1/3}$-SUM hypothesis.
\end{proof}

\subsection{Average-case hardness of $\oplus K$-XOR}
We want to show that for any $K$ and $N$, $\oplus K$-XOR of size $N$ is hard over some distribution. 

Our approach is very similar to section \ref{sec:ac-kov} and so we remove unnecessary details. We define the distribution over which we prove $\oplus K$-XOR is hard using Theorem \ref{thm:factoredToUnfactored} (step 4). Choosing appropriate parameters $b$ and $g$, this theorem gives a reduction from $\oplus$\ckxor$(\frac{Ng}{2^bK},b,g)$ to $\oplus K$-XOR of size $N$ with vectors of dimension $bg+(k-1)g\lg{n}$ where $k:=K/g$.
Starting from a uniform distribution on $\oplus$\ckxor$(\frac{Ng}{2^bK},b,g)$, we get a distribution for $\oplus K$-XOR through this reduction. We call this distribution $D_{XOR}(N,K,b,g)$. 

To show that $\oplus K$-XOR is hard on this distribution with some success probability $q$, we need to show that $\oplus$\ckxor$(n,b,g)$ is hard on uniform distribution with success probability $q$ where $n=\frac{Ng}{2^bK}$. To do this, we use Theorem \ref{thm:frameworkImprovement} (step 3) to reduce worst case $\oplus$\ckxor$(n,b,g)$ to average case $\oplus$\ckxor$(n,b,g)$ with uniform distribution with $q=1-2^{-K}/8$. 

To get hardness from $k$-XOR hypothesis for worst case $\oplus$\ckxor$(n,b,g)$, we use Theorem \ref{lem:selfReductionsWithbg} (step 1), which says that $\oplus$\ckxor$(n,b,g)$ requires $n^{\ceil{k/2}-o(1)}$ time. To choose the parameters, by Theorem \ref{lem:selfReductionsWithbg} we need to have that $bg\ge k\lg{n}+2$ and $K=kg = o(\lg{n})$. In this case the optimal values for $b$ and $g$ are $\lg{n}+2=0.5\lg{\frac{4N}{\sqrt{K}}}$ and $\sqrt{K}$ respectively. 

To get hardness from SETH or $k$-SUM hypothesis, we first use Theorem \ref{thm:ovXORsumFromAny} (step 2) to reduce $\oplus$\ckov$(n,b/k,g)$ or $\oplus$\cksum$(n,b/k,g)$ to $\oplus$\ckxor$(n,b,g)$ in $O(ng2^{kb})$ time, and then use Theorem \ref{lem:selfReductionsWithbg} (step 1) to get hardness from SETH or $k$-SUM for $\oplus$\ckov$(n,b/k,g)$ or $\oplus$\cksum$(n,b/k,g)$. 

To get hardness from SETH, by Theorem \ref{lem:selfReductionsWithbg} we must have that $\frac{b}{k}g>ck\lg{n}$ and $K=gk=o(\lg{n})$. In this case the optimal values for $b$ and $g$ are $O(\lg{n})=O(\lg{N})$ and $K^{2/3}$ respectively. By Theorem \ref{lem:selfReductionsWithbg} $\oplus$\ckov$(n,b/k,g)$ requires $n^{\Omega{(k)}}$ time.

To get hardness from $k$-SUM, by Theorem \ref{lem:selfReductionsWithbg} we must have that $\frac{b}{k}g\ge k\lg{n}+2$ and $K=gk=o(\lg{n})$. In this case the optimal values for $b$ and $g$ are $\lg{n}+2=0.5\frac{\lg{4N}}{K^{1/3}}$ and $K^{2/3}$ respectively. By Theorem \ref{lem:selfReductionsWithbg} $\oplus$\cksum$(n,b/k,g)$ requires $n^{\ceil{k/2}}$ time.

By substituting the values of $b,g,k$ and $n$ we get the Theorem \ref{thm:ovxorhard} for $\oplus K$-XOR. See Figure \ref{fig:xor_map} and Table \ref{tab:xor-reduction-param-values}.

\begin{figure}
    \centering
    \includegraphics[width=\linewidth]{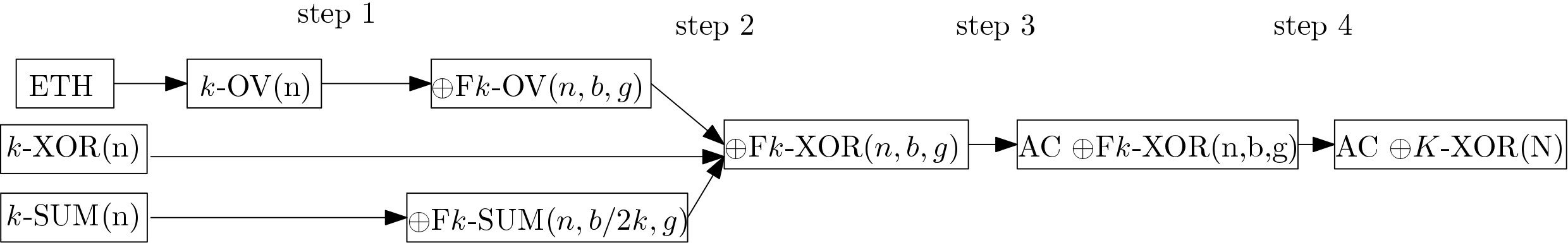}
    \caption{Reductions to average case $\oplus K$-XOR of size $N$. The size of the unfactored problems is mentioned as a parameter in front of them. To see what the values of $b,g,n$ and $k$ are in terms of $K$ and $N$ see Table \ref{tab:xor-reduction-param-values}.}
    \label{fig:xor_map}
\end{figure}


\begin{table}
    \centering
    \begin{tabular}{|c|c|c|c|}
    \hline
         & $\oplus k$-OV($n$)& $\oplus k$-XOR($n$)& $\oplus k$-SUM($n$) \\ \hline
        $b$ & $\lg{N}$& $ \lg{N}$& $ \lg{N}$\\ \hline
        $g$ & $K^{2/3}$ & $\sqrt{K}$ & $K^{2/3}$\\ \hline
        $n$ & $\sqrt{N}$& $\sqrt{N}$& $\sqrt{N}$\\ \hline
        $k$ & $K^{1/3}$& $\sqrt{K}$&$K^{1/3}$ \\ \hline
    \end{tabular}
    \caption{parameter values for reductions from $k$-OV, $k$-XOR and $k$-SUM to average case $K$-XOR, where the starting problem is of size $n$ and $K$-XOR is of size $N$. The exact values are within constant factor away from the values mentioned in the table.}
    \label{tab:xor-reduction-param-values}
\end{table}

\ovxorhard*

\subsection{Average-case hardness of $\oplus K$-SUM}
The approach is the same as the previous two sections, except that in step 2 we get better bounds. 

We define the distribution over which we prove $\oplus K$-SUM is hard using Theorem \ref{thm:factoredToUnfactored} (step 4). Choosing appropriate parameters $b$ and $g$, this theorem gives a reduction from $\oplus$\cksum$(\frac{Ng}{2^bK},b,g)$ to $\oplus K$-SUM of size $N$ with vectors of dimension $bg+(k-1)g\lg{n}$ where $k:=K/g$.
Starting from a uniform distribution on $\oplus$\cksum$(\frac{Ng}{2^bK},b,g)$, we get a distribution for $\oplus K$-SUM through this reduction. We call this distribution $D_{SUM}(N,K,b,g)$. 

To show that $\oplus K$-SUM is hard on this distribution with some success probability $q$, we need to show that $\oplus$\cksum$(n,b,g)$ is hard on uniform distribution with success probability $q$ where $n=\frac{Ng}{2^bK}$. To do this, we use Theorem \ref{thm:frameworkImprovement} (step 3) to reduce worst case $\oplus$\cksum$(n,b,g)$ to average case $\oplus$\cksum$(n,b,g)$ with uniform distribution with $q=1-2^{-K}/8$. 

To get hardness from $k$-SUM hypothesis for worst case $\oplus$\cksum$(n,b,g)$, we use Theorem \ref{lem:selfReductionsWithbg} (step 1), which says that $\oplus$\cksum$(n,b,g)$ requires $n^{\ceil{k/2}-o(1)}$ time. To choose the parameters, by Theorem \ref{lem:selfReductionsWithbg} we need to have that $bg\ge k\lg{n}+2$ and $K=kg = o(\lg{n})$. In this case the optimal values for $b$ and $g$ are $\lg{n}+2=0.5\lg{\frac{4N}{\sqrt{K}}}$ and $\sqrt{K}$ respectively. 


To get hardness from SETH, we first use Theorem \ref{lem:SUMreductions} (step 2) to reduce $\oplus$\ckov[(k-1)]$(n-1,b/\ceil{\lg(k-1)},g)$ to $\oplus$\cksum$(n,b,g)$ in $O(ng2^{kb})$ time. Then we use Theorem \ref{lem:selfReductionsWithbg} to show that under SETH, $\oplus$\ckov[(k-1)]$(n-1,b/\ceil{\lg(k-1)},g)$ requires $(n-1)^{\Omega(k-1)}$. By Theorem \ref{lem:selfReductionsWithbg} we must have that $\frac{b}{\ceil{\lg(k-1)}}g>c(k-1)\lg{n}$ and $K-g=g(k-1)=o(\lg{n})$. In this case the optimal values for $b$ and $g$ are $\lg{n}$ and $ck\lg{k}$. Using $gk=K$, we have that $\lg{K}=\Theta(\lg{k})$, so we have $k=\Theta(\sqrt{\frac{K}{\lg{K}}})$ 


To get hardness from $k$-XOR, we first use Theorem \ref{lem:SUMreductions} (step 2) to reduce $\oplus$\ckxor[(k-1)]$(n-1,b/\ceil{\lg(k-1)},g)$ to $\oplus$\cksum$(n,b,g)$ in $O(ng2^{kb})$ time. Then we use Theorem \ref{lem:selfReductionsWithbg} to show that under $(k-1)$-XOR,  $\oplus$\ckov[(k-1)]$(n-1,b/\ceil{\lg(k-1)},g)$ requires $n^{\ceil{\frac{k-1}{2}}}$. By Theorem \ref{lem:selfReductionsWithbg} we must have that $\frac{b}{\ceil{\lg(k-1)}}g\ge (k-1)\lg{n}+2$ and $K-g=g(k-1)=o(\lg{n})$. In this case the optimal values for $b$ and $g$ are $\lg{n}+2$ and $k\lg{k}$. Using $gk=K$, we have that $k^2\lg{k}=K$, so we have $\sqrt{\frac{2K}{\lg{K}}}\le k$ and $N=\Theta(n^2)$.


By substituting the values of $b,g,k$ and $n$ we get the following theorem. See Figure \ref{fig:sum_map} and Table \ref{tab:sum-reduction-param-values}.

\begin{figure}
    \centering
    \includegraphics[width=\linewidth]{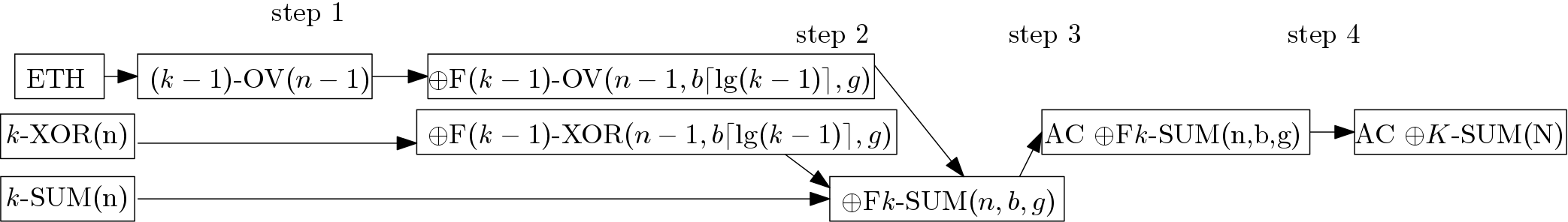}
    \caption{Reductions to average case $K$-SUM of size $N$. The size of the unfactored problems is mentioned as a parameter in front of them. To see what the values of $b,g,n$ and $k$ are in terms of $K$ and $N$ see Table \ref{tab:sum-reduction-param-values}.}
    \label{fig:sum_map}
\end{figure}

\begin{table}
    \centering
    \begin{tabular}{|c|c|c|c|}
    \hline
         & $\oplus k$-OV($n$)& $\oplus k$-XOR($n$)& $\oplus k$-SUM($n$) \\ \hline
        $b$ & $\lg{N}$& $ \lg{N}$& $ \lg{N}$\\ \hline
        $g$ & $K\lg{K}$ & $K\lg{K}$ & $\sqrt{K}$\\ \hline
        $n$ & $\sqrt{N}$& $\sqrt{N}$& $\sqrt{N}$\\ \hline
        $k$ & $\sqrt{K/\lg{K}}$& $\sqrt{K/\lg{K}}$&$\sqrt{K}$ \\ \hline
    \end{tabular}
    \caption{parameter values for reductions from $k$-OV, $k$-XOR and $k$-SUM to average case $K$-SUM, where the starting problem is of size $n$ and $K$-SUM is of size $N$. The exact values are within constant factor away from the values mentioned in the table.}
    \label{tab:sum-reduction-param-values}
\end{table}
\sumhard*

\section{From Clique to Average-Case Fine-Grained Problems}
\label{sec:cliqueToFGC}

\newcommand{\binomk}{$\binom{k}{2}$}


In this section we reduce from the $\oplus k$-clique problem to instances of $\oplus \bikmath$-XOR, $\oplus \bikmath$-OV and $\oplus \bikmath$-SUM. 
Our results answer the open question posed in
\cite{afargholiV16} by Jafargholi and Viola in their appendix B. In appendix B in \cite{afargholiV16} they show how to reduce $4$-clique to $6$-SUM over the group $\mathbb{Z}^t_3$. In this section we give the generalization of this result and reduce from  average-case parity $k$-clique to average-case $\oplus \bikmath$-OV, $\oplus \bikmath$-XOR, and  $\oplus \bikmath$-SUM in general. Thus, we answer the open question posed in their appendix B.
This is the reverse direction of the reduction from \cite{losingWeight}, which gives a reduction from $k$-SUM and $k$-XOR to many instances of the $k$-clique problem. 
We will handle all three problems with a similar overall structure. We will start with $\oplus k$-OV and $\oplus k$-XOR as both have $n$ vectors in their input (where as $\oplus k$-SUM has $n$ integers as input). 


\begin{theorem}
We have the following reductions from average-case parity $k$-clique to average-case $\oplus \bikmath$-OV and $\oplus \bikmath$-XOR.
\begin{itemize}
    \item An instance of $\oplus k$-clique on $n$ nodes 
    can be reduced into one instance of $\oplus \bikmath$-XOR with $O(n^2)$ vectors of length $2\bikmath\lg(n)$. 
    \item An instance of $\oplus k$-clique on $n$ nodes can be reduced to one instance of $\oplus \bikmath$-OV with $O(n^2)$ vectors of length $4\bikmath\lg(n)$. 
\end{itemize}
\label{thm:cliqueToOVXOR}
\end{theorem}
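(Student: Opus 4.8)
The plan is to build, from an input graph $G$, a \emph{single} $\oplus\binom{k}{2}$-XOR instance (and, with an analogous encoding, a $\oplus\binom{k}{2}$-OV instance) whose number of zero-tuples equals the number of $k$-cliques of $G$ \emph{exactly}, not merely modulo $2$, so that the parity is trivially preserved. First I would reduce to the $k$-partite version of $\oplus k$-clique, where $V(G)=U_1\cup\cdots\cup U_k$ and a $k$-clique picks one $x_a\in U_a$ per part with all pairs adjacent; this version has the same hardness and, crucially, counts each clique exactly once, avoiding the $k!$-overcounting that would otherwise collapse the parity to $0$. The $\binom{k}{2}$ parts of the target instance are indexed by the pairs $\{a,b\}\subseteq[k]$, and the items of part $\{a,b\}$ are the binary encodings of the edges of $G$ between $U_a$ and $U_b$.

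The heart of the construction is the gadget forcing all chosen edges to be mutually consistent, i.e.\ to be exactly the pair-set $\{x_ax_b\}$ of one common tuple $(x_1,\dots,x_k)$. Here the obstacle is precisely the one left open by Jafargholi and Viola: over $\mathbb{Z}_3$ one exploits that vertex $a$ lies in $k-1=3$ pairs and uses ``$3x=0$'', but over $\mathbb{Z}_2$ (XOR) and $\mathbb{Z}$ (SUM) no such coincidence is available. I would instead use \emph{chained pairwise equality checks}: for each $a\in[k]$ fix an order $Q_1^a,\dots,Q_{k-1}^a$ of the $k-1$ parts containing $a$, and for $i=1,\dots,k-2$ create a length-$\lg n$ check field $F_i^a$ into which part $Q_i^a$ writes the identity of its slot-$a$ endpoint, part $Q_{i+1}^a$ writes the identity of its slot-$a$ endpoint, and every other part writes $0^{\lg n}$. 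Two $\lg n$-bit strings XOR to $0$ iff they are equal, so $F_i^a=0$ for all $i$ iff all $k-1$ slot-$a$ endpoints coincide; ranging over all $a$ pins down a tuple $(x_1,\dots,x_k)$, and since each chosen edge lies in $G$ this tuple is a $k$-clique. For $\oplus\binom{k}{2}$-OV I would replace each equality check by the AND-trick already used in Section~\ref{sec:betweenProblem-bg}: $Q_i^a$ writes $s\bullet\bar s$ and $Q_{i+1}^a$ writes $\bar{s'}\bullet s'$ (length $2\lg n$), with all-ones padding elsewhere, so the bitwise AND is the all-zeros string iff $s=s'$. Each vector then has all fields $0$ (resp.\ all-ones) except the at most four check fields belonging to its two endpoints, filled in as above.

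Counting fields, there are $\sum_a(k-2)=k(k-2)$ of them, so the XOR vectors have length $k(k-2)\lg n\le(k^2-k)\lg n=2\binom{k}{2}\lg n$ and the OV vectors length $2k(k-2)\lg n\le 4\binom{k}{2}\lg n$, matching the claimed bounds after zero/one-padding; each part has $O(n^2)$ items (one per edge of $G$, plus dummy vectors that no tuple can complete, to equalize part sizes), so the instance has $O(n^2)$ vectors total and the whole map is computable in $\mathrm{poly}(n)$ time. To finish I would verify the bijection in both directions — every $k$-clique yields exactly one zero-tuple, and every zero-tuple, by the consistency argument, yields exactly one $k$-clique — so the two counts are literally equal and hence so are their parities; average-case correctness transfers because the reduction is a single deterministic (up to the coloring) many-one call that maps the hard clique distribution to a samplable distribution on the target.

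I expect the main obstacle to be the consistency gadget under the $\mathbb{Z}_2$/$\mathbb{Z}$ constraint \emph{together with} the tight field budget: the naive ``check every pair of co-incident parts'' costs $\Theta(k^3)$ fields, far more than the $\Theta(k^2)$ the statement allows, so one really needs the spanning-path structure of $k-2$ checks per vertex, and one must ensure these checks cannot be met by an inconsistent assignment — which is why exact two-term XOR (and the exact AND-trick), rather than longer sums, is essential. A secondary point to get right is that working with the $k$-partite version is legitimate in the average-case setting, i.e.\ that it inherits the $\oplus k$-clique lower bound with only a constant loss in success probability.
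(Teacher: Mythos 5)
Your construction matches the paper's: both build a single $\binom{k}{2}$-partite instance with one part per pair $\{i,j\}\subseteq[k]$ and one item per edge, and both lay out $k(k-2)$ equality-check fields of length $\lg n$ (XOR) or $2\lg n$ (OV, via $s\bullet\bar s$ against $\bar t\bullet t$) that force all chosen edges to agree on a common $k$-tuple of vertices, giving exactly the stated vector lengths. The only cosmetic differences are that you propagate equality along a path ($Q_i^a$ vs.\ $Q_{i+1}^a$) while the paper uses a star rooted at the list $L_{1,i}$ (respectively $L_{1,2}$ for node $1$), and that you first pass to $k$-partite $\oplus k$-clique whereas the paper avoids that step by directly orienting each edge-vector so that $u_{i,j}(i)<u_{i,j}(j)$ — if you keep the $k$-partite preprocessing, make sure to use the ordered-copies construction (place $u$ in part $i$ and $v$ in part $j>i$ only when $u<v$), since the unordered $k$-copies reduction multiplies the clique count by $k!$, which is even for $k\ge 2$ and therefore collapses the parity to $0$.
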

\begin{proof}
We take as input a graph $G$ with $n$ nodes in $V$ and $O(n^2)$ edges in $E$.
First we will describe the general structure of these reductions. We will also treat each node in $G$ as having a unique label in $[0,n-1]$. This will allow us to index easily and avoid double counting.

\paragraph{Intuitive explanation of the reduction}
Let us start with a high-level explanation of this reduction before we get bogged down in notation. We will give an approach that lets us go from a graph to $n^2$ vectors (one for each edge). We are looking at \binomk-XOR and \binomk-OV problems and we are using the \binomk vectors to check if a given set of \binomk edges forms a clique. Each vector has $k$ parts, where each part is encoding a node. The $i^{th}$ of these parts is checking if all of the edges given agree about which node is the $i^{th}$ node in the $k$-clique. To do this each part is split into $k-2$ parts corresponding to checking pair-wise if all of the $k-1$ incoming edges have the same node for the $i^{th}$ node. So, we have an original graph with $n$ nodes and we are trying to create a reduction which lets us check over all sets of \binomk edges if those edges form  a $k$ clique. This results in some involved notation: we have the labels from the original graph but also the labels in our new possible clique. Notably, we will have some vector that corresponds to the edge between the $i^{th}$ and $j^{th}$ nodes in the clique. But, we need to check that it is valid by checking if all other edges in this maybe-clique agree about which nodes in the original graph correspond to the $i^{th}$ and $j^{th}$ nodes. We set up this structure where we can count $k$-cliques if we can use \binomk vectors and check that nodes are equal. This means we can build a reduction as long as we can set up multiple equality checks in the vector. We then use this general structure to build both the \binomk-OV and \binomk-XOR reductions. 

\paragraph{Reduction with notation and details} We assume that the nodes of the graph have a fixed ordering.
We will produce \binomk~lists with one vector for each edge in the graph. We will index into these lists $L_{i,j}$ with two numbers  $i,j \in [1,k]$ and $i< j$. Imagine we select a vector from each list: $u_{1,2}, u_{1,3}, \ldots, u_{k-1,k}$. The vector $u_{i,j}$ we selected from $L_{i,j}$ corresponds to some edge in the original graph. Let $(u_{i,j}(i), u_{i,j}(j))$ be the corresponding edge to $u_{i,j}$ (the strange notation for the nodes in the graph $u_{i,j}(i)$ is capturing the fact that we need to know which vertex is the `$i$' vertex and which is the `$j$' vertex). 
The vectors in $L_{i,j}$ correspond to the edges that are between the $i^{th}$ and $j^{th}$ nodes in the clique.

We will avoid double counting by insisting that the label of node $u_{i,j}(i)$ be less than the label for node $u_{i,j}(j)$. 
Note the original graph is not (necessarily) $k$-partite, \emph{all} edges are having corresponding values added to \emph{all} lists. We want the \binomk~tuple of values $(u_{1,2}, u_{1,3}, \ldots, u_{k-1,k})$ to form a \binomk-OV or \binomk-XOR iff the \binomk~corresponding edges form a $k$ clique. To make this correspondence work we will need to enforce the constraint that if a \binomk-tuple of values are a solution then for all $i$:  $u_{i,j}(i) = u_{i,j'}(i)$ for all $j$ and $j'$. That is, every solution does actually correspond to a single set of $k$ nodes.

\usetikzlibrary{positioning,calc}
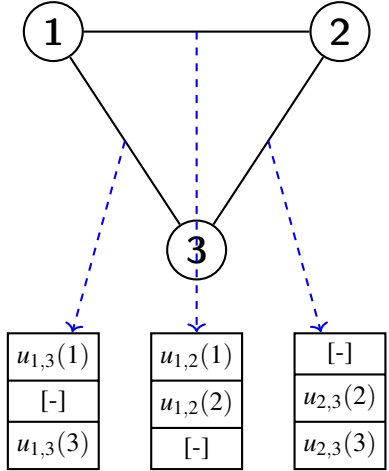
\begin{figure}
    \centering
\begin{tikzpicture}[auto, node distance=3cm, thick, main node/.style={circle,draw,font=\sffamily\Large\bfseries}]
  
  \node[main node] (1) {1};
  \node[main node] (2) [right=of 1] {2};
  \node[main node] (3) [below=2.5cm of $(1)!0.5!(2)$] {3};
  
  \draw (1) -- (2);
  \draw (2) -- (3);
  \draw (3) -- (1);

  \node[rectangle split, rectangle split parts=3, draw, anchor=center] (v12) [below=4cm of $(1)!0.5!(2)$] {$u_{1,2}(1)$\nodepart{second}$u_{1,2}(2)$\nodepart{third} [-]};
  \node[rectangle split, rectangle split parts=3, draw, anchor=center] (v23) [below=4cm of $(2)$] {[-]\nodepart{second}$u_{2,3}(2)$\nodepart{third}$u_{2,3}(3)$};
  \node[rectangle split, rectangle split parts=3, draw, anchor=center] (v31) [below=4cm of $(1)$] {$u_{1,3}(1)$\nodepart{second}[-] \nodepart{third}$u_{1,3}(3)$};

  \draw[->,dashed,blue] ($(1)!0.5!(2)$) -- (v12);
  \draw[->,dashed,blue] ($(2)!0.5!(3)$) -- (v23);
  \draw[->,dashed,blue] ($(3)!0.5!(1)$) -- (v31);
  
\end{tikzpicture}
    \caption{A depiction of the reduction from 3 clique to $3$-XOR and $3$-OV. We use the first section of the vectors to check that all edges are using the same node for 'node 1'. We use the second section to check that all edges are using the same node for 'node 2'. Same for 3. When $k$ grows there are more than two edges coming into each node. So, the section to check all edges incident to 'node 1' agree on what node 1 is grows. In general you need $k-2$ such checks.}
    \label{fig:enter-label}
\end{figure}

To do this we will split the vector into $k^2$ parts. These will correspond to checking that the nodes are consistent.  For each $i$ we will enforce $k-2$ checks. If $i\ne 1$ we will check that $u_{i,1}(i)=u_{i,j}(i)$ for all $j\ne 1$ and $j\neq i$. If $i=1$ we will check that $u_{1,2}(1) = u_{1,j}(1)$ for all $j\ne 1$ and $j\neq 2$.  We will make each check independent. 

So, for each problem we want to have a way to check equality and enforce that each check is independent of each other check. 
Recall that our problem definitions of $\oplus \bikmath$-OV and $\oplus \bikmath$-XOR enforce that one must select exactly one vector from each list. Let $\bullet$ be concatenation. 
Let $\bar{s}$ be the bit-wise negation of $s$.

\paragraph{Equality Checks}
We will now give a function for each problem that produces equality checking.

\begin{itemize}
    \item \binomk-OV:  Given two Boolean vectors $s$ and $t$ note that $<s\bullet \bar{s}>^T \cdot <\bar{t} \bullet t> = 0$ iff $s=t$. This is what we will use for equality checking. Further note that if $v\cdot u = 0$ then the vectors $v, u, \vec{1}, \vec{1},\ldots,  \vec{1}$ are \binomk-orthogonal. 
    \item \binomk-XOR: Given two Boolean vectors $s$ and $t$ note that $s \oplus t = \vec{0}$ iff $s=t$. Further note that if $s \oplus t = \vec{0}$ then $s \oplus t \oplus \vec{0} \oplus \cdots \oplus \vec{0} = \vec{0}$.
\end{itemize}

\newcommand{\bool}{Bool}

Let $\bool(v)$ return a Boolean vector of length $\lg(n)$ that uniquely corresponds to the node $v$ in $V$ (consider the boolean representation of the label of $v$ between $[0,n-1]$). 

\paragraph{Generic Structure:}
For every edge $(v,w) \in E$ we will produce a vector for each list. We will define the function $\lambda_{ij}(v,w)$ such that it returns this vector. So this function, $\lambda_{ij}(v,w)$, lambda takes in an edge from the original graph, $(u,v)$, and a pair of indices, ($i$ and $j$), and produces one vector in the output. There are \binomk functions (for all pairs of $i$ and $j$ in $[1,k]$ where $i<j$. The function $\lambda_{ij}$ is used to create the vectors in $L_{ij}$. So we are producing $|E|$ vectors for each of \binomk lists of vectors. 

Each vector $\lambda_{ij}(v,w)$ consists of $k(k-2)$ checks. Each of these checks is checking if two values are equal. Specifically, each check is a check of if two vectors agree about the original value of the $i^{th}$ node. We will present the structure in full generality: how you can use a gadget for checking if two nodes have the same value to check if all $k$ nodes form a clique. In the vectors produced by $\lambda_{ij}(v,w)$ the first $k-2$ checks will be for node $1$, then the next $k-2$ checks will be for node $2$, etc. So, in the $i^{th}$ set of $(k-2)$ checks we are checking if all edges agree on the value of the $i^{th}$ node. To do this pairwise comparison we want to have other vectors having a neutral value in these locations (for $k$-XOR this is zero and for $k$-OV this is one).   Let $h$ be `filler' (a vector of all zeros or all ones) and let $u^x$ be $x$ copies of a vector $u$ concatenated. We will define two helper functions (corresponding to the sections checking $v$ and $w$). 

We described equality checks above. We will use $g_{+}(v)$ to produce a `positive' value and $g_{(-)}(v)$ to produce the `negative' value. That is, we want to produce sections of the vector that will multiply to zero iff the value passed in is the same. There isn't a direct notion of having a positive and negative vector in the space of vectors in OV, but, there do exist functions, $g_{+}(v)$ and $g_{(-)}(v)$, where two resulting vectors will bitwise multiply to the zero vector iff the the vector passed in to the two functions is equal. 
Finally, we will enforce that $\ell = |g_{+}(v)|=|g_{(-)}(v)|=|h|$. 

Now, let us split up the problem of comparing these sections by creating helper functions to define each of these $k$ sections. Recall that we are doing something special if $i=1$. To avoid doing all pairwise comparisons we simply check that vectors $\lambda_{ij}$ where $i\ne 1$ agree with the vector $\lambda_{1j}$ on the value of the node $j$ and with vector $\lambda_{1i}$ on the value of the node $i$. For the comparison on the $j^{th}$ node this happens in the $j^{th}$ set of $k-2$ checks, to disambiguate it is the $(i-1)^{th}$ such check. So for a vector $\lambda_{ij}(u,v)$ where $i\ne 1$ (remember $i<j$ and $u<v$ by construction) what does the $a^{th}$ set of checks look like? We will use $\bullet$ as a vector  concatenation symbol.

$$\gamma^a_{ij}(u,v) = \begin{cases}
h^{k-2} &\text{ if } a\ne i \wedge a \ne j \\
<h^{j-1} \bullet g_{(-)}(u) \bullet h^{k-j} > &\text{ if } i=a \\
<h^{i-1} \bullet g_{(-)}(v) \bullet h^{k-i} > &\text{ if } j=a \\
\end{cases}$$

So, we have an empty section of vector if we aren't in the section for checking $i$ or $j$. If we are in the section for comparing $i$ we check if node $u$ matches. If we are in the section for comparing $j$ we check if node $v$ matches. 

Now, we have special behavior around $i=1$ so lets explore that vector:
$$\gamma^a_{1j}(u,v) = \begin{cases}
h^{k-2} &\text{ if } a\ne 1 \wedge a \ne j \\
<g_{(+)}(v)^{k-2}> &\text{ if } j=a \\
<g_{(+)}(u)^{k-2}> &\text{ if } a=1 \wedge j=2\\
<h^{j-3} g_{(-)}(u) h^{k-j+1}> &\text{ if } a=1 \wedge j\ne 2\\
\end{cases}$$

So, we now have an empty section if $a\ne 1$ and we aren't checking the value of $j$. If we are checking the value of $j$ we provide $k-2$ copies of the `positive' version of $v$ (to check against other values given to $j$). If we are in section $a=1$ then we use the value of node $i=1$ given by $\lambda_{12}$ ($i=1$ and $j=2$) to all other values given to node $1$ given by $\lambda_{1j}$. Note that we use the positive value for the $\lambda_{12}$ version and the negative version for $\lambda_{1j}$ (ensuring a zero vector iff these values are equal).

We can now define $\lambda$ simply as:
$$\lambda_{i,j}(u,v) = <\gamma^1_{ij}(u,v) \bullet \gamma^2_{ij}(u,v) \bullet \cdots \bullet \gamma^{k}_{ij}(u,v)>.$$

Consider briefly a given section $a$ of the vector: note that we will have $k-2$ copies of $g_{(+)}(\cdot)$ with the value associated with the $a^{th}$ node in exactly one vector ($\lambda_{1a}$ if $a\ne 1$ and $\lambda_{12}$ if $a=1$). Then there will be $k-2$ other vectors that have a single $g_{(-)}(\cdot)$ and otherwise `filler'. All other vectors will have purely filler in this section. Each of the $(k-2)$ vectors with a single $g_{(-)}(\cdot)$ value puts it in a non-overlapping location that lines up with one of the positive $g_{(+)}(\cdot)$ values. So, iff all vectors agree on the value of the $a^{th}$ node this section will combine to the zero vector. 

Note that the vectors we put into $L_{i,j}$ has checks only in the sections related to $i$ and $j$. Within each section we are comparing all the sections representing nodes against one other section representing the same node, to ensure all original nodes correspond to one set of $k$ nodes. For each individual vector in $L_{i,j}$ each section that isn't filler is compared against \emph{exactly} one other vector. 
We state an assumption that we show is true.
\begin{assumption}\label{assumption:zero}
$g_{+}(v)$ and $g_{(-)}(w)$ and an arbitrary number of filler vectors only OV/XOR to the all zeros vector iff $v=w$. 
\end{assumption}

Then note that this structure enforces that a \binomk~tuple of vectors $(u_{1,2},u_{1,3}\ldots,u_{k-1,k})$ where $u_{i,j}\in L_{i,j}$ OVs/XORs to the all zeros vector iff $u_{i,j}(i) = u_{i,j'}(i)$ for all $i$. Let this node be $v_i$, i.e. $v_i = u_{i,j}(i) = u_{i,j'}(i)$. We now know that in the original graph there exist \binomk~edges $(v_1,v_2), (v_1, v_3),\ldots, (v_{k-1},v_k)$. That is, edges between all pairs of nodes in $v_1,\ldots, v_k$. So, it corresponds to a clique.

We avoid double counting because of the ordering we have assumed on the vertices of the graph, so essentially in the solution above the label of $v_i$ is less than the label of $v_j$ if $i<j$.
%
Thus the count and parity of the number of cliques will correspond to the count or parity of the number of XOR/OV solutions. 

Now, all we need to do is define $h$, $g_{+}(v)$ and $g_{(-)}(v)$ for $OV$ and $XOR$. 

\paragraph{Reduction for \binomk-OV:}
For \binomk-OV we will use $h= \vec{1}^{2\lg(n)}$. 
We will set 
$$g_{+}(v) = \bool(v) \bullet \overline{\bool(v)}$$
and
$$g_{(-)}(v) = \overline{\bool(v)} \bullet \bool(v).$$

Let $OV(\cdot)$ be the operation of entry-wise multiplying the input vectors which must all be $\{0,1\}^\ell$ for some $\ell$. So zero-one vectors of the same length.
Note that 
$$OV(g_{+}(v), g_{+}(w), h, \ldots, h) = \vec{0}$$ 
iff $v=w$, so our definition follows Assumption \ref{assumption:zero}.

\paragraph{Reduction for \binomk-XOR:}
For \binomk-XOR we will use $h= \vec{0}^{\lg(n)}$. 
We will set 
$$g_{+}(v) = \bool(v) $$
and
$$g_{(-)}(v) =\bool(v).$$

Note that 
$$g_{+}(v) \oplus g_{+}(w) \oplus h \oplus \ldots \oplus h = \vec{0}$$
iff $v=w$, so our definition follows Assumption \ref{assumption:zero}.

So, we can transform an instance of $\oplus k$-clique into either $\oplus \bikmath$-OV or $\oplus \bikmath$-XOR.
\end{proof}

We need a different approach for $k$-SUM because it uses numbers instead of vectors. The underlying approach is the same, however, it will be cleanest to present it separately.

\newcommand{\num}{Num}
\begin{theorem}
An instance of $\oplus k$-clique on $n$ nodes
    can be  reduced to one instance of \binomk-SUM with $O(n^2)$ numbers.
\label{thm:cliqueToSUM}
\end{theorem}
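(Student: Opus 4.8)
The plan is to run the proof of Theorem~\ref{thm:cliqueToOVXOR} essentially unchanged, replacing each vector primitive by its arithmetic counterpart. In that proof every equality test was implemented by a pair of gadgets $g_{+},g_{(-)}$ and a filler value $h$, laid out in mutually disjoint coordinate blocks of the output vector, so that a block collapses to the neutral value (under XOR, resp.\ bitwise AND) exactly when the two node labels handed to $g_{+}$ and $g_{(-)}$ agree. For $\binom{k}{2}$-SUM I would take $g_{+}(v):=\num(v)$, $g_{(-)}(v):=-\num(v)$, and $h:=0$, where $\num(v)\in[0,n-1]$ is the label of $v$; equality $v=w$ is then witnessed by $g_{+}(v)+g_{(-)}(w)=0$. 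The role of ``$\bullet$'' (placing blocks in disjoint coordinates) is played by placing blocks in disjoint \emph{digit ranges} of a positional base-$X$ integer: the block that in the vector world would occupy the coordinate-slot $p$ (among the $k(k-2)$ check slots) is multiplied by $X^{p}$. Thus, from the vector $\lambda_{ij}(u,v)$ of Theorem~\ref{thm:cliqueToOVXOR} one reads off an integer $\lambda^{\mathrm{SUM}}_{ij}(u,v)$ by this reinterpretation, and one forms the $\binom{k}{2}$ lists $L_{i,j}=\{\lambda^{\mathrm{SUM}}_{ij}(u,v):(u,v)\in E,\ u<v\}$, keeping the edge-orientation convention that the node assigned to clique-position $i$ has the smaller label.

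For correctness I would copy the argument of Theorem~\ref{thm:cliqueToOVXOR}, once the positional arithmetic is checked not to interfere. Take one integer from each $L_{i,j}$. By the layout of the $\gamma^a_{ij}$, in section $a$ of the sum each of the $k-2$ digit slots receives exactly one ``positive'' contribution $\num(v_a)$ (from the reference edge, in list $L_{1a}$, or $L_{12}$ when $a=1$) and exactly one ``negative'' contribution $-\num(v_a')$ (from one of the $k-2$ other edges incident to position $a$), every other selected number contributing $0$ there; so the coefficient of $X^{p}$ in the total lies in $[-(n-1),n-1]$. Hence, choosing any $X>n-1$ (e.g.\ $X:=n$), the total equals the integer $0$ if and only if every such coefficient is $0$ — the elementary fact that $\sum_p c_pX^p=0$ with $|c_p|\le X-1$ forces all $c_p=0$ — i.e.\ iff every equality check passes. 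With Assumption~\ref{assumption:zero} instantiated by the identity ``$\num(v)-\num(w)=0\iff v=w$'', the rest of the argument of Theorem~\ref{thm:cliqueToOVXOR} goes through verbatim: a solution $\binom{k}{2}$-tuple forces all edges sharing clique-position $a$ to use one common node $v_a$, and then $(v_1,\dots,v_k)$ spans all $\binom{k}{2}$ edges, i.e.\ is a $k$-clique; the orientation convention ($v_1<\dots<v_k$) makes this a bijection between $k$-cliques of $G$ and solutions, so the number of solutions, and in particular its parity, equals the number of $k$-cliques. Running the reduction on an Erd\H{o}s--R\'enyi input graph produces an explicitly samplable distribution on $\binom{k}{2}$-SUM instances, which is what feeds Theorem~\ref{thm:cliqueToOthers}.

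For the resources: there are $|E|=O(n^2)$ numbers in each of the $\binom{k}{2}$ lists, hence $O(n^2)$ numbers in total (for constant $k$); each number is a sum of at most $2(k-2)$ terms of the form $\pm\num(\cdot)X^{p}$ with $p\le k(k-2)-1$, so it has magnitude $n^{O(k^2)}$, which is polynomial in $n$ and fits within the range $[-N^{\binom{k}{2}},N^{\binom{k}{2}}]$ for $N=O(n^2)$; and the reduction runs in $O(n^2)$ time up to $\mathrm{poly}(k)$ factors.

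The only real work is bookkeeping — the same bookkeeping as in Theorem~\ref{thm:cliqueToOVXOR}, but now phrased in terms of digit ranges rather than vector coordinates. One must fix, once and for all, which of the $k(k-2)$ digit slots each $\pm\num(\cdot)$ term occupies so that (i) every negative slot coincides with exactly one positive reference slot, (ii) distinct checks use disjoint slots, and (iii) the resulting per-slot sum stays below $X$ so no carrying arises; and one must keep the asymmetric handling of clique-position $1$ (whose reference copies come from $L_{12}$ rather than from $L_{1a}$). I do not expect any conceptual obstacle beyond what was already dealt with for OV and XOR.
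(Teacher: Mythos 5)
Your proposal is correct and takes essentially the same route as the paper's proof: it lifts the vector gadget structure of Theorem~\ref{thm:cliqueToOVXOR} to the integers by setting $g_{+}(v)=\mathrm{Num}(v)$, $g_{(-)}(v)=-\mathrm{Num}(v)$, and realizing disjoint coordinate blocks as disjoint digit ranges of a positional integer. The only difference is bookkeeping: the paper picks the cruder base $W=2k^{2}n$ and inserts gaps $W^{a(k+2)}$ between sections, whereas you observe that each of the $k(k-2)$ slots accumulates exactly one $+\mathrm{Num}(\cdot)$ and one $-\mathrm{Num}(\cdot)$, so base $X=n$ with contiguous slots already rules out carries — a minor tightening, not a different argument.
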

\begin{proof}

For the \binomk-SUM problem we will take a similar approach. We will define numeric structures that check equality of a specific node in the clique. Then, to simultaneously check all of these we will multiply each structure individually by large numbers to enforce that the sum will be zero only if every single structure sums to zero separately. We are putting this in a separate theorem because the difference in structure of a vector vs number makes it cleaner to separate.  

Once again we take as input a graph $G$ with $n$ nodes in $V$ and $O(n^2)$ edges in $E$.
First we will describe the general structure of these reductions. We will also treat each node in $G$ as having a unique label in $[0,n-1]$. This will allow us to index easily and avoid double counting.
We will produce \binomk~lists with one vector for each edge in the graph. We will index into these lists $L_{i,j}$ with two numbers $i,j \in [1,k]$ such that $i\leq j$. 

Now we describe the reduction in detail. Given two numbers $s$ and $t$ note that $s-t =0$ iff $s=t$. Next note that if $s-t =0$ then $s+(-t)+0+\cdots+0=0$ as well. 
Let $\num(v)$ return the number between $[0,n-1]$ uniquely associated with the node $v$. We will now define our helper functions. 
For \binomk-SUM we use 
\begin{align*}
g_{+}(v) &= \num(v)\\
g_{(-)}(v) &= -\num(v).
\end{align*}

We define the numbers in the ${k \choose 2}$-SUM instance so that they are composed of different sections, and we define $W=2k^2n$ as a very large value we can multiply to ensure that there are no carries between different sections of our numbers.  By construction we enforce that $i<j$. For every edge $(u,v) \in E$ we will produce a vector for each list $L_{i,j}$ and this vector is $\lambda_{ij}(u,v)$. We use helper functions $\gamma_{ij}^a(\cdot)$ and define:
$$\lambda_{ij}(u,v) = \sum_{a=0}^{k} \gamma_{ij}^a(u,v) W^{a(k+2)}.$$
Each $\gamma_{ij}^a(\cdot)$ which is defined below is a number in $[-W^k,W^k]$, and we want each $\gamma_{ij}^a(\cdot)$ to be non-zero if $a=i$ or $a=j$, so that we can conclude which node is the $a^{th}$ node of a possible clique. This will be more clear later.



For $i \ne 1$ let:
$$\gamma^a_{ij}(u,v) = \begin{cases}
0 &\text{ if } a\ne i \wedge a \ne j \\
g_{(-)}(u) \cdot W^{j-3} &\text{ if } i=a \\
g_{(-)}(v) \cdot W ^{i-2} &\text{ if } j=a \\
\end{cases}$$

For $i=1$ let:
$$\gamma^a_{1j}(u,v) = \begin{cases}
0 &\text{ if } a\ne 1 \wedge a \ne j \\
\sum_{\ell = 0}^{k-3} g_{(+)}(v)W^{\ell} &\text{ if } j=a \\
\sum_{\ell = 0}^{k-3} g_{(+)}(u)W^{\ell} &\text{ if } a=1 \wedge j=2\\
 g_{(-)}(u) W^{j-3} &\text{ if } a=1 \wedge j\ne 2\\
\end{cases}$$


Now we prove that our construction works. Suppose that we take numbers $\lambda_{ij}(u_{i,j},v_{i,j})$ from each list $L_{i,j}$. First, we want to make the following claim:
We have
    $ \sum_{i<j\in [1,k]} \lambda_{ij}(u_{i,j},v_{i,j}) =0$
if and only if
$ \sum_{i<j\in [1,k]} \gamma_{ij}^a(u_{i,j},v_{i,j}) =0$ for all $a\in [1,k]$.

To prove this claim first note that the values in $\gamma_{ij}^a(u_{i,j},v_{i,j})$ are bounded by $[-W^{k}, W^{k}]$. There are \binomk of these values we are summing so the range of their sum is at most $[-\bikmath W^k, \bikmath W^k]$. Now, note that we multiply all values of $\gamma_{ij}^a(u_{i,j},v_{i,j})$  by $W^{a(k+2)}$. Moreover, $W^{k+2} >> 2 \bikmath W^k$. So, if there is an $a$ where $ \sum_{i<j\in [1,k]} \gamma_{ij}^a(u_{i,j},v_{i,j}) \ne 0$ then for $a' <a$ the total sum would be less than $\sum_{i<j\in [1,k]} \gamma_{ij}^a(u_{i,j},v_{i,j})$ and for $a'>a$ the remainder of their sum mod $W^{a(k+2)}$ would be zero. On an intuitive level: no carries pass between these sums. This fifnishes the proof of the claim.

Now, we want to argue that 
$ \sum_{i<j\in [1,k]} \gamma_{ij}^a(u_{i,j},v_{i,j}) =0$
iff for all $i,i' \in [1,a-1]$ and $j, j' \in [a+1,k]$ we have that $u_{i,a} = u_{i',a} = v_{a,j} = v_{a,j'}$. That is, all of the nodes the edges picked as their $a^{th}$ node are the same. 

We will need two cases $a=1$ and $a \ne 1$. In both cases we rely on the fact that $Num(u_{i,j}),Num(v_{i,j}) \in [1,n]$ and thus our construction, once again, avoids carries. 

Let us start with the case of $a=1$. When $i\ne 1$ then as $i<j$, we have 
 $j\ne 1$ and so $\gamma_{ij}^1(u_{i,j},v_{i,j})=0$. So, the only non-zero values are $\gamma_{1j}^1(u_{1,j},v_{1,j})$ for $j>1$. 
 If $j=2$ then
$\gamma_{1j}^1(u_{1,2},v_{1,2})=\sum_{\ell =0}^{k-3} Num(u_{1,2})W^{\ell}$. Furthermore, we have $\sum_{j=3}^{k}\gamma_{1j}^1(u_{1,j},v_{1,j}) = \sum_{j=3}^k -Num(u_{1,j})W^{j-3}$.
 
 First note that if $u_{1,2} = u_{1,j}$ for all $j\in[3,k]$ then 
$\sum_{j=1}^{k}\gamma_{1j}^1(u_{1,j},v_{1,j})=0$. If there is some $\hat{j}$ where $u_{1,2} \ne u_{1,\hat{j}}$ then $(Num(u_{1,2}) - Num(u_{1,\hat{j}}))W^{\hat{j}-3} \ne 0$. Furthermore, $\sum_{j= \hat{j}+1}^k -Num(u_{1,j})W^{j-3}$ is zero mod $W^{\hat{j}-2}$. Also $\sum_{j= 1}^{\hat{j}-1} -Num(u_{1,j})W^{j-3} \in (-W^{\hat{j}-3}, W^{\hat{j}-3})$, whereas $|(Num(u_{1,2}) - Num(u_{1,\hat{j}}))W^{\hat{j}-3}|>W^{\hat{j}-3}$. So in this case $\sum_{j=1}^{k}\gamma_{1j}^1(u_{1,j},v_{1,j})\neq 0$, and so only if all $u_{1,j}$ are the same node can the sum be zero. 

 Now suppose that $a \ne 1$: if $i \ne 1$ and $j\ne a$ then $\gamma_{ij}^1(u_{i,j},v_{i,j})=0$. If $i=1$ and $j \ne a$ then $\gamma_{ij}^1(u_{i,j},v_{i,j})=0$.
 If $i=1$ and $j=a$ then $\gamma_{1a}^1(u_{1,a},v_{1,a}) =  \sum_{\ell = 0}^{k-3} g_{(+)}(v_{1,a})W^{\ell}$.
 Now consider the sum for all instances where $i\ne 1$ and $j=a$:
 $\sum_{i=2}^{a-1}\gamma_{ia}^a(u_{i,a},v_{i,a})  =\sum_{i \in [2,a-1]} g_{(-)}(v_{i,a}) \cdot W ^{i-2}$.
 Moreover, consider the sum for all instances where $i=a$: $\sum_{j=a+1}^{k}\gamma_{aj}^a(u_{a,j},v_{a,j})  =\sum_{j \in [a+1,k]} g_{(-)}(u_{a,j}) \cdot W ^{j-3}$.
 For clarity lets combine these:
 $$
 S_a:=\sum_{\ell = 0}^{k-3} g_{(+)}(v_{1,a})W^{\ell}+\sum_{\ell \in [0,a-3]} g_{(-)}(v_{\ell+2,a}) \cdot W ^{\ell} + \sum_{\ell \in [a-2,k-3]} g_{(-)}(u_{a,\ell+3}) \cdot W ^{\ell}.
 $$
 Note that if $v_{1,a} = v_{i,a} = u_{a,j}$ for all $i$ and $j$ then this does sum to zero. If there is some index, $\ell$, where $v_{1,a} \ne v_{\ell,a}$ or $v_{1,a} \ne u_{a,\ell}$ then, as before, the multiplication by $W^\ell$ will cause there to be no carries and the whole sum could not equal zero. The sum of all values multiplied by $W^{\ell'}$ where $\ell'<\ell$ will have absolute value less than $W^\ell$. All values multiplied by $W^{\ell'}$ where $\ell' > \ell$ or more will be zero mod $W^{\ell+1}$. So $S_a$
 will not be zero mod $W^{\ell+1}$ if there is some index, $\ell$, where $v_{1,a} \ne v_{\ell,a}$ or $v_{1,a} \ne u_{a,\ell}$.

 So, we only get $S_a=0$ if each of the $k$ sections $\gamma^a_{ij}$ of the number sum to zero. Each section is only equal to zero if all \binomk  numbers agree on the identity of $a^{th}$ node. So, there is a zero sum iff there is a k-clique in the original graph.
 
\end{proof}

\paragraph{Putting it all together}

We will now state a theorem about the implications from the $k$-clique hypothesis (see Definition \ref{def:kcliqueHypothesis}).


\cliqueToXORSUMOV*
\begin{proof}
    An instance of $k$-clique with $n$ nodes is $n^{\omega k/3 -o(1)}$ hard. We can make a $K$-XOR, $K$-OV, or $K$-SUM instance with $K =\bikmath$ and $N= n^2$. The $k$-clique problem (for $k=o(\lg(n))$) is equivalently hard in the worst-case and on Erd{\H o}sR\'enyi graphs \cite{UniformCliqueABB}.

    Using Theorem \ref{thm:cliqueToOVXOR} we can take an Erd{\H o}sR\'enyi graph and apply our reduction to get an explicit average-case distribution over $K$-OV, $K$-XOR, and $K$-SUM. 

    We let $n = N^{1/2}$.
    We also define $k(k-1) = 2K$, so $k \geq \sqrt{2K}-1$. 
    Now we can state the lower bound of $n^{\omega k/3 -o(1)}$  in terms of $K$ and $N$ as $N^{\omega(\sqrt{2K}+1)/6-o(1)}$. The rest is similar.
\end{proof}


\section{Discussion and Future Work}
\label{sec:Discuss}
In Section \ref{sec:betweenProblem-bg} we discuss various reductions between factored problems. A sufficiently fast reductions and the framework in Section \ref{sec:WCACxor} would imply better lower bounds for $\oplus k$-XOR and $\oplus k$-SUM. For example, a $\oplus (OV,n,k,b,g) \rightarrow (XOR, poly(n), \Theta(k), \Theta(b), \Theta(g))$ reduction would imply an average-case lower bound for $\oplus k$-XOR of $n^{\Theta(k)}$ from SETH. 

In Theorem \ref{thm:BetterFrameworkLargeFeild} we pick each bit of $\vec{v}$ iid as $\{0,1\}$ each with probability $1/2$. However, we can instead have zero sampled with probability $\mu$ and one sampled with probability $1-\mu$.

One can go further with reductions. Notably, we don't need to start and end with factored problems. For example, if one can reduce from worst-case OV (not the factored version) to a small number of instances of the \ckxor[]~problem each with $g$ sets of size $b$ and $gb = O(k\lg(n))$ then this reduction would imply an average-case lower bound for $\oplus k$-XOR of $n^{\Theta(k)}$ from SETH. 

It would be interesting to show `Average-case Counting rETH' is implied by rETH. That is, find a distribution $D$ that can be efficiently sampled where counting $3$-SAT requires $2^{\Theta(n)}$. Note our current results have clause sizes of $\omega(1)$. This result should also imply a hardness of $n^{\Theta(k)}$ for counting $k$-SUM via Patrascu and Williams \cite{PatrascuW10}.

Finally, we would ideally like to give worst-case to average-case reduction from $\oplus k$-SUM (and OV and XOR) back to itself that are tight. Our current lower bounds are of the from $n^{\Omega(\sqrt{k})}$. However, we can hope for lower bounds of $n^{\Omega(k)}$ or $n^{k-o(1)}$ on some efficient to sample distribution. Notably, for both $k$-OV and $k$-SUM their uniform distributions are hypothesized to be hard on average. Could we perhaps find low degree polynomials for these problems? This approach can not work for $k$-OV if SETH is true \cite{factoredProblems}. So we can concentrate to low degree polynomials that rely on some structure of $k$-SUM and $k$-XOR that does not exist for $k$-OV.


\bibliographystyle{alpha}
\bibliography{my}

\end{document}